\title{Non-Blind Strategies\\ in Timed Network Congestion Games\thanks{This work was partially funded by ANR project Ticktac (ANR-18-CE40-0015).}}
\author{Aline Goeminne \and Nicolas Markey \and Ocan Sankur
\\
\texttt{firstname.lastname@irisa.fr}
}
\authorrunning{Aline Goeminne \and Nicolas Markey \and Ocan Sankur}
\institute{Univ Rennes, Inria, CNRS, IRISA -- Rennes, France}
\begin{document}
\maketitle

\begin{abstract}
  \looseness=-1
Network congestion games are a convenient model for reasoning
about routing problems in a network: agents have to move from
a source to a target vertex while avoiding congestion, measured as a
cost depending on the number of players using the same link. Network
congestion games have been extensively studied over the last 40 years,
while their extension with timing constraints were considered more
recently.

Most of the results on network congestion games consider \emph{blind}
strategies: they are static, and do not adapt to the strategies
selected by the other players. We extend the recent results of
[Bertrand \emph{et~al.}, Dynamic network congestion games. FSTTCS'20]
to timed network congestion games, in which the availability of the
edges depend on (discrete) time. We~prove that computing Nash
equilibria satisfying some constraint on the total cost (and in
particular, computing the best and worst Nash equilibria),
and computing the social optimum,  can be
achieved in exponential space.
The~social optimum can be computed in polynomial space if all players have the same source and target.

\end{abstract}

\section{Introduction}
\paragraph{Network congestion games} allow one to model situations in which agents compete for resources
such as routes or bandwidth~\cite{Rosenthal1973}, e.g. in communication networks \cite{altman2009potential,QYZS-ieeeTN06}.
In these games, the objective of each agent is to go from a source to a target vertex, traversing a number of edges that represent resources.
The cost incurred by the player for the use of each resource is a function of the \emph{load}, that is, the number of agents using the same resource.
One of the fundamental notions studied in these games is that of \emph{Nash equilibria} which is used to model stable situations. 
A strategy profile is a Nash equilibrium if none of the players can reduce their cost by unilaterally changing their strategy.

It is well-known that these games can be \emph{inefficient} in the sense that there are Nash equilibria whose social cost (\IE, the sum of the agents' costs)
is bounded away from the optimum that can be achieved by arbitrary profiles (that may not be Nash equilibria).
Research has been focused on proving bounds on this inefficiency, formalized by the \emph{price of anarchy (PoA)}~\cite{KP-csr2009}.
A~tight bound of~$\frac{5}{2}$ on the price of anarchy was given in~\cite{awerbuch-stoc2005,Christodoulou2005b}.
The \emph{price of stability (PoS)} is dual to PoA: it is the ratio between the cost of the best Nash equilibrium and the social cost,
and was introduced in~\cite{Anshelevich2004}.
Bounds on PoA and PoS have been studied for restricted classes of graphs or types of cost functions~\cite{NisaRougTardVazi07}.

\paragraph{Timed network games.} Extensions of these games with real-time constraints have been
considered.  In~the setting of~\cite{hmrt-tcs11}, each edge is
traversed with a fixed duration, independent of its load, while the
cost is still a function of the load at the traversal time.  The~model
thus has \emph{time-dependent} costs since the load depends on the
times at which players traverse a given edge. The existence of Nash
equilibria is proven by reduction to~\cite{Rosenthal1973}.  The~work
in~\cite{AGK-mfcs17} propose another real-time extension, in which
transitions are instantaneous, and can only be taken during some
intervals of time. Time elapses in vertices, which incurs a cost that
is proportional to the load and the delay. In~those works, time is
continuous; \emph{boundary strategies} are strategies that take
transitions at dates that are boundaries of the constraining
intervals. It~is shown that \emph{boundary Nash equilibria} always
exist, but need not capture optimal Nash equilibria. The~prices of
anarchy and stability are shown to be bounded by $5/2$ and~$1+\sqrt 3/3$,
respectively, and computing \emph{some} Nash equilibrium
is \PLS-complete. This study was further extended to richer timing
constraints (involving clocks) in~\cite{AGK-mfcs18}.

\paragraph{Non-blind strategies.} In all the works mentioned above, the considered strategies are \emph{blind}: each player chooses a path to follow
at the beginning of the game, and cannot adapt their behaviors to their observations during the game.
\emph{Non-blind} strategies which allow players to choose their next moves depending on the whole history
were studied in~\cite{Bertrand20}. The advantage of non-blind players it that these allow one to obtain Nash equilibria
that have a lower social cost than with blind profiles. Thus, in general, the price of anarchy is lower with
non-blind strategies. In~\cite{Bertrand20}, the~existence of Nash equilibria was established for these strategies, 
and an algorithm was given to decide the existence of Nash equilibria satisfying given constraints on the costs of the players.

\paragraph{Our contributions.} We pursue the study of the real-time extensions of network congestion games
by considering non-blind strategies.
We~consider timed network congestion games similar to~\cite{AGK-mfcs17} (albeit with a discrete-time semantics), in~which the edges in the network
are guarded by a time interval which defines the time points at which the edges can be traversed.
Moreover, each vertex is endowed with a cost function depending on the load, and the players incur
a cost for each time unit spent on the vertex.
We consider both the \emph{symmetric} case in which all players source and target vertex pairs are identical,
and the \emph{asymmetric} case where these pairs vary.
We~formally define the semantics of our setting with non-blind strategies,
show how to compute the social optimum in \PSPACE in the symmetric case and in \EXPSPACE in the asymmetric case. %
Moreover, we~give an algorithm to decide the existence of Nash equilibria satisfying a given set of cost constraints 
in \EXPSPACE, for both the symmetric and asymmetric cases.
We can then compute the prices of anarchy and  of stability in \EXPSPACE.

\paragraph{Related Works.}
The existence of Nash equilibria in all atomic congestion games is proven using potential games.
The notion of potential was generalized by Monderer and Shapley~\cite{MS-geb96}
who showed how to iteratively use best-response computation to obtain a Nash equilibrium.
We refer the interested reader to~\cite{roughgarden-chap2007}
for an introduction and main results on the subject.

Timing constraints are also considered
in~\cite{ppt-mor2009,kp-icalp12} where travel times also depend on the
load. Other works focus on flow models with a timing
aspect~\cite{koch2011nash,bfa-geb2015}.

\section{Preliminaries}\label{sec-prelim}

\subsection{Timed network congestion game.}
\paragraph{Timed network.}
A~\emph{timed network} $\timedNet$ is a tuple $(\Vertex,\Edge,\weight,\guard)$
where
$\Vertex$ is a set of vertices,
$\Edge\subseteq V\times V$ is a set of edges,
$\weight\colon \Vertex \to (\bbN*\to \bbN*)$ associates with each vertex
  a non-decreasing weight function, and 
$\guard\colon \Edge \to \calP(\bbN)$ associates with each edge the
  dates at which that edge is available. We~require in the sequel that
for all $\edge\in \Edge$, $\guard(\edge)$~is a finite union of disjoint
intervals with bounds in~$\bbN\cup\{+\infty\}$.
A~guard~$\guard(\edge)$ is said to be satisfied at date~$\dat$
if $\dat\in\guard(\edge)$.

A~\emph{trajectory} in a timed network
$\timedNet=(\Vertex,\Edge,\weight,\guard)$ is a (finite or infinite) sequence
$(\vertex_j, \dat_j)_{0\leq j\leq l}$ such that for all~$0\leq j < l$,
$(\vertex_j,\vertex_{j+1})\in E$ and
$\dat_{j+1}\in\guard(\vertex_j,\vertex_{j+1})$.  We~write
$\Traj_{\timedNet}(\vertex,\dat,\vertex')$ for the set of trajectories with
$\vertex_0=v$, $\dat_0=\dat$, and $\vertex_l=v'$, possibly omitting to
mention~$\timedNet$ when no ambiguity arises.

\paragraph{Timed network game.}
A \emph{timed network congestion game} (a.k.a.~\emph{timed network
  game}, or~TNG for~short) is a model to represent and reason
about the congestion caused by the simultaneous use of some resources
by several users. 

\begin{definition}%
A \emph{timed network game} $\timedNetGame$ is a tuple
$(\nbp, \timedNet, (\src_i,\tgt_i)_{1\leq i\leq \nbp})$ where
\begin{itemize}
\item $\nbp$ is the number of players (in~binary). We~write $\Players$ for the set of players;
\item $\timedNet = (\Vertex,\Edge,\weight,\guard)$ is a timed network;
\item for each $1\leq i\leq \nbp$, the pair $(\src_i,\tgt_i)\in \Vertex^2$ is the
  objective of Player~$i$.
\end{itemize}
\end{definition}
\noindent
Symmetric TNGs are TNGs in which all players have the same
objective~$(\src,\tgt)$, i.e., $\src_i=\src$ and $\tgt_i=\tgt$ for all $i \in \Players$.

\begin{remark}\label{remark-encoding}
Several encodings can be used for the description
of~$(\src_i,\tgt_i)_{1\leq i\leq \nbp}$, which may impact the size of
the input: for symmetric TNGs, the players' objectives
would naturally be given as a single pair~$(\src,\tgt)$.
For the asymmetric case, the players' objectives
could be given explicitly as a list of $(\src,\tgt)$-pairs (the~size
is then linear in~$n$ and logarithmic in~$|\Vertex|$), or~as a
function ${\Vertex^2\to \bbN}$ (size~quadratic in~$|\Vertex|$ and
logarithmic in~$\nbp$).  Usually, the number of players is large (and
can be seen as a parameter), and the size of the input is at least
linear in~$|\Vertex|$, so that the latter representation is preferred.
\end{remark}

A \emph{configuration} of a TNG~$\timedNetGame$ is a mapping
${\config\colon \Players \to \Vertex}$ that maps each player to some vertex.
We~write $\Config$ for the set of all configurations of~$\timedNetGame$.
There are two particular configurations: the~\emph{initial
  configuration}~$\init$, in which all players are in their
source vertices,
\IE $\init\colon i \mapsto \src_i$;
and the final configuration $\final$ in which all
players are in their target vertices, \IE $\final\colon i \mapsto \tgt_i$.
A~timed configuration of~$\timedNetGame$ is a
pair~$(\config,\dat)$ where $\config$~is a configuration and
$\dat\in\bbN$~is the current time.
We~write
$\TConfig$ for the set of all timed configurations of~$\timedNetGame$,
and $\start$ for the starting timed configuration~$(\init,0)$.

\begin{remark}\label{rk-winningtraj}
We assume in the sequel that for each $0\leq i\leq \nbp$, there is a
trajectory from $(\src_i,0)$ to the target vertex $\tgt_i$;
this can be checked in polynomial time by exploring~$\timedNetGame$.
We also require that for each vertex~$\vertex\in\Vertex$ and each date~$\dat$, 
there exist a vertex~$\vertex'$
and a time~$\dat'>\dat$ such that $(\vertex,\vertex')\in\Edge$ and
$\dat'\in\guard(\vertex,\vertex')$. 
\end{remark}

\begin{example}
\label{ex:runEx}
 \begin{figure}[t]
\centering
\begin{tikzpicture}[scale=.9]
  \begin{scope}[minimum height=8mm]
  \node[draw,rounded corners = 5pt] (src) at (0,0){\nodeCong{1.5cm}{\src}{$x \mapsto 5x$}};
	\node[draw,rounded corners = 5pt] (s1) at (3,1.5){\nodeCong{1.5cm}{$s_1$}{$x \mapsto x$}};
	\node[draw,rounded corners = 5pt] (s2) at (6,1.5){\nodeCong{1.5cm}{$s_2$}{$x \mapsto 3x$}};
	\node[draw,rounded corners = 5pt] (tgt) at (9,0){\nodeCong{1.5cm}{\tgt}{$x \mapsto 1$}};
	\node[draw,rounded corners = 5pt] (s3) at (4.5,0){\nodeCong{2cm}{$s_3$}{$x \mapsto 10x+6$}};
	\node[draw,rounded corners = 5pt] (s4) at (3,-1.5){\nodeCong{1.5cm}{$s_4$}{$x \mapsto x$}};
	\node[draw,rounded corners = 5pt] (s5) at (6,-1.5){\nodeCong{1.5cm}{$s_5$}{$x \mapsto 3x$}};
  \end{scope}
	\draw[->,thick] (src) to[bend left] node[above]{$[2,3]$} (s1);
	\draw[->,thick] (s1) to node[above]{$[4]$} (s2);
	\draw[->,thick] (s2) to[bend left] node[above]{$[5]$} (tgt);
	\draw[->,thick] (src) to node[anchor=center,fill=white,pos=0.5]{$[1,2]$} (s3);
	\draw[->,thick] (s3) to node[anchor=center,fill=white,pos=0.5]{$[2,3]$} (tgt);
	\draw[->,thick] (src) to[bend right] node[below]{$[2,3]$} (s4);
	\draw[->,thick] (s4) to node[below]{$[4]$} (s5);
	\draw[->,thick] (s5) to[bend right] node[below]{$[5]$} (tgt);
	\draw[->,thick] (src) to node[anchor=center,fill=white,pos=0.5]{$[4]$} (s2);
	\draw[->,thick] (src) to node[anchor=center,fill=white,pos=0.5]{$[4]$} (s5);
	\draw[->,thick] (tgt) to[ loop right] (tgt);	
	\draw[->,thick] (src) to [loop left] node{} (src);
	
\end{tikzpicture}
\caption{Example of a timed network game}
\label{fig:runningEx}
\end{figure}
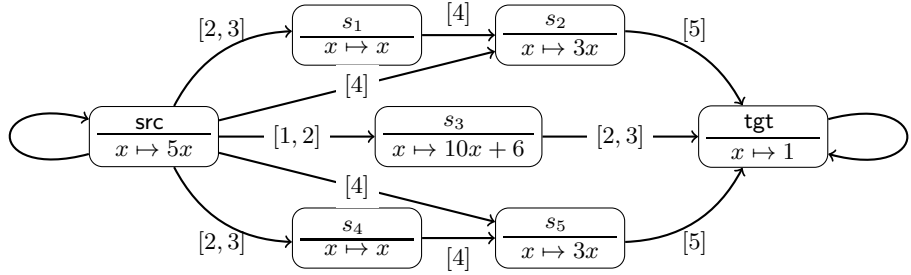

Figure~\ref{fig:runningEx} represents a timed network~$\timedNet$
(that we will use throughout this section to illustrate our
formalism). 
The~guards label the corresponding edges with two conventions: guards of the form $[d,d]$ for $d \in \N$ are denoted by $[d]$ and edges without label have $[0,+\infty)$ as a guard,  \emph{e.g.,} $\guard(\src,s_4) = [2,3]$ and $\guard(\src,\src) = [0+\infty)$.

The weight function $\wgt$ is defined inside the corresponding
vertex, \emph{e.g.,} $\wgt(\src)\colon x \mapsto 5x$ and
$\wgt(\tgt)\colon x \mapsto 1$.
As~explained in the sequel, this function indicates the cost of
spending one time unit in each vertex, depending on the number of
players. For instance, if~two players spend one time unit in~$\src$,
they both pay~$10$.
Notice that to comply with Remark~\ref{rk-winningtraj}, extra edges with guard $[6,+\infty)$ are added from each vertex $s_k$ ($1 \leq k \leq 5$) to an additional sink vertex (omitted in the figure for clarity).
\end{example}

\paragraph{Concurrent game associated with a timed network game.}
Consider a TNG $\timedNetGame=(\nbp, \timedNet,\penalty1000\relax (\src_i,\tgt_i)_{1\leq
  i\leq \nbp})$, with
  $\timedNet=(V,E,\wgt,\guard)$.  The~semantics runs
  intuitively as follows: a~timed configuration represents the positions of
  the players in the network at a given date. At~each round, each player selects
  the transition they want to take in~$\timedNet$,
  together with the date at which
  they want to take~it. All~players selecting the earliest date will
  follow the edges they selected, giving rise to a new timed
  configuration, from where the next round will take place.

We~express this semantics as an $n$-player infinite-state weighted concurrent
game~$\CGame=(\States, \Actions, (\Allowed_i)_{1\leq i\leq \nbp},
\Update,(\cost_i)_{1\leq i\leq \nbp})$~\cite{AHK02}.
The~set~$\States$ of states of~$\CGame$ is the
set~$\TConfig$ of timed configurations.  The~set~$\Actions$ of actions
is the set $\{ (\vertex,\dat) \mid \vertex\in\Vertex, \dat \in \bbN+ \}$.
Functions $\Allowed_i$ return the set of allowed actions for Player~$i$
from each configuration: for a timed configuration~$(\config,\dat)$,
we~have $(\vertex,\dat')\in\Allowed_i(\config,\dat)$
whenever $\dat'>\dat$ and
$(\config(i),\vertex)\in E$ and $\dat'\in\guard(\config(i),\vertex)$.
Notice that by Remark~\ref{rk-winningtraj},
$\Allowed_i(\config,\dat)$ is never empty.
An~\emph{action vector} in~$\CGame$ is a vector $\vec a = (a_i)_{1\leq
  i\leq \nbp}$ of actions, one for each player.
We~write $\Mov$ for the set of action vectors. An~action vector~$\vec
a$ is valid from state~$(\config,\dat)$ if for all~$1\leq i\leq \nbp$,
action~$a_i\in\Allowed_i(\config,\dat)$.  We~write
$\Valid(\config,\dat)$ for the set of valid action vectors from
state~$(\config,\dat)$.

From a state~$(\config,\dat)$, a~valid action vector $\vec
  a=(\vertex_i,\dat_i)_{1\leq i\leq\nbp}$ leads to a new
  state~$(\config',\dat')$ where $\dat'=\min\{\dat_i\mid 1\leq
  i\leq \nbp\}$ and for all $1\leq i\leq
\nbp$, $\config'(i)=\vertex_i$ if $\dat_i=\dat'$, and
$\config'(i)=\config(i)$ otherwise.
This~is encoded in the update
function~$\Update$ by letting $\Update((\config,\dat),\vec
a)=(\config',\dat')$.
This~models the fact that the
players proposing the shortest delay apply their action once this
shortest delay has elapsed.
We~write $\Select(\vec a) = \{1\leq i\leq\nbp\mid \forall 1\leq j\leq \nbp.
\ \dat_i\leq \dat_j\}$.

We let $\Trans = \{ (s,\vec a,s') \in \States \times \Mov \times \States \mid  \vec a \in \Valid(s),\,
    {\Update(s,\vec a) = s'} \}$ be the set of \emph{transitions} of~$\CGame$.
    We~write~$G$ for the graph structure~$(\States,\Trans)$.

\looseness=-1
The delays elapsed in the vertices of~$\timedNetGame$
come with a cost for each player, computed as follows:
given a configuration~$\config$, we~write $\load_{\config}\colon
\Vertex\to\bbN$ for the load of each vertex of the timed network,
i.e., the number of players sitting in that vertex:
$\load_{\config}(\vertex)= \#\{1\leq i\leq \nbp\mid
\config(i)=\vertex\}$. For each~$1\leq i\leq \nbp$, the~cost for
Player~$i$ of taking a transition~$t=((\config,\dat),\vec a,(\config',\dat'))$,
moving from timed configuration~$(\config,\dat)$ to
timed configuration~$(\config',\dat')$, then is
$\cost_i(t) =
{(\dat'-\dat)}\cdot \weight(\config(i))(\load_{\config}(\config(i)))$:
it~is proportional to the time elapsed and to the load of the vertex
where Player~$i$ is waiting. Notice that it does not depend on~$\vec
a$ (nor~on~$\config'$).

\begin{remark}
With our definition, players remain in the game even after they
have reached their targets. This may be undesirable in some
situations. One~way of avoiding this is to add another limitation on
the availability of transitions to players: transitions would be
allowed only to some players, and in particular, from their target
states, players would only be allowed to go to a sink state.
All~our results can be adapted to this setting.
\end{remark}

\begin{example}\label{ex-TNCG}
We illustrate those different notions on a two-player symmetric timed network
game $\timedNetGame=(2,\timedNet,(\src,\tgt))$,
where $\timedNet$ is the timed network of Example~\ref{ex:runEx}.

Let $(\config,\dat)=((\src,s_1),2)$ be the timed configuration in~$\TConfig$, with $\config(1)=\src$, $\config(2)= s_1$ and $\dat = 2$. From this timed configuration, the set of allowed actions for Player~$1$ is $\Allowed_1(\config,\dat) = \{ (s_1,3),(s_4,3),(s_2,4),(s_5,4),(\src,d) \mid d \geq 3\}$. If we consider the valid action vector $\vec a = ((s_1,3),(s_2,4))$, then $\Update((\config,\dat), \vec a ) = (\config',\dat')$ with $ \config' = (s_1,s_1)$ and $\dat'=3$. Indeed even if both players choose their actions simultaneously, we~have $\Select(\vec a) = \{ 1 \}$, and Player~$1$ is the only player who moves at time~$3$.
Moreover, writing $t = ((\config,\dat),\vec a, (\config',\dat'))$, we~have $\cost_1(t) = 5$, because $\load_{(\config,\dat)}(\src) = 1$. Finally, let us consider another example of cost computation: let $(\config,\dat)=((s_3,s_3),1)$ and $(\config',\dat')=((\tgt,\tgt),3)$, and $\vec a=((\tgt,3),(\tgt,3))$; we have that $\cost_1((\config,d),\vec a,(\config',d'))= 2 \cdot (10 \cdot 2 + 6) = 52$.
\end{example}

\paragraph{Plays and histories.}
A~play~$\rho=(s_k,\vec a_k, s'_k)_{k\in\bbN} \in \Trans^\omega$
in $\CGame$ (also called play in~$\timedNetGame$) is
an infinite sequence of transitions such that
for all $k \in \N$, $s'_k  = s_{k+1}$.
We~denote by~$\Plays$ the set of plays of~$\CGame$.
We~denote by
$\Plays(s)$ the set of plays that start in state $s \in
\States$, i.e. $\Plays(s) = \{ (s_k,\vec a_k,s'_k)_{k\geq 0} \in
\Plays \mid s_0 = s \}$.

A~history is a finite prefix of a play.  We~denote the set of
histories by $\Hist$, and the subset of histories starting in a given
state~$s$ by $\Hist(s)$.

\looseness=-1
Given a play $\rho = (s_k,\vec a_k,s'_k)_{k\geq 0} \in \Plays$ and an
integer~$j\in\bbN$, we~write~$\rho_j$
for the timed configuration~$s_j$, 
$\rho_{\geq j}$ for the suffix
  $(s_{j+k}, \vec a_{j+k}, s'_{j+k})_{k\geq 0}\in\Plays(s_j)$,
and $\rho_{<j}$ for the history $(s_{k}, \vec a_{k}, s'_k)_{k<j}\in \Hist(s_0)$.
For a history~$h=(s_k,\vec a_k,s'_k)_{0\leq k<j}$ in~$\Hist(s_0)$,
we~write $\last(h)=s'_{j-1}$ when~$j>0$ (we~may also write it as $s_j$ when no ambiguity arises), and $\last(h)=s_0$ otherwise.

\paragraph{Cost functions.}
For each player $i \in \Players$, we~define a cost function
$\cost_i\colon \Plays \to \N \cup \{+\infty \}$ such that for all
$\rho = (s_k,\vec a_k,s'_k)_{k\in\bbN} \in \Plays$,
\[
\cost_i(\rho) =
\begin{cases}
 +\infty & \text{ if $s_{k}(i)\not=\tgt_i$ for all~$k\in\bbN$} \\
\sum_{k=0}^{\ell-1} \cost_i(s_k,\vec a_k,s'_k)
 & \text{ if $\ell$ is the least index such that } s_{\ell}(i)=\tgt_i
\end{cases}
\]
This function is extended to histories in the natural way.
For all $\rho \in \Plays\cup\Hist$, the~vector
$\cost(\rho) = {(\cost_i(\rho))}_{1\leq i\leq\nbp}$
is the \emph{cost profile} of~$\rho$.

\paragraph{Strategies.} Given a concurrent game $\CGame$, a~state~$s$
in~$\CGame$, and $1\leq i\leq \nbp$,
a~\emph{strategy} for Player~$i$ from~$s$ is a function
$\sigma_i \colon \Hist(s) \to \Actions$ such that $\sigma_i(h)
\in \Allowed_i(\last(h))$ for all $h \in \Hist(s)$.
We~denote by~$\Sigma_i(s)$ the set of strategies of Player~$i$
from~$s$ (we~may omit to mention~$s$ when it is the initial timed
configuration~$\start$).
Given a state~$s$, a~subset~$I\subseteq \Players$ of players,
and strategies~$(\sigma_i)_{i\in I}$ from~$s$ for those players, a~play
$\rho=(s_k,\vec a_k,s'_k)_{k\in\N}$ from~$s$ is consistent with
$(\sigma_i)_{i\in I}$ if for all~$k\in\bbN$ and all $i\in I$, it~holds
$\vec a_{k,i}=\sigma_i(\rho_{<k})$.

A~\emph{strategy profile} $\sigma = {(\sigma_i)}_{1\leq i\leq \nbp}$ from~$s$ is
a tuple of strategies from~$s$, one for each player.
We~write $\Sigma(s)$ for the set of strategy profiles from~$s$.
Given a strategy
profile~$\sigma$ from~$s$, there is a unique play~$\rho$ from~$s$
that is \emph{consistent} with that strategy profile.
This~play is denoted by~$\InitOutcome{\sigma}{s}$ and is
called the \emph{outcome} of the strategy profile~$\sigma$ from~$s$.

Given a strategy profile~$\sigma$ from~$s$, a~player~$i\in\Players$ and
a strategy~$\sigma'_i$ from~$s$ for Player~$i$,
we~write $(\sigma_{-i},\sigma'_i)$ for the strategy profile
$(\tau_j)_{1\leq j\leq \nbp}$ for which $\tau_j=\sigma_j$ when $j\not=i$ and
$\tau_i=\sigma'_i$.
Given a player~$1\leq i\leq \nbp$
and a strategy~$\sigma_i$ from~$s$ for Player~$i$, for all $h \in
\Hist(s)$, we~denote by~$\sigma_{i \restriction h}$ the strategy
of Player~$i$ from~$\last(h)$
such that $\sigma_{i \restriction h} \colon h'\in
\Hist(\last(h)) \mapsto \sigma_i(hh')$.
This is extended to strategy profiles in the natural way.

\paragraph{Blind strategies.}
An~important class of strategies is the class of \emph{blind strategies}.
Intuitively, a~blind strategy follows a single trajectory,
without looking at how the other players are moving. In~order to
define blind strategies, we~first introduce a notion of \emph{projection}
of plays on the actions of each player.

For any play $\rho = {(s_k,\vec a_k,s'_k)}_{k\in\bbN}$ from~$\start$
and any $1\leq i\leq\nbp$, writing $\vec a_k=(\vertex_{k,i},\dat_{k,i})_{1\leq
i\leq \nbp}$ for all~$k\in\bbN$, we~define
$\mu_i\colon \bbN \mapsto \bbN\cup\{+\infty\}$
inductively as $\mu_i(0)=-1$ and, for
all~$j$ for which $\mu_i(j)<+\infty$,
\[
\mu_i(j+1) = \inf\{ k>\mu_i(j) \mid \dat_{k,i} = \min\{\dat_{k,l} \mid
  1\leq l\leq \nbp\}\}.
\]
In~other terms, $\mu_i$ returns the indices where Player~$i$ proposed
a minimal delay, and thus could move to the vertex they proposed.
The~trajectory of Player~$i$ along~$\rho$, denoted with
$\traj_i(\rho)$, is then defined as the trajectory
$(\vertex_{\mu_i(j),i}, \dat_{\mu_i(j),i})_{j\geq 0}$, with the convention that
$(\vertex_{-1,i}, \dat_{-1,i}) = (\src_i,0)$.
Notice that this trajectory could be finite.
Functions~$\traj_i$ are extended to histories in the natural way.

\begin{definition}
A strategy~$\sigma_i$ for Player~$i$ is \emph{blind} if, for any two
histories~$h$ and~$h'$ such that $\traj_i(h)=\traj_i(h')$, it~holds
$\sigma_i(h)=\sigma_i(h')$.
\end{definition}

\noindent\looseness=-1
As the next lemma suggests, playing a blind strategy amounts to
following a fixed trajectory in the timed network, independently of
the actions of the other players.

\begin{restatable}{lemma}{lemmablindpath}\label{lemma-blindpath}
Let $i \in \Players$ be a player and $\sigma_i$ be a
        strategy of Player~$i$ from some state~$s_0$.
        If~$\sigma_i$ is blind,
        then for all strategies $\sigma_{-i}$ and $\sigma'_{-i}$ from~$s_0$,
        we~have
	\[
        \traj_i(\Outcome{\sigma_{-i},\sigma_i}_{s_0}) =
          \traj_i(\Outcome{\sigma'_{-i},\sigma_i}_{s_0}).
        \]
\end{restatable}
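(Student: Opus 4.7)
My plan is to prove by induction on $j \geq 0$ that, writing $\rho = \Outcome{\sigma_{-i},\sigma_i}_{s_0}$ and $\rho' = \Outcome{\sigma'_{-i},\sigma_i}_{s_0}$, the $j$-th entries of $\traj_i(\rho)$ and $\traj_i(\rho')$ both exist and coincide. The base case $j=0$ is immediate from the convention that each trajectory begins with $(\src_i,0)$.

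For the inductive step, assume the first $j+1$ entries of the two trajectories agree. Let $\mu_i^\rho$ and $\mu_i^{\rho'}$ denote the move-index functions associated with $\rho$ and $\rho'$. For any step $k$ of $\rho$ strictly greater than $\mu_i^\rho(j)$ at which Player~$i$ has not yet been re-selected, the definition of $\traj_i$ gives that the trajectory extracted from $\rho_{<k}$ equals the length-$(j+1)$ prefix shared by $\traj_i(\rho)$ and $\traj_i(\rho')$ under the IH. Since $\sigma_i$ is blind, $\sigma_i(\rho_{<k})$ is then a single fixed action $(v^*,\dat^*)$, independent of $k$ and equal to the action proposed by $\sigma_i$ at the corresponding intermediate histories of $\rho'$. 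Thus, until the next step at which Player~$i$ is selected, Player~$i$ proposes exactly $(v^*,\dat^*)$ in both plays.

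It remains to argue that this next selection does occur in both plays and produces the same trajectory point. Since $\sigma_i$ is a valid strategy, $(v^*,\dat^*) \in \Allowed_i(\last(\rho_{<k}))$ at each intermediate step, forcing the current date to be strictly less than $\dat^*$. On the other hand, the minimum of all players' proposed dates at such a step is at most $\dat^*$, whereas the current date strictly increases within $\bbN$ at every transition. Hence after finitely many steps the current date reaches $\dat^*$, at which point Player~$i$'s proposal attains the minimum, so Player~$i$ is selected and moves to $v^*$. The identical reasoning applied to $\rho'$ yields the same $(j+1)$-st point $(v^*,\dat^*)$, closing the induction. The main delicate point is precisely this combination of blindness with the IH to lock Player~$i$'s proposal to a common action throughout the waiting period on both sides; the subsequent arrival of the next selection is a short monotonicity observation on the current date.
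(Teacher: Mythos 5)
Your proof is correct and follows essentially the same route as the paper's: both hinge on the fact that, up to the first point where the trajectories could diverge, the projected histories coincide, so blindness forces Player~$i$ to propose one and the same action in both plays, and the minimal-delay semantics then yields the same next trajectory point. Your inductive phrasing additionally spells out, via the strict increase of the current date while it stays below the date proposed by $\sigma_i$, why the next selection of Player~$i$ actually occurs in both plays --- a point the paper's proof-by-contradiction leaves implicit.
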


In view of this lemma, any blind strategy $\sigma_i$ for Player~$i$ from some state~$s_0$
can be associated with a \emph{trajectory}, denoted with $\langle
\sigma_i \rangle_{s_0}$, and defined as
\(
\langle \sigma_i \rangle_{s_0} =
\traj_i(\Outcome{\sigma_{-i},\sigma_i}_{s_0})
\)
(for any $\sigma_{-i} \in \Sigma_{-i}(s_0)$.)
Conversely, 
for any trajectory~$\pi$ from $(v_0,d_0)$, any player~$i\in\Players$,
and any timed configuration $s_0=(\config,\dat_0)$ with $\config(i)=v_0$,
there exists a blind strategy~$\sigma_i$ for Player~$i$ from~$s_0$ whose
associated trajectory is~$\pi$.

A play $\rho=(s_k,\vec a_k,s'_k)_{k\in\bbN}$ from the initial timed
configuration~$\start$ is said \emph{winning} for Player~$i$ if there exists $k\in\bbN$ such that $s_k = (\config_k,\dat_k)$ with $\config_k(i) = \tgt_i$.
A~strategy~$\sigma_i$ for
Player~$i$ from~$\start$ is winning if any play
consistent with that strategy is winning for Player~$i$.  For
each~$1\leq i\leq \nbp$, as we assumed
$\Traj(\src_i,0,\tgt_i)\not=\varnothing$, we~get that there exists a
winning (blind) strategy~$\sigma_i$ for Player~$i$.

\begin{example}
We consider the two-player symmetric TNG described in Example~\ref{ex-TNCG}. The infinite sequence of transitions 
$ \rho\colon ((\src,\src),0) \xrightarrow{\left[\begin{smallmatrix}(s_1,2)\\(s_2,4)\end{smallmatrix}\right]} ((s_1,\src),2) \xrightarrow{\left[\begin{smallmatrix}(s_2,4)\\(s_2,4)\end{smallmatrix}\right]} ((s_2,s_2),4) \biggl( \xrightarrow{\left[\begin{smallmatrix}(\tgt,5+k)\\(\tgt,5+k)\end{smallmatrix}\right]} ((\tgt,\tgt),5+k) \biggr)_{k \geq 0}$ is a play in $\timedNetGame$ such that $\cost_1(\rho) = 2\cdot (5\cdot 2) + 2\cdot (1\cdot 1) + 1\cdot (3\cdot 2) = 28$ and $\cost_2(\rho)= 2 \cdot (5\cdot 2) + 2\cdot(5\cdot 1)+ 1\cdot(3\cdot 2) = 36$. 

Let us now consider two trajectories $\pi_1\colon (\src,0)(s_1,2)(s_2,4)(\tgt,5+k)_{k\geq 0}$ and $\pi_2 \colon (\src,0)(s_2,4)(\tgt,5+k)_{k\geq 0}$ together with $\sigma_1$, a blind strategy of Player~$1$, and $\sigma_2$, a blind strategy of Player~$2$, such that $\Outcome{\sigma_1}_{\start} = \pi_1$ and $\Outcome{\sigma_2}_{\start} = \pi_2$. The~outcome of the strategy profile $(\sigma_1,\sigma_2)$ from $\start$ is $\rho$, \emph{i.e.,} $\Outcome{\sigma_1,\sigma_2}_{\start} = \rho$.
\end{example}

\paragraph{Social optima and Nash equilibria.}
Let $\CGame$ be a concurrent game.
The \emph{social welfare} of a play~$\rho$ in~$\CGame$ is the sum of
the costs of the players along~$\rho$: $\SW(\rho) = \sum_{1\leq
i\leq \nbp} \cost_i(\rho)$.  The social welfare of a strategy profile
is the social welfare of its outcome.

A strategy profile~$\sigma$ from the starting configuration~$\start$
is a social optimum~(SO) whenever
\( \SW(\Outcome{\sigma}_{\start}) = 
\inf_{\sigma'\in\Sigma(\start)} \SW(\Outcome{\sigma'}_{\start}) 
\).
When no ambiguity arises, social optimum may also refer to the social
welfare of socially-optimal strategy profiles. Notice that since we
consider discrete time, this value is an integer, and a strategy
profile realizing the social optimum always exists.

A~strategy profile $\sigma$ from~$\start$
is a \emph{Nash equilibrium}~(NE) if for
all $1\leq i\leq\nbp$, for all strategies $\sigma'_i \in \Sigma_i(\start)$ of
Player~$i$,
$\cost_i(\InitOutcome{\sigma}{\start}) \leq
\cost_i(\InitOutcome{ \sigma_{-i}, \sigma'_{i}}{\start})$:
for~all~$1\leq i\leq\nbp$,
the~strategy~$\sigma_i$ of Player~$i$ is (one~of) the best
strategies against the strategies~$\sigma_{-i}$ of the other players.

\begin{example}
We show an example of a Nash equilibrium from $\start$ in the two-player symmetric TNG described in Example~\ref{ex-TNCG}. 
The~strategy of Player~$1$, denoted by $\sigma_1$, is given by a blind strategy associated with the trajectory $\pi_1= (\src,0)(s_3,1)(\tgt,2+k)_{k\geq 0}$.
The~strategy of Player~$2$, denoted by $\sigma_2$, is a  more involved. The~first action of Player~$2$ from the timed configuration $\start$ is $(\src,1)$, then at time~$1$ they observe if Player~$1$ has complied with their strategy, \emph{i.e.,} if Player~$1$ is in~$s_3$. If~so, Player~$2$ moves to~$s_4$ at time~$2$, to~$s_5$ at time~$4$, and ends in~$\tgt$ at time~$5$; otherwise, they wait in~$\src$ in order to observe the exact deviation of Player~$1$, and punish them adequately, by moving from $\src$ to~$s_5$ at time~$4$ if Player~$1$ is in~$s_4$ or to~$s_2$ otherwise, and ending in~$\tgt$ at time~$5$.
The~outcome $\Outcome{\sigma_1,\sigma_2}_{\start}$ of this strategy profile from $\start$ is:  %

\begin{multline*}
\rho \colon \start \xrightarrow{\left[\begin{smallmatrix}(s_3,1)\\
  (\src,1)\end{smallmatrix}\right]}((s_3,\src),1) \xrightarrow{\left[\begin{smallmatrix}(\tgt,2)\\(s_4,2)\end{smallmatrix}\right]}((\tgt,s_4),2) \xrightarrow{\left[\begin{smallmatrix}(\tgt,3)\\(s_5,4)\end{smallmatrix}\right]}((\tgt,s_4),3) \\
  \xrightarrow{\left[\begin{smallmatrix}(\tgt,4)\\(s_5,4)\end{smallmatrix}\right]}((\tgt,s_5),4) \biggl (\xrightarrow{\left[\begin{smallmatrix}(\tgt,5+k)\\(\tgt,5+k)\end{smallmatrix}\right]}((\tgt,\tgt),5+k) \biggr)_{k \geq 0}.
\end{multline*}

We have that $\Cost_1(\rho) = 26$ and $\cost_2(\rho) = 20$. In~particular, the social welfare of $\rho$ is $\SW(\rho) = 46$.

We can prove that $(\sigma_1,\sigma_2)$ is a Nash equilibrium. 
We only explain here the interest of the threat of punishment of Player~$2$ by considering the deviating strategy $\sigma'_1$ of Player~$1$ such that he moves from $\src$ to~$s_1$ at time~$2$, to~$s_2$ at time~$4$ and ends in~$\tgt$ at time~$5$.
The outcome $\Outcome{\sigma'_1,\sigma_2}_{\start}$ of this new strategy profile from $\start$ is 
\begin{multline*}\rho'\colon \start \xrightarrow{\left[\begin{smallmatrix}(s_1,2)\\(\src,1)\end{smallmatrix}\right]}((\src,\src),1) \xrightarrow{\left[\begin{smallmatrix}(s_1,2)\\(\src,2)\end{smallmatrix}\right]}((s_1,\src),2) \xrightarrow{\left[\begin{smallmatrix}(s_2,4)\\(s_2,4)\end{smallmatrix}\right]} 
\\ ((s_2,s_2),4) \biggl ( \xrightarrow{\left[\begin{smallmatrix}(\tgt,5+k)\\(\tgt,5+k)\end{smallmatrix}\right]}((\tgt,\tgt),5+k) \biggr)_{k \geq 0}.\end{multline*}
The new cost of Player~$1$ is $\cost_1(\rho') = 28$ proving that $\sigma'_1$ is not a provitable deviation for Player~$1$ w.r.t. $(\sigma_1,\sigma_2)$. 

Remark that if Player~$2$ does not apply this punishment and moves to $s_4$ at time $2$ whatever the behavior of Player~$1$, playing the deviating strategy $\sigma'_1$ would be a profitable deviation for Player~$1$.
\end{example}

\subsection{Studied problems}
\label{section:studiedProblems}

Given a timed network game~$\timedNetGame$, the~price of anarchy of $\timedNetGame$, denoted by~$\POA_{\timedNetGame}$, is the ratio between the worst social welfare of a Nash equilibrium and the social optimum.
Similarly, the \emph{price of stability} of~$\timedNetGame$, denoted by~$\POS_{\timedNetGame}$, is~the ratio between the best social welfare of a Nash equilibrium and the social optimum.
Those values measure the impact of playing selfishly.
In this paper, we address the following three problems:

\begin{problem}[Constrained social welfare]
\label{pb:constrainedSW}
Given a timed network game $\timedNetGame$ and a threshold $x \in \N$,
is the social optimum in~$\timedNetGame$ less than or equal to~$x$?
\end{problem}

\begin{problem}[Constrained existence of a Nash equilibrium]
\label{pb:constrainedNE}
Given a timed network game $\timedNetGame$ with $\nbp$ players and a
family of linear constraints~$(\phi_q)_{q}$ over $\nbp$ variables,
does there exist a Nash equilibrium~$\sigma$ from $\start$ such that
all constraints $\phi_q(\cost(\InitOutcome{\sigma}{\start}))$ are satisfied?
\end{problem}

\begin{problem}[Constrained price of anarchy and stability]
Given a timed network game $\timedNetGame$ and a threshold $x \in \bbQ$,
is the price of anarchy (resp. of stability)
in~$\timedNetGame$ less than or equal to~$x$?
\end{problem}

\begin{theorem}\label{thm-results}
  The constrained-social-welfare problem is in \PSPACE in the
  symmetric case, and in \EXPSPACE in the asymmetric setting.
  The constrained existence of a Nash equilibrium and constrained
  price of anarchy and stability are in \EXPSPACE in both the
  symmetric and asymmetric cases.
  
\end{theorem}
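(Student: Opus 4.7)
The plan is to reduce all three problems to optimisation and reachability in a finite-state product graph derived from~$\CGame$. The~first step is to argue that it suffices to look at plays whose time-stamps stay below a bound~$T$ polynomial in~$|\Vertex|$ and in the largest guard constant: beyond this bound, either all players have reached their target, or some player would strictly prefer to deviate to a shorter trajectory (one such trajectory exists by Remark~\ref{rk-winningtraj}), so no social minimiser or Nash equilibrium would produce such a play. Truncating to dates below~$T$ yields a finite graph. In~the symmetric case, since cost profiles and guards are invariant under permutations of identical players, configurations can be abstracted to load vectors in $\{0,\ldots,\nbp\}^\Vertex$ of polynomial representation size (exponentially many overall). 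In~the asymmetric case, one keeps the full configurations in~$\Vertex^\nbp$, of exponential representation size ($\nbp$~being encoded in binary), yielding doubly-exponentially many states.

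For the constrained-social-welfare problem, the~social cost of a play depends only on the load vectors traversed and the delays spent in them, so in the symmetric case it reduces to a constrained shortest-path problem in the load-vector graph. A~nondeterministic algorithm that guesses the play edge by edge while tracking the accumulated social welfare then uses polynomial space, hence \PSPACE by Savitch's theorem. The~same algorithm applied to the configuration graph in the asymmetric case uses exponential space, giving \EXPSPACE.

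For the constrained existence of an NE, the plan is to adapt the non-blind characterisation of~\cite{Bertrand20} to the timed setting: a~play~$\rho$ is the outcome of some NE if and only if, for every prefix $\rho_{<k}$ and every player~$i$, one has \(\cost_i(\rho_{<k}) + \mathrm{Pun}_i(\rho_k) \geq \cost_i(\rho)\), where $\mathrm{Pun}_i(s)$ denotes the minimum cost that player~$i$ can guarantee to accumulate from the timed configuration~$s$ against an adversarial coalition of the remaining players. The~value $\mathrm{Pun}_i(s)$ is the value of a two-player zero-sum shortest-path game on the same truncated graph and can be re-evaluated on the fly within the available space budget. The~algorithm then nondeterministically guesses~$\rho$, tracks each player's accumulated cost, verifies the punishment inequality at each step by an on-the-fly computation of~$\mathrm{Pun}_i$, and checks the linear constraints~$\phi_q$ once~$\rho$ reaches~$\final$. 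All numeric quantities involved are bounded by $T\cdot\max_{\vertex\in\Vertex}\weight(\vertex)(\nbp)$ and therefore fit in exponential space, so the whole procedure runs in \EXPSPACE in both the symmetric and asymmetric cases (the per-player punishment values prevent us from exploiting the load-vector abstraction here). The~prices of anarchy and of stability then reduce respectively to the non-existence of an NE with social welfare exceeding $x$ times the social optimum, and the existence of an NE with welfare at most $x$ times it, both decidable in \EXPSPACE by composing the two previous procedures.

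The~main obstacle is to establish the non-blind NE characterisation above. The~forward direction amounts to exhibiting a profitable deviation whenever one of the inequalities fails; the converse requires constructing an actual NE from a play satisfying all the inequalities, by~letting each player follow~$\rho$ as long as everyone complies and switching to the punishment strategy against the unique deviator otherwise. The~delicate point is to accommodate the timed concurrent-game semantics, in which the set~$\Select(\vec a)$ of players that actually move at a given round depends on all proposed delays, so that a deviation in time (a~player proposing an earlier date) must be detected and punished without the non-deviators having committed yet to a specific delay.
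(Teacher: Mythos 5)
Your overall architecture matches the paper's: truncate time to obtain a finite game, abstract symmetric configurations to per-vertex counts for the social-welfare problem, characterise NE outcomes via punishment values checked along a guessed play with on-the-fly value computation, and obtain the prices of anarchy and stability by composing the two procedures with a binary search. However, your key lemma --- the NE characterisation --- is stated incorrectly for the concurrent semantics, and you explicitly defer rather than resolve the point where it breaks. You require $\cost_i(\rho_{<k}) + \mathrm{Pun}_i(\rho_k) \geq \cost_i(\rho)$, where $\mathrm{Pun}_i(\rho_k)$ is the adversarial value taken \emph{at} the on-path state $\rho_k$. In that value the coalition gets to choose its action at $\rho_k$ adversarially; but in the equilibrium being constructed the coalition is committed at step~$k$ to the prescribed action vector $\vec a_{k,-i}$ (it has not yet observed the deviation) and can only start punishing from the state reached \emph{after} the deviation. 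Hence $\mathrm{Pun}_i(\rho_k) \geq \inf_{b_i}\bigl(\Fcost_i(s_k,(\vec a_{k,-i},b_i),s') + \LowVal_i(s')\bigr)$ with $s'=\FUpdate(s_k,(\vec a_{k,-i},b_i))$, and the inequality can be strict; your condition is therefore too weak, and a play may satisfy it while some player still has a profitable deviation $b_i$ that exploits the coalition's committed move. The ``if'' direction of your equivalence fails. The paper's condition (Theorem~\ref{thm:critNE}) quantifies over the deviating actions $b_i$ and places $\LowVal_i$ at the post-deviation state~$s'$, added to the cost of the deviating transition; this is exactly what is needed for the punish-the-deviator construction to go through, and it is precisely the issue you flag as ``delicate'' without supplying an argument.

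Two further, more quantitative problems. First, your truncation bound $T$ cannot be polynomial in $|\Vertex|$ and the largest guard constant alone: the correct bound is $\MaxCost\cdot(\MaxInt+|\Vertex|)$ with $\MaxCost=\max_{\vertex\in\Vertex}\weight(\vertex)(\nbp)$, since a player may legitimately reach their target as late as the cost of their cheapest guaranteed alternative, which scales with the weight functions. Truncating at a bound independent of the weights would discard genuine NE outcomes (the complexity classes are unaffected, since the corrected bound is still pseudo-polynomial, but the truncation as you state it is unsound). Second, in the symmetric case a single load vector in $[0,\nbp]^{\Vertex}$ does not suffice to compute the social welfare: one must separately count the players who have already visited the target, since they still contribute to the load (hence to the others' costs) but no longer accumulate cost themselves; this is why the paper's abstract configurations carry two such vectors, for active and for winning players.
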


\section{Existence and computation of Nash equilibria}
\label{section:existComputNE}

We~first notice that, similarly to the untimed case~\cite{Bertrand20},
there are more Nash equilibria in non-blind strategies than when restricting to blind strategies:

\begin{restatable}{proposition}{propinterest}
\label{prop-interest}
There exists a timed network game $\timedNetGame$ that admits a Nash equilibrium~$\sigma$ from $\start$, and whose all blind Nash equilibria~$\tau$ from $\start$ are such that
$\SW(\Outcome{\sigma}_{\start}) < \SW(\Outcome{\tau}_{\start})$.
\end{restatable}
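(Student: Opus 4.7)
The plan is to reuse the two-player symmetric TNG~$\timedNetGame$ of Figure~\ref{fig:runningEx} together with the non-blind Nash equilibrium~$(\sigma_1,\sigma_2)$ exhibited in the example immediately preceding the proposition, whose outcome has social welfare $26+20=46$. It~then suffices to show that every blind Nash equilibrium of~$\timedNetGame$ has social welfare strictly greater than~$46$.

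I~would first list the relevant trajectories. Up to the timing of the moves, there are exactly five source-to-target paths available, namely $\pi_1 = \src \to s_3 \to \tgt$, $\pi_2 = \src \to s_1 \to s_2 \to \tgt$, $\pi_3 = \src \to s_4 \to s_5 \to \tgt$, $\pi_4 = \src \to s_2 \to \tgt$, and $\pi_5 = \src \to s_5 \to \tgt$. Because the weight of~$\src$ is strictly positive and non-decreasing in the load, postponing a player's departure never lowers their own cost while leaving the opponent's blind strategy (hence load pattern) untouched; so, for best-response comparisons, it is enough to consider the earliest-departure variant of each path. By~Lemma~\ref{lemma-blindpath}, a~blind strategy profile is then uniquely encoded by a pair in $\{\pi_1,\dots,\pi_5\}^2$, which reduces the whole analysis to a finite case split.

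The~second step is to prune this $5\times 5$ grid by best-response arguments. A~direct cost computation gives that, whatever blind trajectory the opponent commits to, using~$\pi_1$ costs the deviator at least~$26$ (with equality when the opponent does not also pass through~$s_3$ at the same date), whereas switching to a suitable one of $\pi_2$ or~$\pi_3$ costs at most~$25$. Hence no blind Nash equilibrium involves~$\pi_1$. A~similar but slightly longer computation eliminates $\pi_4$ and~$\pi_5$: the mandatory wait at~$\src$ until time~$4$ is always strictly more expensive than taking the corresponding branch $\pi_2$ or~$\pi_3$.

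It~then remains to analyse the four profiles built from $\{\pi_2,\pi_3\}$. The~symmetric profiles $(\pi_2,\pi_2)$ and $(\pi_3,\pi_3)$ are not equilibria (each player can reduce their cost by swapping to the parallel path, avoiding the congestion at $s_1,s_2$ or $s_4,s_5$), while $(\pi_2,\pi_3)$ and $(\pi_3,\pi_2)$ are both Nash equilibria with cost profile $(25,25)$ and social welfare~$50$. Combining, every blind Nash equilibrium has social welfare at least $50>46$, which yields the claim. The~main obstacle is this tedious but finite bookkeeping of deviation costs; the monotonicity of the weight functions in load and delay ensures that only a handful of timing variants need be inspected.
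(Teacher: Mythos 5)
Your overall plan is sound, and it takes a genuinely different route from the paper: the paper does not reuse the running example but instead builds a dedicated three-player symmetric game (Figure~\ref{ex:lowerSW}) with exactly four blind trajectories, tabulates the social welfare of all blind profiles, checks that the only profile cheaper than~$39$ is not an equilibrium, and then exhibits a three-player non-blind equilibrium of social welfare~$39$ with an explicit punishment scheme. Your two-player instance is more economical, and its numbers do check out: the only blind Nash equilibria are the two profiles pairing the $s_1$-branch with the $s_4$-branch, both players leaving $\src$ at time~$2$, with cost profile $(25,25)$ and social welfare $50>46$.

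There is, however, a genuine flaw in one step: the claim that ``postponing a player's departure never lowers their own cost'' is false on the $s_3$-route, precisely because the intermediate vertex $s_3$ (cost $16$ at load~$1$, $26$ at load~$2$) is \emph{more} expensive per time unit than~$\src$. Concretely, against an opponent following $(\src,0)(s_3,1)(\tgt,2)$, that same trajectory costs $10+26=36$, whereas the postponed variant $(\src,0)(s_3,2)(\tgt,3)$ costs $10+5+16=31$. So you cannot restrict attention to earliest-departure variants by a blanket monotonicity argument; you must enumerate all timing variants (nine trajectories up to cost-equivalence, counting the two departure times on each long branch and the three timings of the $s_3$-route). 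The conclusion survives via a direct bound: every timing variant of the $s_3$-route costs at least $10+16=26$ (the load in $\src$ during $[0,1]$ is always~$2$, and at least one unit must be spent in~$s_3$), the routes through the $[4]$-guarded edges out of $\src$ cost at least $28$, and the delayed variants of the two long branches are dominated by their earliest-departure versions (there the intermediate vertices $s_1,s_4$ are cheaper than $\src$ and the time spent in $s_2,s_5$ is forced by the guards); meanwhile at least one of the two earliest-departure branch trajectories always costs at most~$25$ against any opponent. Hence no blind Nash equilibrium uses anything outside the two branch trajectories, and your case split on those four profiles is correct. Finally, for a self-contained proof you must also verify in full that the non-blind profile of the example is a Nash equilibrium against \emph{all} deviations of both players; the paper's example only discusses one deviation.
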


In this section, we~transform the infinite
concurrent game associated with a timed network congestion game into a
finite one, preserving the set of costs of Nash equilibria. We use this
finite game to solve the constrained-Nash-equilibrium
problem, and~then explain how to compute witnessing Nash equilibria.

\subsection{Transformation into an equivalent finite concurrent game}
\label{section:finiteConcurGame}

Let $\timedNetGame=(\nbp, \timedNet, (\src_i,\tgt_i)_{1\leq i\leq
  \nbp})$ be a timed network game, and $\MaxInt$ be the largest
integer appearing in the guards of~$\timedNet$.  In~our assumption of
Remark~\ref{rk-winningtraj} that, for all~$1\leq i\leq\nbp$, there
must exist a trajectory $(\vertex_j,\dat_j)_{j\in \N}$ from~$(\src_i,0)$
to~$(\tgt_i,\delta_i)$ for some date~$\delta_i$, we~can additionally
require that $\delta_i\leq \MaxInt+|\Vertex|$. Indeed, consider such a
winning trajectory with minimal~$\delta_i$: then either $\delta_i\leq
\MaxInt$, or for some~$j_0$, $\dat_{j_0-1}<\MaxInt$ and
$\dat_{j_0}\geq\MaxInt$. Since~$\MaxInt$ is the maximal constant
appearing in the guard of~$\timedNet$, we~can then modify the dates
after $\dat_{j_0}$ so that $\dat_{j_0}=\MaxInt$ and
$\dat_{j_0+k}=\MaxInt+k$, while still satisfying the guards.

Let $\MaxCost = \max_{\vertex \in \Vertex}\weight(\vertex)(\nbp)$ be
the maximum cost per time unit that may occur in the game. With the
arguments above, all players can always ensure a cost of at most
$\MaxTime=\MaxCost\cdot(\MaxInt + |V|)$. Since each time unit costs at least
one cost unit, $\MaxCost\cdot(\MaxInt + |V|)$ is also a bound on the
maximum time within which any player must have reached their target
vertex in any Nash equilibrium: if this were not the case, then that
player would have a profitable deviation.  It~follows:

\begin{lemma}
\label{lemma:boundedLength}
Let $\rho = (s_k,\vec a_k, s'_k)_{k\geq 0}$ be a play in~$\CGame$.
For all $1\leq i \leq\nbp$, if $\cost_i(\rho)$ is finite,
then
there exists a position
$k^* \leq \cost_i(\rho)$
such that $s_{k^*} = (\config_{k^*},\dat_{k^*})$ with
$\config_{k^*}(i)=\tgt_i$ and $\dat_{k^*} \leq
\cost_i(\rho)$.

\looseness=-1
This applies to the particular case where $\rho$ is the outcome of a Nash equilibrium, since in this case, for all $1\leq i \leq n$, $\cost_i(\rho)$ is finite, and bounded by~$\MaxTime$.
\end{lemma}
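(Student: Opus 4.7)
The plan is to treat the two assertions of the lemma separately: first the general bound on plays with finite $\cost_i$, and then the specialization to Nash equilibria, which will follow by bounding the equilibrium cost itself.

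For the first assertion, I would unfold the definition of $\cost_i$. Since $\cost_i(\rho)$ is finite, there is a least index $\ell$ with $\config_\ell(i) = \tgt_i$, and
\[
\cost_i(\rho) = \sum_{k=0}^{\ell-1} (\dat_{k+1}-\dat_k) \cdot \weight(\config_k(i))(\load_{\config_k}(\config_k(i))).
\]
Setting $k^* = \ell$, two observations suffice. First, the semantics forces $\dat_{k+1} > \dat_k$ with integer dates, so each delay $\dat_{k+1}-\dat_k$ is at least~$1$. Second, weight functions take values in $\bbN*$, so each summand of the cost is at least the corresponding delay. Combined with $\dat_0 = 0$, this gives
\[
k^* = \ell \leq \dat_\ell = \sum_{k=0}^{\ell-1}(\dat_{k+1}-\dat_k) \leq \cost_i(\rho),
\]
yielding both $k^* \leq \cost_i(\rho)$ and $\dat_{k^*} \leq \cost_i(\rho)$.

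For the Nash-equilibrium specialization, the key is to exhibit for each player $i$ a deviation of cost at most $\MaxTime$. The strengthening of Remark~\ref{rk-winningtraj} proven just before the lemma statement supplies a trajectory from $(\src_i,0)$ that reaches $\tgt_i$ at some date $\delta_i \leq \MaxInt + |\Vertex|$. By the paragraph following Lemma~\ref{lemma-blindpath}, there is a blind strategy $\tau_i$ associated with this trajectory, and by Lemma~\ref{lemma-blindpath} itself, playing $\tau_i$ ensures that player $i$'s own trajectory in any outcome coincides with this fixed one. In particular, player $i$ reaches $\tgt_i$ by date $\delta_i$ regardless of the other players' moves, and since the per-unit cost at any vertex under any load is bounded by $\MaxCost$, the cost incurred is at most $\MaxCost \cdot (\MaxInt + |\Vertex|) = \MaxTime$. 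If $\cost_i(\Outcome{\sigma}_\start) > \MaxTime$ in some Nash equilibrium $\sigma$, switching to $\tau_i$ would strictly reduce player $i$'s cost, contradicting the equilibrium condition; hence $\cost_i$ is finite and bounded by $\MaxTime$, and the first part of the lemma then supplies the required $k^*$.

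The only non-routine step is the claim that the blind strategy $\tau_i$ really enjoys the announced cost bound against any adversary. This rests on Lemma~\ref{lemma-blindpath}: because player $i$'s individual trajectory along any consistent outcome is fixed, the cumulative time spent at each intermediate vertex is exactly the gap $d_{j+1}-d_j$ prescribed by the trajectory, so the cost is bounded vertex-by-vertex by the product of this gap with $\MaxCost$. Everything else is straightforward arithmetic.
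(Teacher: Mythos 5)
Your proof is correct and follows essentially the same route as the paper, which states the lemma as an immediate consequence of the preceding paragraph: delays and weights are both at least one (so time, and hence the number of steps, is bounded by the cost), and each player can secure cost at most $\MaxTime$ by a blind strategy along a trajectory of duration at most $\MaxInt+|\Vertex|$, so any larger equilibrium cost would admit a profitable deviation. Your write-up merely makes explicit the arithmetic and the appeal to Lemma~\ref{lemma-blindpath} that the paper leaves implicit (note only that the bound $\dat_{k^*}\leq\cost_i(\rho)$ uses $\dat_0=0$, i.e.\ that $\rho$ starts at $\start$, which is the intended reading).
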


\paragraph{Definition of the finite concurrent game.}
From the lemma above, when looking for Nash equilibrium, we can unfold
the concurrent game~$\CGame$ into a tree (actually, a~directed acyclic graph)
and prune all subtrees at depth~$\MaxTime$. Formally:
\begin{definition}%
  Let $\timedNetGame$ be a timed network game, and
  $\MaxTime=\MaxCost\cdot(\MaxInt + |V|)$.
  With
  this timed network game, we associate the (finite) concurrent game
  $\FCGame=(\FStates,\FActions,\allowbreak(\FAllowed_i)_{1\leq
    i\leq\nbp},\allowbreak\FUpdate,(\Fcost_i)_{1\leq i\leq \nbp})$ defined as follows:
  \begin{itemize}
  \item $\FStates = \{ (\config,\dat)\in\TConf \mid
    0\leq \dat\leq \MaxTime + 1 \}$
    is the set of time-bounded timed configurations;
  \item $\FActions = \{(\vertex,\dat) \in \Actions \mid d \leq
    \MaxTime + 1 \}\cup \{ (\bot,\MaxTime+1)\}$ is the finite set of actions;
  \item for any~$1\leq i\leq \nbp$ and any state
    $(\config,\dat)\in\FStates$ with $\dat<\MaxTime$, we~have
    action~$(\vertex,\dat')\in\FAllowed_i(\config,\dat)$ if, and
    only~if, $\dat<\dat'\leq \MaxTime+1$ and $(\config(i),\vertex)\in
    E$ and $\dat'\in\guard(\config(i),\vertex)$.
    For~states~$(\config,\dat)$ with $\dat\geq \MaxTime$, we~have
    $\FAllowed_i(\config,\dat) = \{(\bot,\MaxTime+1)\}$.
    We~write $\FMov$ for the set of action vectors.
    An action vector~$\vec a=(a_i)_{1\leq i\leq\nbp}$ is in
    $\FValid(\config,\dat)$ if each $a_i$ is in $\FAllowed_i(\config,\dat)$;
  \item for a state $s=(\config,\dat)$ and a valid action vector $\vec
    a=(\vertex_i,\dat'_i)_{1\leq i\leq\nbp}$, writing
    $m=\min\{\dat'_i\mid 1\leq i\leq\nbp\}$, we~have $\FUpdate(s,\vec
    a)=(\config,\MaxTime+1)$ if $m=\MaxTime+1$, and $\FUpdate(s,\vec
    a)=\Update(s,\vec a)$ otherwise;
  \item in the same way as for (infinite) concurrent games, we let
    $\FTrans = \{ (s,\vec a,s') \in \FStates \times \FMov \times
    \FStates \mid \vec a \in \FValid(s)\text{ and } s'=\FUpdate(s,\vec
    a)\}$ be the set of \emph{transitions} of~$\FCGame$.  Cost
    functions~$\Fcost_i$ are defined on~$\FTrans$
    in the same way
    as for~$\CGame$: we~set
    $\Fcost_i((\config,\dat),\vec a,(\config',\dat'))$ as $ (\dat'-\dat)\cdot \weight(\config(i))(\load_{\config}(\config(i)))$ if~$\dat'\leq\MaxTime$, and
as~$0$ otherwise.
  \end{itemize}
\end{definition}

Notice that $\FStates$ has size doubly-exponential in the size of the
input, since the number of players is given in binary.
We~write~$G^F$ for the graph~$(\FStates,\FTrans)$.  Plays, histories,
costs and strategies are defined
as for infinite concurrent games.

\begin{remark}
  By construction of~$\FCGame$, any play $\rho = (s_k,\vec a_k,
  s'_k)_{k \geq 0}$ in $\FCGame$ ends in a self-loop on some
  configuration~$(\config,\MaxTime+1)$, where the only valid action
  vector~$\vec a$ is such that $a_i=(\bot,\MaxTime+1)$ for all~$1\leq
  i\leq \nbp$.  The~prefix of~$\rho$ before entering this loop
  corresponds to a prefix of a play~$\rho'$ in~$\CGame$, with the
  difference that some actions (with $\dat>\MaxTime$) in~$\CGame$ may
  not be available in~$\FCGame$, but can be modified by
  setting~$\dat=\MaxTime+1$. As~a consequence,
  Lemma~\ref{lemma:boundedLength} also holds for plays in~$\FCGame$.
  
  Also, since the set of actions is finite, and that there is a
  single available action in the terminal self-loop, there is a finite
  number of strategies in~$\FCGame$.
\end{remark}

We~now establish a correspondence between the
Nash equilibria of~$\FCGame$ and~$\CGame$:

\begin{restatable}{theorem}{thmequivNE}
	\label{thm:equivNE}
	Let $\timedNetGame$ be a timed network game and let $\CGame$ and $\FCGame$ be its associated infinite and finite concurrent games.
        Let~$x \in (\N \cup \{+\infty \})^\nbp$. Then 
		there exists a Nash equilibrium $\sigma$ in $\CGame$ with ${\cost(\Outcome{\sigma}_{\start}) = x}$
		if, and only~if, 
		there exists a Nash equilibrium $\tau$ in $\FCGame$ with $\Fcost(\Outcome{\tau}_{\start})=x$
\end{restatable}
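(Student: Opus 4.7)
I~would proceed by building, in each direction, a~translation between strategies of~$\CGame$ and~$\FCGame$ that preserves the outcome cost profile, leveraging Lemma~\ref{lemma:boundedLength} and its extension to~$\FCGame$ (mentioned in the remark following the definition of~$\FCGame$). First, since each player has a winning blind strategy (ensuring finite cost) and Lemma~\ref{lemma:boundedLength} bounds that cost by~$\MaxTime$ in any Nash equilibrium, the equivalence is vacuous whenever some $x_i = +\infty$; I~may therefore assume $x\in\N^\nbp$ with $x_i\leq\MaxTime$ for all~$i$. The~central invariant I~would use is that, along the outcome of any NE, at every step preceding the moment all players have reached their targets, the minimum proposed date is at most~$\MaxTime$---otherwise some player not yet at their target would be propelled past time~$\MaxTime$, contradicting the bound of Lemma~\ref{lemma:boundedLength}.

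For the forward direction, starting from an NE~$\sigma$ in~$\CGame$, I~would define $\tau$ in~$\FCGame$ by a translation-with-capping procedure: to each history~$h$ of~$\FCGame$ with $\last(h)$ of time below~$\MaxTime$, associate the corresponding history~$\tilde h$ of~$\CGame$ (same states and actions, well-defined because $h$ has no sink action), and let $\tau_i(h) = \sigma_i(\tilde h)$ when this action is already valid in~$\FCGame$, otherwise cap it to a valid $\FCGame$-action of player~$i$ with the largest possible date; on histories reaching time~$\geq\MaxTime$, $\tau_i$~plays the sink action. The~invariant implies that any $\sigma_j$-action with date~$>\MaxTime+1$ is never the one selected for motion along~$\sigma$'s outcome, and a careful choice of the cap keeps the capped date strictly above the step's minimum, so~$\tau$ reproduces the same state-action sequence as~$\sigma$ throughout the cost-accumulating prefix, yielding $\Fcost(\Outcome{\tau}_\start)=x$. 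To~check that~$\tau$ is an NE, I~would assume for contradiction a profitable deviation~$\tau'_i$; because weights and loads are~$\geq 1$, a~cost~$<x_i\leq\MaxTime$ forces player~$i$ to reach~$\tgt_i$ at some date $t^*<\MaxTime$, before any sink action is played. Replaying~$\tau'_i$'s moves inside~$\CGame$ produces a strategy~$\sigma'_i$, and the same cap-irrelevance invariant shows that $\Outcome{\sigma_{-i},\sigma'_i}_\start$ agrees with the deviation's outcome up to time~$t^*$; the~resulting $\CGame$-cost is then strictly below~$x_i$, contradicting the NE property of~$\sigma$.

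The~backward direction is analogous: from an NE~$\tau$ of~$\FCGame$ with cost~$x$, lift its actions into~$\CGame$, using any winning blind extension on histories past time~$\MaxTime$, where $\tau$ plays sink actions that have no cost-relevant effect since every player has already reached their target by then; the symmetric argument converts any profitable $\CGame$-deviation into a profitable $\FCGame$-deviation. The~main obstacle in both directions is the cap-irrelevance step: at each step of the cost-accumulating phase one must exhibit a valid $\FCGame$-replacement for a $\sigma_j$-action with date~$>\MaxTime+1$ whose own date remains strictly above the step's minimum. This~reduces to finding, at each reachable state, an action in~$\FAllowed_j$ whose date exceeds the current minimum~$\leq\MaxTime$, which holds under the availability guarantee of Remark~\ref{rk-winningtraj} combined with the bound $\delta_i\leq \MaxInt+|V|$ established earlier in the section for suitable trajectories.
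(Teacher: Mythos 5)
Your overall strategy---cap all dates at $\MaxTime+1$, observe that a capped action is never the selected one while some player still has to reach their target, and transfer profitable deviations back and forth---is exactly the skeleton of the paper's proof, which packages it as a simulation relation $\simBT$ verified in Propositions~\ref{prop:simInfToFin} and~\ref{prop:simFinToInf} and fed into a generic NE-transfer statement (Proposition~\ref{prop:existenceNESimu}). Your bound on the minimum proposed date along the cost-accumulating prefix, and the observation that an edge available at a date above $\MaxTime+1$ is also available at $\MaxTime+1$ (because $\MaxInt$ bounds all finite guard endpoints), both match the paper.

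There is, however, a genuine gap in how you compose a strategy of one game with histories of the other. You identify $\tilde h$ with $h$ (``same states and actions'') and set $\tau_i(h)=\sigma_i(\tilde h)$ after capping. But histories record the \emph{full action vectors}, including the actions of players who are not selected, and capping changes those vectors. Along $\Outcome{\sigma}_{\start}$ in $\CGame$, a player who does not move at step~$k$ may perfectly well propose a date $>\MaxTime+1$; the corresponding history of $\FCGame$ then contains the capped action and is therefore \emph{not} equal to the prefix of $\Outcome{\sigma}_{\start}$, so $\sigma_i(\tilde h)$ is evaluated at a history that $\sigma$ regards as off-path (an apparent simultaneous deviation by every capped player). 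Nothing forces $\sigma_i$ to keep following $\rho$ there, so your $\tau$ need not have outcome cost~$x$. The same mismatch infects the equilibrium argument: when you ``replay $\tau'_i$'s moves inside $\CGame$,'' the $\CGame$-history contains the uncapped punishing actions of $\sigma_{-i}$, while $\tau_{-i}$ was defined on their capped versions, so the two plays are not related by your identification and the claimed agreement up to time $t^*$ does not follow. The paper avoids this by never identifying histories literally: it constructs, by induction on length, a partial representative map $\repr\colon \HistFG(\start)\to\Hist(\start)$ together with the set $\setDevi$ of potential deviators, and only requires the \emph{states} of $h$ and $\repr(h)$ to be related---their action vectors may differ. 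Your proof needs this additional bookkeeping (or a prior normalization of $\sigma$ guaranteeing that no date above $\MaxTime+1$ is ever proposed on any history reachable under at most one unilateral deviation, which amounts to the same work) before it goes through.
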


We prove this result by establishing simulation relations
between $\CGame$ and~$\FCGame$, satisfying special properties for Proposition~\ref{prop:existenceNESimu} below to apply.
We~let  $\simBT \subseteq \States \times
\FStates$ such that, for any $s_1= (\config_1,\dat_1) \in \States$ and for
any $s_2 = (\config_2,\dat_2) \in \FStates$, $s_1 \simBT s_2$ if, and
only~if, either \emph{(i)}~${\dat_1 =d_2<\MaxTime}$  and 
$\config_1 = \config_2$, or \emph{(ii)}~$d_1 > \MaxTime$ and
$d_2 >\MaxTime$.

We then use this relation (and its inverse) in the following generic
proposition in order to prove  the correspondence between the Nash equilibria
in $\CGame$ and $\FCGame$:
\begin{restatable}{proposition}{propexistenceNESimu}
\label{prop:existenceNESimu}
Let %
$\calG=(\States, \Actions, (\Allowed_i)_{1\leq i\leq \nbp}, \Update,(\cost_i)_{1\leq i\leq \nbp})$
and
$\calG'=(\States', \Actions', (\Allowed'_i)_{1\leq i\leq \nbp}, \Update',
(\cost'_i)_{1\leq i\leq \nbp})$
be two $\nbp$-player concurrent games,
$s_0\in\States$ and~$s'_0\in\States'$,
and
$\genSim \subseteq \States \times \States'$ be a relation such that
\begin{enumerate}
\item $s_0 \genSim s'_0$;\label{item:simInit}
\item \label{item-cond2}
  there exists $\lambda \in \N$ such that for any NE~$\sigma$ in
  $\calG$ from~$s_0$, for any $1\leq i \leq\nbp$, it~holds
  $\cost_i(\Outcome{\sigma}_{s_0}) \leq \lambda$;
\item for all plays $\rho \in \Plays_{\calG}$ and
  $\rho' \in \Plays_{\mathcal{G}'}$ such that $\rho \genSim \rho'$
  (\IE, $\rho_j\genSim\rho'_j$ for all $j\in\bbN$),
  all $1\leq i\leq\nbp$, if $\cost_i(\rho) \leq \lambda$
  or $\cost'_i(\rho') \leq \lambda$,
  then $\cost_i(\rho) = \cost'_i(\rho')$;\label{item:simCost}
\item For all $s \in \States$, for all $\vec a \in \Valid(s)$,
  for all $s' \in \States'$,
  if $s \genSim s'$, then there exists $\vec a' \in \Valid'(s')$ such that:
\begin{enumerate}
\item $\Update(s,\vec a) \genSim \Update'(s',\vec a')$; \label{item:sim1}
\item   \label{item:sim2} for all $1\leq i\leq\nbp$,
  for all $b'_i \in \Allowed'_i(s')$, there exists $b_i \in \Allowed_i(s)$
  such that
  \[
  \Update(s,(a_{-i},b_i)) \genSim \Update'(s',(a'_{-i},b'_i)).
  \]
\end{enumerate}
\end{enumerate}

Then for any Nash equilibrium~$\sigma$ in~$\calG$ from~$s_0$,
there exists a Nash equilibrium~$\sigma'$ in~$\calG'$ from~$s'_0$
such that $\cost(\Outcome{\sigma}_{s_0})=\cost'(\Outcome{\sigma'}_{s'_0})$.
\end{restatable}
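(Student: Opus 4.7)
The plan is to construct the desired Nash equilibrium $\sigma'$ in $\calG'$ by simulating $\sigma$ step by step along $\genSim$, using condition~\ref{item:sim1} to choose the actions prescribed by~$\sigma'$, and using condition~\ref{item:sim2} to translate any profitable deviation in~$\calG'$ back into a profitable deviation in~$\calG$, which will contradict the fact that $\sigma$ is a~NE. Formally, I~will build $\sigma'$ together with a map $\Phi$ from histories of~$\calG'$ starting at~$s'_0$ to histories of~$\calG$ starting at~$s_0$ satisfying $\last(\Phi(h')) \genSim \last(h')$, by induction on the length of~$h'$. At the empty history, $\Phi$~maps $s'_0$ to~$s_0$, which is justified by condition~\ref{item:simInit}. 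Given $\Phi(h')$, I~set $\vec a := \sigma(\Phi(h'))$, invoke condition~\ref{item:sim1} to obtain some $\vec a' \in \Valid'(\last(h'))$ with $\Update(\last(\Phi(h')), \vec a) \genSim \Update'(\last(h'), \vec a')$, and~define $\sigma'(h') := \vec a'$. For any one-step extension $h' \cdot (\last(h'),\vec b',s'')$ of~$h'$: if $\vec b' = \vec a'$, extend~$\Phi$ using~$\vec a$; if $\vec b' = (\vec a'_{-i}, b'_i)$ is a single-player deviation by some Player~$i$, invoke condition~\ref{item:sim2} to obtain $b_i \in \Allowed_i(\last(\Phi(h')))$ and extend~$\Phi$ using $(\vec a_{-i},b_i)$. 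On histories reached by multi-player deviations, the value of $\Phi$ is irrelevant for the NE argument and may be set arbitrarily.

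By construction, the outcomes $\rho := \Outcome{\sigma}_{s_0}$ and $\rho' := \Outcome{\sigma'}_{s'_0}$ are position-wise $\genSim$-related. Condition~\ref{item-cond2} yields $\cost_i(\rho) \leq \lambda$ for every~$i$, so condition~\ref{item:simCost} gives $\cost_i(\rho) = \cost'_i(\rho')$ for all players, establishing the equality of cost profiles. It remains to show that $\sigma'$ is indeed a NE. Suppose for contradiction that some Player~$i$ admits a profitable deviation~$\tau'_i$, so that $\cost'_i(\rho'_\tau) < \cost'_i(\rho')$ with $\rho'_\tau := \Outcome{\sigma'_{-i},\tau'_i}_{s'_0}$. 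Every prefix of~$\rho'_\tau$ is consistent with~$\sigma'_{-i}$, hence the only potential deviator at each step is Player~$i$, so $\Phi$ is well defined along~$\rho'_\tau$. I~then define a strategy~$\tau_i$ for Player~$i$ in~$\calG$ by letting $\tau_i(\Phi(h'))$ be the action~$b_i$ delivered by condition~\ref{item:sim2} when translating Player~$i$'s move~$\tau'_i(h')$, for every prefix~$h'$ of~$\rho'_\tau$, and setting $\tau_i$ arbitrarily on all other histories. By~construction, $\rho_\tau := \Outcome{\sigma_{-i},\tau_i}_{s_0}$ is position-wise $\genSim$-related to~$\rho'_\tau$. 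Since $\cost'_i(\rho'_\tau) < \cost'_i(\rho') = \cost_i(\rho) \leq \lambda$, condition~\ref{item:simCost} yields $\cost_i(\rho_\tau) = \cost'_i(\rho'_\tau) < \cost_i(\rho)$, contradicting the fact that~$\sigma$ is a~NE. Hence $\sigma'$ is a NE in~$\calG'$ with the same cost profile as~$\sigma$, as required.

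The main obstacle will be the bookkeeping around~$\Phi$: one must check that after a single-player deviation by Player~$i$, the action vector furnished by condition~\ref{item:sim1} at the newly reached pair of $\genSim$-related states is coherent with the actions already assigned by~$\sigma'_j$ to the shared prefix for $j \neq i$. This is essentially automatic because at every history~$h'$ the construction determines the whole vector $\sigma'(h')$ in a single application of condition~\ref{item:sim1}, and because the strategies~$\sigma'_j$ only depend on~$h'$; what matters is that along~$\rho'_\tau$ the ``deviation branch'' of the construction is invoked exactly at those steps where $\tau'_i(h') \neq \sigma'_i(h')$, which is immediate from the definition of~$\rho'_\tau$ as the outcome of $(\sigma'_{-i},\tau'_i)$.
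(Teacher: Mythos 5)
Your proposal is correct and follows essentially the same route as the paper's proof: your map $\Phi$ is the paper's representative function $\repr$, the step-by-step construction of $\sigma'$ (distinguishing no deviation, single-player deviation translated via condition~\ref{item:sim2}, and arbitrary behaviour after multi-player deviations) matches the paper's use of $\setDevi$, and the final pullback of a profitable deviation through condition~\ref{item:simCost} is the same contradiction argument. No gaps.
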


\subsection{Existence of Nash equilibria}\label{sec-ExistNE}

\begin{restatable}{theorem}{thmexistenceNE}
	\label{thm:existenceNE}
	Let $\timedNetGame$ be a timed network game and let $\CGame$
        be its associated concurrent game. There exists a Nash
        equilibrium~$\sigma$ from~$\start$ in $\CGame$.
\end{restatable}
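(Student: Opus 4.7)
The~plan is twofold: reduce to the finite game~$\FCGame$, then construct a Nash equilibrium in~$\FCGame$ by a standard \emph{retaliation} / ``grim-trigger'' argument. By~Theorem~\ref{thm:equivNE}, any~NE in~$\FCGame$ from~$\start$ lifts to a NE in~$\CGame$ from~$\start$ with the same cost profile, so it suffices to exhibit a NE in~$\FCGame$. Note that $\FCGame$ is a finite game whose plays all eventually end in absorbing self-loops at time $\MaxTime+1$, so players have only finitely many distinct strategies from~$\start$ and costs are integers bounded above by~$\MaxTime$.

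First, for each state $s\in\FStates$ and each player~$i$, I~would compute the \emph{punishment value}
\[
P_i(s) \;=\; \sup_{\sigma_{-i}}\inf_{\sigma_i}\Fcost_i(\Outcome{(\sigma_{-i},\sigma_i)}_s),
\]
together with a witness coalition profile $\tau_{-i}^{s}$ achieving the supremum: against~$\tau_{-i}^{s}$, every strategy of Player~$i$ from~$s$ has cost at least~$P_i(s)$. Both $P_i(s)$ and the witnesses~$\tau_{-i}^{s}$ are obtained by backward computation on the DAG underlying~$\FCGame$.

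Second, I~would construct a play $\pi^*=(s_k,\vec a_k,s_{k+1})_{k\geq 0}$ from~$\start$ in~$\FCGame$ satisfying the following ``one-shot deviation'' invariant: for every $k\geq 0$, every player~$i$, and every alternative action $b_i\in\FAllowed_i(s_k)$, writing $s'=\FUpdate(s_k,(\vec a_{k,-i},b_i))$,
\[
\Fcost_i(\pi^*_{\geq k}) \;\leq\; \Fcost_i(s_k,(\vec a_{k,-i},b_i),s') + P_i(s').
\]
Given such~$\pi^*$, the Nash equilibrium~$\sigma^{NE}$ is then defined to follow~$\pi^*$ as long as all players comply; as~soon as Player~$i$ deviates and the play enters some state~$s'$, the other players switch to~$\tau_{-i}^{s'}$. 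The invariant above, together with the defining property of~$\tau_{-i}^{s'}$, ensures that any unilateral deviation by Player~$i$ yields a cost at least $\Fcost_i(\pi^*)$, hence $\sigma^{NE}$ is a Nash equilibrium.

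The~main obstacle is the existence of~$\pi^*$ with the one-shot deviation invariant: pure equilibria of the stage games need not exist in a concurrent setting, so naive backward induction from the terminal self-loops does not suffice. The key observation is that history-dependence lets opponents observe and retaliate, so at each state it is enough to select a single action vector respecting the inequality above rather than a full stage equilibrium; existence of such a vector at every state is then established by a careful induction along the DAG of~$\FCGame$, combining its finiteness with the very definition of the punishment values~$P_i$.
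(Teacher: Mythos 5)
There is a genuine gap, and it sits exactly where you park it: the existence of the play $\pi^*$ satisfying the one-shot deviation invariant. Everything else in your proposal is machinery the paper also has --- the reduction to $\FCGame$ via Theorem~\ref{thm:equivNE}, the punishment values (the paper's $\LowVal_i$), and the trigger-strategy construction turning a play satisfying the invariant into a Nash equilibrium are precisely the content of Theorem~\ref{thm:critNE}. But that theorem is an \emph{equivalence}: a play satisfies the invariant if and only if it is the outcome of a Nash equilibrium. So asserting that such a $\pi^*$ exists is asserting the theorem you are trying to prove, and the ``careful induction along the DAG'' cannot supply it. Indeed, your argument never uses the congestion structure of the game, so if it worked it would prove existence of pure Nash equilibria in \emph{arbitrary} finite concurrent games with non-negative costs --- which is false (a one-shot matching-pennies gadget, where each of two players reaches their target exactly when the actions mis/match, admits no play satisfying the invariant and no pure NE; history-dependence and retaliation buy nothing in a one-shot interaction). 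The invariant couples the entire suffix $\Fcost_i(\pi^*_{\geq k})$ with all one-step deviations at step~$k$, so a greedy choice of action vector at each state can always be undercut later; no structure-free backward induction resolves this.

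The paper closes this gap using the congestion structure: it restricts to the finite set $\Sigma^{\WB}$ of winning \emph{blind} strategy profiles and exhibits a potential function for $\FCGame$ on that set (Proposition~\ref{prop:potentialGame}, in the Rosenthal/Monderer--Shapley style), so best-response dynamics within $\Sigma^{\WB}$ converges to a $\Sigma^{\WB}$-Nash equilibrium; Proposition~\ref{prop:ENrestToEN} then upgrades this blind equilibrium to a Nash equilibrium against all strategies (essentially by supplying the witnessing play for your invariant), and Theorem~\ref{thm:equivNE} lifts it back to $\CGame$. To repair your proof you would need to produce $\pi^*$ by some argument specific to timed network congestion games --- and the natural candidate is exactly the outcome of a blind equilibrium obtained from a potential function. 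A secondary, minor point: your invariant should be guarded by $i\not\in\Visit(\pi^*_{<k})$ as in Equation~\eqref{eq-NE}, since a player who has already reached their target accrues no further cost and the unguarded inequality is not the right condition for such players.
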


We~prove this result using a \emph{potential
function}~$\Psi$~\cite{MS-geb96}; a~potential function is a function
that assigns a non-negative real value to each strategy profile, and
decreases when the profile is \emph{improved} (in~a sense that we
explain below). Since those improvements are performed among a finite
set of strategy profiles, this entails convergence of the sequence of
improvements.

An \emph{improvement} consists in changing the
strategy of one of the players by a better strategy for that player
(if~any), in the sense that their individual cost descreases. Because
the game admits a potential function over finitely many strategies,
this \emph{best-response dynamics} must converge in finitely many
steps, and the limit is a strategy profile where no single player can
improve their strategy, \IE a Nash equilibrium.

We~prove that timed network games admit a potential function when
considering winning blind strategies; hence there exists Nash equilibria
for that set of strategies.
We~then prove that a Nash equilibrium in this
restricted setting remains a Nash equilibrium
w.r.t. the set of all strategies.

\smallbreak
For all $1\leq i\leq\nbp$, there are a finite number of
strategies of Player~$i$ in~$\FCGame$.
It~follows that the set of winning blind strategies of Player~$i$ is
also finite; we~let
\[
\Sigma^{\WB}_i = \{\sigma_i \mid
\sigma_i \text{ is a winning blind strategy of Player~$i$}\}.
  \]
Let $\Sigma^{\WB} = \Sigma^{\WB}_1 \times \ldots \times \Sigma^{\WB}_n$;
$\Sigma^{\WB}$ is a finite set of strategy profiles. We~consider a restriction
of~$\FCGame$ in which for all $1\leq i\leq \nbp$, Player~$i$ is
only allowed to play a strategy in~$\Sigma^{\WB}_i$.
A~$\Sigma^{\WB}$-Nash equilibrium
then is a strategy profile in~$\Sigma^{\WB}$
such that for all $1\leq i\leq\nbp$, Player~$i$ has no profitable deviation
\emph{in~$\Sigma^{\WB}_i$}. 
The~next proposition implies that $\FCGame$ always admits
$\Sigma^{\WB}$-Nash equilibria~\cite{MS-geb96}.
\begin{restatable}{proposition}{proppotentialGame}
	\label{prop:potentialGame}
	Let $\timedNetGame$ be a timed network game and let
        $\FCGame$ be its associated finite concurrent
        game and $\Sigma^{\WB}$  be the set of winning blind strategy profiles.
        The~game
        $\FCGame$ restricted to  strategy profiles in~$\Sigma^{\WB}$
        has a potential function.
\end{restatable}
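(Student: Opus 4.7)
The plan is to adapt Rosenthal's classical potential-function construction for atomic congestion games to the timed setting. By Lemma~\ref{lemma-blindpath}, any profile $\sigma \in \Sigma^{\WB}$ induces a fixed tuple of trajectories $\pi_i = \langle\sigma_i\rangle_{\start}$ from $\src_i$ to $\tgt_i$. I~would view each pair $(v, t)$ with $v \in \Vertex$ and $0 \le t \le \MaxTime$ as an atomic \emph{resource}, representing ``vertex~$v$ during the unit time interval $[t, t+1)$'', and let $\ell_{v, t}(\sigma)$ be the number of players whose induced trajectory has them at~$v$ during that interval. Unfolding the per-transition cost formula, player~$i$'s cost under~$\sigma$ decomposes as
\[
\cost_i(\Outcome{\sigma}_{\start}) \;=\; \sum_{t=0}^{T_i(\sigma)-1} \weight(\pi_i(t))\bigl(\ell_{\pi_i(t), t}(\sigma)\bigr),
\]
where $T_i(\sigma)$ denotes player~$i$'s arrival date at~$\tgt_i$ and $\pi_i(t)$ their vertex during $[t, t+1)$.

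I~would then take as candidate potential the Rosenthal sum
\[
\Psi(\sigma) \;=\; \sum_{v \in \Vertex} \sum_{t=0}^{\MaxTime} \sum_{k=1}^{\ell_{v,t}(\sigma)} \weight(v)(k),
\]
and verify the exact-potential identity via the standard telescoping argument. When player~$i$ unilaterally switches from~$\sigma_i$ to~$\sigma'_i$, the other players' trajectories are unchanged (they are blind), so $\ell_{v, t}$ differs only on resources used by exactly one of $\pi_i$ and $\pi'_i$, and changes there by exactly $\pm 1$. Slot by slot, an added resource contributes $+\weight(v)(\ell_{v,t}+1)$ to~$\Psi$---precisely the new per-unit cost player~$i$ now pays there---while a dropped resource contributes $-\weight(v)(\ell_{v,t})$, precisely the cost player~$i$ previously paid. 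Summing gives $\Psi(\sigma_{-i}, \sigma'_i) - \Psi(\sigma_{-i}, \sigma_i) = \cost_i(\Outcome{\sigma_{-i}, \sigma'_i}_{\start}) - \cost_i(\Outcome{\sigma_{-i}, \sigma_i}_{\start})$, the defining equation of an exact potential.

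The main technical obstacle is a subtle mismatch between $\ell_{v, t}$, as used in~$\Psi$, and the game-semantic load~$\load_\config$: by definition $\load_\config$ still counts a player who sits at their target vertex after having reached it, whereas such a player no longer accumulates cost. Without care this breaks the telescoping, since a deviation that shifts player~$i$'s arrival date at~$\tgt_i$ alters the load at $(\tgt_i, t)$ for later~$t$ in a way that does not appear in player~$i$'s own cost. I~would resolve this by restricting~$\Sigma^{\WB}$ to the canonical winning blind strategies that route player~$i$ to a dedicated sink state upon arrival at~$\tgt_i$---a harmless modification explicitly sanctioned by the Remark following the definition of~$\CGame$, which preserves the Nash-equilibrium analysis. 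Under this canonical form, ``occupying a resource'' and ``being counted by~$\load_\config$'' coincide throughout each player's pre-target phase, the Rosenthal telescoping above goes through unchanged, and $\Psi$ is an exact potential on~$\Sigma^{\WB}$, as claimed.
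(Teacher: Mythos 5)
Your overall strategy --- viewing each pair $(v,t)$ as a Rosenthal resource, decomposing each player's cost over the unit time slots of their blind trajectory, and taking $\Psi(\sigma)=\sum_{v,t}\sum_{k=1}^{\ell_{v,t}(\sigma)}\weight(v)(k)$ as the potential --- is the natural one for this statement, and you have correctly isolated the one place where the classical telescoping can break: a player who has already reached $\tgt_i$ keeps contributing to $\load_{\config}$ without paying. The gap is in how you close that issue. The proposition is stated for $\FCGame$ as actually defined, in which players remain in the network (and in the loads) after reaching their targets; there is no sink vertex to route them to. The Remark you invoke introduces a \emph{different} game, and ``all our results can be adapted to this setting'' does not license substituting that game inside this proof: the post-target portion of a winning blind strategy changes the loads seen by still-active players, hence their costs, hence which profiles are $\Sigma^{\WB}$-Nash equilibria. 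If you want to canonicalize post-target behaviour, you must additionally show that doing so neither destroys nor manufactures $\Sigma^{\WB}$-equilibria (or else prove the potential identity with the post-target occupancies handled explicitly); that argument is missing.

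Moreover, even granting the sink-augmented model, the mismatch does not fully disappear in the asymmetric case: since every action must propose a strictly later date, a player arriving at $\tgt_i$ at date $T_i$ still occupies $\tgt_i$ during $[T_i,T_i+1)$ before the sink edge can fire, so an active player $j$ with $\tgt_j\neq\tgt_i$ waiting at $\tgt_i$ during that interval pays a load that counts the non-paying player $i$. A deviation by $i$ that shifts $T_i$ then changes $\cost_j$ with no compensating change in $\cost_i$, and for either convention on $\ell_{v,t}$ (counting or not counting finished players) the identity $\Psi(\sigma_{-i},\sigma'_i)-\Psi(\sigma)=\cost_i(\Outcome{\sigma_{-i},\sigma'_i}_{\start})-\cost_i(\Outcome{\sigma}_{\start})$ fails on exactly those slots; the uncompensated terms occur with both signs, so ordinality is not rescued either. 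In the symmetric case this collision cannot occur (any player at $\tgt$ is by definition already winning), so your argument essentially goes through there, but the proposition is claimed for all TNGs. The telescoping computation itself is fine; what is needed is an explicit treatment of the time slots where finished and active players share a vertex.
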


Moreover
each $\Sigma^{\WB}$-Nash equilibrium corresponds to a Nash equilibrium with the same cost profile (Proposition~\ref{prop:ENrestToEN}). 
Notice that the converse result fails to hold, as proved in Proposition~\ref{prop-interest}.
\begin{restatable}{proposition}{propENrestToEN}
  \label{prop:ENrestToEN}
  Let $\timedNetGame$ be a timed network game. Let~$\FCGame$
  be its associated finite concurrent game, and $\Sigma^{\WB}$ the set
  of winning blind strategy profiles.
  If~there exists a $\Sigma^{\WB}$-Nash equilibrium~$\sigma$ in
  $\FCGame$ from~$\start$, then there exists
  a Nash equilibrium $\tau$ from~$\start$ in $\FCGame$ with the same costs for all players.
  \end{restatable}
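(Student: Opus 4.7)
The plan is to show that the given $\Sigma^{\WB}$-Nash equilibrium~$\sigma$ is in fact already a Nash equilibrium with respect to all strategies, so that we can simply take $\tau=\sigma$, trivially preserving the cost profile.

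Assume, for contradiction, that some player~$i$ has a profitable deviation $\sigma'_i\in\Sigma_i$ (possibly non-blind), so that $\Fcost_i(\Outcome{\sigma_{-i},\sigma'_i}_{\start}) < \Fcost_i(\Outcome{\sigma}_{\start})$. Since $\sigma\in\Sigma^{\WB}$ consists of winning strategies, both sides are finite; in particular $\sigma'_i$ drives player~$i$ to~$\tgt_i$, so the trajectory $\pi_i := \traj_i(\Outcome{\sigma_{-i},\sigma'_i}_{\start})$ reaches~$\tgt_i$. By the converse to Lemma~\ref{lemma-blindpath}, pick a blind strategy $\tilde\sigma_i$ with $\langle\tilde\sigma_i\rangle_{\start} = \pi_i$; Lemma~\ref{lemma-blindpath} then guarantees that $\tilde\sigma_i$'s trajectory is~$\pi_i$ against every opponent profile, and since $\pi_i$ reaches~$\tgt_i$ we have $\tilde\sigma_i\in\Sigma^{\WB}_i$.

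The heart of the proof is to establish that $\Fcost_i(\Outcome{\sigma_{-i},\tilde\sigma_i}_{\start}) = \Fcost_i(\Outcome{\sigma_{-i},\sigma'_i}_{\start})$. In both plays every player's trajectory coincides: player~$i$'s is~$\pi_i$ by construction, and each player~$j\neq i$ plays the blind strategy~$\sigma_j$ and therefore follows a fixed trajectory (Lemma~\ref{lemma-blindpath}), regardless of what player~$i$ does. The claim to prove is that once all trajectories are fixed, the sequence of timed configurations $(\config_m,\dat_m)_{m\in\bbN}$ visited by any play consistent with these trajectories is uniquely determined. This is shown by induction on~$m$: given $(\config_m,\dat_m)$, the next game time~$\dat_{m+1}$ must equal the minimum over all players of their next move time (as specified by their trajectories), since any player whose next move is strictly later than this minimum must propose a strictly larger time (otherwise they would be selected and move prematurely, violating their trajectory), whereas a player whose next move time equals this minimum must propose exactly the corresponding transition; the configuration~$\config_{m+1}$ then follows. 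Since $\Fcost_i$ depends only on the sequence of source configurations and transition times, not on the action vectors themselves, both plays yield the same cost for player~$i$.

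This produces $\tilde\sigma_i\in\Sigma^{\WB}_i$ as a profitable deviation for player~$i$, contradicting that $\sigma$ is a $\Sigma^{\WB}$-Nash equilibrium. Hence $\sigma$ is a full Nash equilibrium, and $\tau=\sigma$ concludes. The main obstacle is the technical verification in the previous paragraph that the sequence of timed configurations is fully determined by the trajectories alone; this requires a careful case analysis of the action-selection semantics of~$\FCGame$, particularly the tie-breaking in $\Select(\vec a)$.
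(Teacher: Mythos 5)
Your proposal is correct and follows the same route as the paper: you show that the $\Sigma^{\WB}$-Nash equilibrium is already a Nash equilibrium against arbitrary strategies (so $\tau=\sigma$ works), by projecting any profitable deviation onto the trajectory it induces, replacing it with the corresponding winning blind strategy via Lemma~\ref{lemma-blindpath}, and checking that against the fixed trajectories of the blind opponents the resulting play visits the same timed configurations (hence incurs the same cost) up to the point where the deviator reaches their target. The inductive determinacy argument you flag as the main technical step is indeed the crux, and your case analysis of $\Select$ (selected players' trajectory times equal the minimum proposal, non-selected players' next trajectory times are strictly later because dates strictly increase along a play) is the right way to close it.
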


\subsection{Computation of Nash equilibria}\label{sec-computeNE}

\paragraph{Characterization of outcomes of Nash equilibria.}

In this section, we~develop an algorithm for deciding (and computing) the existence of
 a Nash equilibrium satisfying a given constraint on
the costs of the outcome. We~do this by computing the maximal
punishment that a coalition can inflict to a deviating player. This
value can be computed using classical techniques in a two-player
zero-sum concurrent game: from a timed configuration~$s=(\config,\dat)$,
this value is defined as
\[
\LowVal_i(s) = \sup_{\sigma_{-i} \in \Sigma_{-i}(s)} \;
\infp_{\sigma_{i}\in\Sigma_i(s)} \cost_i (\InitOutcome{\sigma_{-i},\sigma_i}{s}).
\]

For all histories $h = (s_k,\vec a_k, s'_k)_{0\leq k < \ell}$ in
$\HistFG(s)$, we~let $\Visit(h)$ be the set of players who visit their
target vertex along $h$. Formally, for the empty history
from~$s=(\config,\dat)$, we let $\Visit(h)=\{ 1\leq i \leq\nbp \mid
  \config(i)=\tgt_i\}$. If~$h = (s_k,\vec a_k, s'_k)_{0\leq k <
    \ell}$ is non-empty, writing~$s_k=(\config_k,\dat_k)$ for
  all~$0\leq k\leq\ell$, we~have $\Visit(h) = \{ 1\leq i \leq\nbp \mid
  \exists 1 \leq k \leq \ell.\ \config_k(i) = \tgt_i\}$.

The following theorem is a characterization of outcomes of Nash
equilibria.
Similar characterizations were proven in~\cite{Bertrand20}
for (untimed) network congestion games, and in~\cite{Klimos12,Almagor18}
for generic concurrent games:

\begin{restatable}{theorem}{thmcritNE}
\label{thm:critNE}
Let  $\timedNetGame$ be a timed network congestion game and
$\FCGame$ be its associated finite concurrent game.
A~play $\rho=(s_k,\vec a_k,s'_k)_{k\in\bbN} \in \PlaysFG(\start)$
is the outcome of a Nash equilibrium from~$\start$ in $\FCGame$ if,
and only~if,
\begin{multline}
\forall 1\leq i\leq\nbp.\  \forall k \in \N.\
\forall b_i \in \FAllowed_i(s_k).\
i \not\in \Visit(\rho_{<k}) \implies \label{eq-NE}\\
\Fcost_i(\rho_{\geq k}) \leq \LowVal_i(s') + \Fcost_i(s_k, (\vec a_{k,-i}, b_i),s')
\end{multline}
where $s' = \FUpdate(s_k, (\vec a_{k,-i}, b_i))$.
\end{restatable}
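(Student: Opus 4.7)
The plan is to prove Theorem~\ref{thm:critNE} by the classical one-step-deviation principle adapted to concurrent games: a~play is the outcome of an~NE if, and only~if, at every position no single player can strictly reduce their cost by switching to a different action and then best-responding, where the minimal cost achievable for the deviating player is upper-bounded by the lower value~$\LowVal_i$. For the $(\Rightarrow)$ direction, assume $\rho = \Outcome{\sigma}_{\start}$ for some NE~$\sigma$, fix~$i$, $k$, $b_i \in \FAllowed_i(s_k)$ with $i\notin \Visit(\rho_{<k})$, and let $s' = \FUpdate(s_k, (\vec a_{k,-i}, b_i))$ and $h = \rho_{<k}\cdot(s_k,(\vec a_{k,-i},b_i),s')$. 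I~consider the deviation~$\sigma'_i$ that agrees with~$\sigma_i$ before step~$k$, plays~$b_i$ at step~$k$, and from~$s'$ onwards plays a best-response of Player~$i$ against $\sigma_{-i\restriction h}$. Since $i\notin\Visit(\rho_{<k})$, the~NE inequality $\Fcost_i(\rho)\leq \Fcost_i(\Outcome{\sigma_{-i},\sigma'_i}_{\start})$ decomposes at step~$k$, and cancelling the common prefix cost yields
\[
\Fcost_i(\rho_{\geq k}) \leq \Fcost_i(s_k, (\vec a_{k,-i}, b_i), s') + \inf_{\tau_i}\Fcost_i(\Outcome{\sigma_{-i\restriction h},\tau_i}_{s'}).
\]
The inner infimum is at most $\sup_{\tau_{-i}}\inf_{\tau_i}\Fcost_i(\Outcome{\tau_{-i},\tau_i}_{s'})=\LowVal_i(s')$, yielding~\eqref{eq-NE}.

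For the $(\Leftarrow)$ direction, assuming~\eqref{eq-NE}, I~would exhibit a trigger-style profile~$\sigma$ with outcome~$\rho$ that is an~NE. As~long as the history is a prefix of~$\rho$, each player plays the action prescribed by~$\rho$; as~soon as a \emph{single} player~$i$ deviates by playing $b_i \ne \vec a_{k,i}$ at some step~$k$, from the resulting state $s' = \FUpdate(s_k, (\vec a_{k,-i}, b_i))$ the coalition switches to an optimal punishing profile $\sigma^\star_{-i}$ that realizes the supremum in the definition of $\LowVal_i(s')$. Such a profile exists because plays in~$\FCGame$ stabilize after step~$\MaxTime+1$, making $\FCGame$ essentially a bounded-horizon concurrent game with integer-valued costs, so the $\sup$-$\inf$ is attained as a $\max$-$\min$. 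Histories with simultaneous multi-player deviations are completed arbitrarily, since the NE property only quantifies over unilateral deviations. To~verify that $\sigma$ is indeed an NE, consider any unilateral deviation~$\sigma'_i$ of Player~$i$ and let~$k$ be the first step at which $b_i :=\sigma'_i(\rho_{<k}) \ne \vec a_{k,i}$. If~$i\in\Visit(\rho_{<k})$, Player~$i$'s cost is unchanged since~$\Fcost_i$ only counts up to the first visit to~$\tgt_i$. Otherwise, the punishing property of~$\sigma^\star_{-i}$ gives
\[
\Fcost_i(\Outcome{\sigma_{-i},\sigma'_i}_{\start}) \geq \Fcost_i(\rho_{<k}) + \Fcost_i(s_k, (\vec a_{k,-i}, b_i), s') + \LowVal_i(s'),
\]
and~\eqref{eq-NE} shows this to be $\geq \Fcost_i(\rho)$, so $\sigma'_i$ is not profitable.

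The main technical subtlety is the concurrent nature of the game: unlike turn-based settings, identifying the ``first unilateral deviation'' in a history requires comparing entire action vectors to those of~$\rho$, and the definition of the trigger strategies must remain consistent with this identification. A~secondary point is justifying the existence of a coalition profile~$\sigma^\star_{-i}$ that enforces cost at least $\LowVal_i(s')$ against \emph{every} continuation of Player~$i$; this follows from standard backward induction on~$\FCGame$ with the reachability-cost objective~$\Fcost_i$, which yields an optimal coalition profile in the associated two-player zero-sum game.
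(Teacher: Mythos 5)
Your proposal is correct and follows essentially the same route as the paper's proof: the forward direction decomposes the NE inequality at the deviation point, cancels the common prefix, and bounds the resulting infimum by the sup-inf defining $\LowVal_i(s')$; the backward direction builds the same trigger profile that tracks the first unilateral deviation and switches the coalition to an optimal punishing strategy of the associated two-player zero-sum game. The technical subtleties you flag (identifying the deviator from action vectors, existence of optimal coalition strategies in the bounded-horizon game) are exactly the ones the paper addresses.
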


The values $\LowVal_i(s)$ can be computed by transforming the finite
game~$\FCGame$ into a two-player game, since the deviating player
competes against the coalition of all the other players.  In~this
transformation, we do not need to keep track of the position of all
the other players individually, since their aim now only is to maximize the cost
for the deviating player. This two-player game thus has size exponential,
and each $\LowVal_i(s)$ can be computed in exponential time. Then:
\begin{restatable}{proposition}{propprobTWOEXPSAPCE}
  Problem~\ref{pb:constrainedNE} can be decided in \EXPSPACE, both for the symmetric and for the asymmetric cases.
\end{restatable}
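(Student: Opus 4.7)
The plan is to use Theorem~\ref{thm:equivNE} to reduce the search for a Nash equilibrium in $\CGame$ to a search in the finite game~$\FCGame$, and then apply the local characterization of Theorem~\ref{thm:critNE} while nondeterministically unfolding a candidate outcome. Since nondeterministic exponential space coincides with $\EXPSPACE$ by Savitch's theorem, it suffices to exhibit a nondeterministic procedure that runs in exponential working space.

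First, I~would guess a cost profile $x\in\bbN^\nbp$ with $x_i\leq\MaxTime$ for every~$i$---each coordinate has polynomial bit length since $\MaxTime$ is exponential in the input---and verify that $x$ satisfies the constraints $(\phi_q)_q$. Then I~would generate an outcome play $\rho=(s_k,\vec a_k,s'_k)_{k\geq 0}$ step by step, keeping in memory only the current timed configuration~$s_k$ (of exponential size, since $\nbp$ is given in binary), the action vector~$\vec a_k$, the running cost vector $c^{<k}$, and the set $\Visit(\rho_{<k})$. By Lemma~\ref{lemma:boundedLength} the relevant prefix has length at most $\MaxTime+1$, which is also exponential, so this unfolding fits in exponential space. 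At~each step, for every player $i\notin\Visit(\rho_{<k})$ and every alternative action $b_i\in\FAllowed_i(s_k)$, I~would compute $s''=\FUpdate(s_k,(\vec a_{k,-i},b_i))$ and the punishment value $\LowVal_i(s'')$, and check the inequality
\[
x_i - c^{<k}_i \;\leq\; \LowVal_i(s'') + \Fcost_i(s_k,(\vec a_{k,-i},b_i),s''),
\]
which encodes the condition of Theorem~\ref{thm:critNE} provided that the final accumulated cost equals~$x$ (an equality that is verified once the play enters its terminal self-loop).

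The main obstacle is that neither $\FCGame$ nor the auxiliary two-player game used to compute $\LowVal_i(s'')$ can be tabulated explicitly, so every punishment value must be recomputed on demand. In the two-player game where Player~$i$ plays against the coalition of the other players, a state can be encoded by the position and current date of Player~$i$ together with a load vector over~$\Vertex$, giving polynomial-size states and exponentially many of them; solving such a total-cost concurrent game by classical techniques takes exponential time and space, which is within our budget. A~second subtlety is that the NE condition refers to the suffix cost $\Fcost_i(\rho_{\geq k})$, which would naively require knowing the future of~$\rho$; committing to the total cost~$x$ at the outset and tracking~$c^{<k}$ sidesteps this and allows the whole verification to be carried out in a single forward pass. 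Both arguments are insensitive to whether the game is symmetric or asymmetric, and together they yield the claimed \EXPSPACE upper bound.
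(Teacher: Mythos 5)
Your proposal is correct and follows essentially the same route as the paper: reduce to the finite game $\FCGame$ via Theorem~\ref{thm:equivNE}, nondeterministically guess an outcome of length at most $\MaxTime$, verify the characterization of Theorem~\ref{thm:critNE} at each step while recomputing the values $\LowVal_i$ on demand (since tabulating them would take doubly-exponential space), and conclude by Savitch's theorem. The only cosmetic difference is that the paper stores the whole guessed play in exponential space whereas you stream it, guessing the cost profile $x$ upfront to recover the suffix costs $\Fcost_i(\rho_{\geq k})$; both fit the same bound.
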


\begin{proof}
  Let $\timedNetGame$ be a timed network game
  and $(\phi_q)_q$ be a set of linear constraints.
  We prove that we
can decide in exponential space the existence of a Nash
equilibrium~$\sigma$ in $\FCGame$ such that
$\Fcost(\InitOutcome{\sigma}{\start})$
satisfies all linear constraints~$(\phi_q)_q$.
By~Theorem~\ref{thm:equivNE}, this entails the same result in $\CGame$.

As already argued, each play~$\rho$ in $\FCGame$ ends up in a loop
after at most $\MaxTime$ steps. Our algorithm will guess such a play
until the loop; the~play can be stored in exponential space. The~algorithm
will then
check that it is a valid play, that Eq.~\ref{eq-NE} holds at each
step (which requires to compute $\LowVal$),
compute the costs paid by the players until they reach their
targets, and check that the cost constraints~$(\phi_q)_q$ are
satisfied.

This algorithm uses pseudo-polynomial space, as it has to store a
polynomial quantity of data for each player. Notice that the algorithm
would not store all $\LowVal$ values as the number of values would be
doubly-exponential; instead, it will re-compute those values on-demand.
\end{proof}

\begin{remark}
Our definition of plays and histories include the action vectors at
each step, which implies that strategies observe the actions of all
players and can base their decisions on that information. In~our
setting, however, it~is possible to identify one of the deviating
players based only on the configurations; in~particular, if a single
player deviates, they can be identified and punished. As~a
consequence, our~results still hold in the setting where strategies
may only depend on the sequence of timed configurations.
\end{remark}

\section{Social optimality and prices of anarchy and stability}
\label{section:socOptiPriceAnarch}
In this section, given a timed network game $\timedNetGame$, we study
the social optimum~$\SocOpti_{\timedNetGame}$. In~order to obtain the
best social welfare, the~players aim at minimizing the sum of their
costs whatever their selfish interrest. Thus we want to find a play
from~$\start$ in the concurrent game~$\CGame$ such that the sum of the
costs of all the players is as small as possible. 
We~also consider the price of stability $\POS_{\timedNetGame}$ (resp. of anarchy $\POA_{\timedNetGame}$) to know how far is the social optimum from the best (resp. worst) social welfare of a Nash equilibrium from $\start$ in $\timedNetGame$.

Before studying the social optimum and the prices of anarchy and stability, we explain how to solve the constrained-social-welfare problem both in asymmetric and symmetric timed network games. 
First notice that the following lemma is a consequence of Lemma~\ref{lemma:boundedLength}:
\begin{lemma}
\label{lem:boundedTimeSW}
For all plays $\rho \in \PlaysG(\start)$ such that $\SW(\rho)$ is finite,
for all ${i \in \Players}$, there exists $k_i \leq \SW(\rho)$ such that
$\rho_{k_i}=(\config_{k_i}, \dat_{k_i})$ with 
\emph{(i)}~$\config_{k_i}(i) = \tgt_i$ and
\emph{(ii)}~$\dat_{k_i} \leq \SW(\rho)$.
\end{lemma}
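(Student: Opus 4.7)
The plan is to derive this lemma as a direct corollary of Lemma~\ref{lemma:boundedLength}. The key observation is that, by definition, $\SW(\rho) = \sum_{1\leq i\leq\nbp} \cost_i(\rho)$, and each individual cost $\cost_i(\rho)$ is either a non-negative integer or $+\infty$ (all per-transition costs $(\dat'-\dat)\cdot \weight(\config(i))(\load_{\config}(\config(i)))$ are non-negative since $\dat'>\dat$ and $\weight$ takes values in~$\bbN^*$, and $\cost_i(\rho)=+\infty$ only when Player~$i$ never reaches their target).

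First I would observe that the hypothesis that $\SW(\rho)$ is finite implies that $\cost_i(\rho)<+\infty$ for every player~$i$: indeed, if some $\cost_{i_0}(\rho) = +\infty$, then the sum would also be $+\infty$, contradicting finiteness of $\SW(\rho)$. Moreover, since all terms in the sum are non-negative, we~obtain the bound $\cost_i(\rho) \leq \SW(\rho)$ for every $1\leq i\leq\nbp$.

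Next, I~would apply Lemma~\ref{lemma:boundedLength} to each player~$i$: since $\cost_i(\rho)$ is finite, there exists a position $k_i \leq \cost_i(\rho)$ such that $\rho_{k_i}=(\config_{k_i},\dat_{k_i})$ with $\config_{k_i}(i) = \tgt_i$ and $\dat_{k_i} \leq \cost_i(\rho)$. Combining with the inequality $\cost_i(\rho) \leq \SW(\rho)$ established above, we~immediately get $k_i \leq \SW(\rho)$ and $\dat_{k_i} \leq \SW(\rho)$, as required.

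There is no real obstacle here; the statement is a quantitative restatement of Lemma~\ref{lemma:boundedLength} obtained by replacing the per-player bound~$\cost_i(\rho)$ with the looser bound~$\SW(\rho)$, which only requires monotonicity of the sum of non-negative quantities. The~only subtle point worth mentioning is justifying that individual costs are non-negative (and therefore bounded by the sum), which follows from the form of $\cost_i$ given in the definition of~$\CGame$.
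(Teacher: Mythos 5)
Your proof is correct and follows exactly the route the paper intends: the paper simply states that the lemma is a consequence of Lemma~\ref{lemma:boundedLength}, and your argument (finiteness of $\SW(\rho)$ forces each $\cost_i(\rho)$ to be finite and, by non-negativity of the individual costs, bounded by $\SW(\rho)$, after which Lemma~\ref{lemma:boundedLength} applies per player) is precisely the missing justification. Nothing to add.
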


\subsection{Constrained-social-welfare problem: asymmetric case}
First of all, let us assume that the players' objectives are
asymmetric. In~this setting, Problem~\ref{pb:constrainedSW}
can be solved by
non-deterministically guessing a (finite) play in~$\CGame$ step-by-step:
Lemma~\ref{lem:boundedTimeSW} gives a polynomial bound on the size of the configurations to be guessed;
keeping track of the set of players who reached their targets
requires exponential space. By Savitch's theorem, we~get:

\begin{restatable}{proposition}{propconstSWEXPSPACEasym}
\label{prop:constSWEXPSPACEasym}
The constrained-social-welfare problem (Problem~\ref{pb:constrainedSW})
can be decided in \EXPSPACE if the players' objectives are asymmetric.
\end{restatable}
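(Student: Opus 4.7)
The plan is to describe a non-deterministic algorithm deciding Problem~\ref{pb:constrainedSW} in exponential space, and then appeal to Savitch's theorem. The algorithm will guess a play in~$\CGame$ from~$\start$ one transition at a time, accepting iff the accumulated social welfare stays at most~$x$ until every player reaches their target.

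Concretely, at each round the algorithm maintains three pieces of data on its work tape: the current timed configuration $(\config,\dat)$; a bit vector $F \in \{0,1\}^\nbp$ recording which players have already reached their target vertex; and a running total $S$ of the social welfare accumulated so far. Initial values are $(\config,\dat)=\start$, $F_i = 1$ iff $\src_i = \tgt_i$, and $S=0$. At each step the algorithm non-deterministically guesses an action vector $\vec a \in \Valid(\config,\dat)$, computes $(\config',\dat') = \Update((\config,\dat),\vec a)$, increments $S$ by $\sum_{i : F_i = 0} \cost_i((\config,\dat),\vec a,(\config',\dat'))$ so that players who have already reached their target no longer contribute, updates $F$ to mark any newly-finished players, and rejects immediately as soon as $S > x$. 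It accepts once $F$ becomes the all-ones vector.

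Correctness rests on Lemma~\ref{lem:boundedTimeSW}: for any play $\rho$ with $\SW(\rho) \leq x$, every player reaches their target within at most $x$ steps at a date bounded by~$x$. Hence a play realising social welfare at most~$x$ exists iff the algorithm has an accepting branch, and along any such branch the guessed dates stay below~$x+1$, so the simulation can be truncated accordingly.

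For the space analysis, a timed configuration uses $O(\nbp \log |\Vertex| + \log x)$ bits and a guessed action vector uses $O(\nbp(\log |\Vertex| + \log x))$ bits; since $\nbp$ is encoded in binary in the input, both quantities are exponential in the input size. The bit vector $F$ contributes another $\nbp$ bits, also exponential, while $S$ and a step counter fit in polynomial space. The procedure therefore runs in non-deterministic exponential space, and Savitch's theorem converts this into deterministic~\EXPSPACE. The main subtlety — not a deep obstacle, but the point worth checking carefully — is that the $\nbp$-indexed objects (the configuration and the finished-player vector) inherently require exponential space in the asymmetric setting, where no symmetry between players can be exploited to compress them; once this is accepted, everything else is routine bookkeeping.
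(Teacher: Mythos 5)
Your proposal is correct and follows essentially the same route as the paper's proof: guess the play step by step, track the set of finished players and a running social-welfare counter bounded via Lemma~\ref{lem:boundedTimeSW}, observe that the configuration and the finished-player record are the exponential-space bottleneck (since $\nbp$ is given in binary), and conclude by Savitch's theorem. The only cosmetic difference is that you guess action vectors explicitly where the paper guesses successor timed configurations directly.
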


\subsection{Constrained-social-welfare problem: symmetric case}

For symmetric TNGs, the objectives of the players are identical: there exist
$\src$ and $\tgt$ in~$\Vertex$ such that $\src_i=\src$ and $\tgt_i=\tgt$ for all~$1\leq i\leq\nbp$.

We could of course reuse the algorithm we developed for the asymmetric case, resulting in an \EXPSPACE algorithm. 
Nevertheless, we can refine this approach by considering a weighted graph in which we only take into account \emph{abstract timed configurations}. An~abstract timed configuration $\tilde{\config}$ is a tuple  $(P_A,P_W,d) \in [0,\nbp]^{\Vertex} \times [0,\nbp]^{\Vertex} \times\bbN$ where \emph{(i)}~$P_A$~maps each vertex to the number of \emph{active players} (players who have not visited the target vertex yet) in that vertex; \emph{(ii)}~$P_W$~maps each vertex to the number of \emph{ winning players} (players who have already visited their target set) in that vertex and \emph{(iii)}~$d$~is the current time.

Abstract timed configurations store enough information to compute the social welfare of a play in symmetric TNGs and give rise to a weighted graph $\mathcal{W} = (A,B, \tilde{w})$ (see App.~\ref{ann-section4} for a formal definition).
The set $A$ is the set of abstract timed configurations, $B \subseteq A \times A$ is the set of edges such that there exists an edge $(\tilde{\config_1},\tilde{\config_2}) \in B$ between two abstract timed configurations $\tilde{\config_1}$ and $\tilde{\config_2}$ if there exists a valid action for each player regarding their position given by $\tilde{\config_1}$ such that updating  $P_A$, $P_B$ and the current time w.r.t. this action vector leads to the abstract configuration $\tilde{\config_2}$. The weight function $\tilde{w}: B \rightarrow \N$ represents the sum of the costs of active players for an edge. Notice that the winning players are taken into account to compute the cost of an active player since their presence in a vertex influences the load in that vertex.

A path~$p$ in $\calW$ is a finite sequence of abstract timed configurations consistent with
the graph structure $(A,B)$, starting from the initial vertex
$(\tilde{\start}; \{0\}^\Vertex; 0)$ (assuming $\src\not=\tgt$) with $\tilde{\start}: V \rightarrow [0,\nbp]: v \mapsto \#\{1 \leq i \leq n \mid \InitConfig(i) = v \}$.
The~cost of a path~$p$ in $\calW$ is
either the sum of the weights $\tilde{w}(a_1,a_2)$ along the path
until visiting a final vertex (where $P_A(\vertex)=0$ for all
$\vertex\in\Vertex$), or $+\infty$ if no such vertices appear
along~$p$.
Clearly enough, the abstract weighted graph encodes the
trajectories of~$\timedNetGame$ in the following sense:
\begin{lemma}
\label{lemma:abstractPath}
Let $\timedNetGame$ be a timed network game and $\mathcal{W}$ be its associated abstract weighted graph.
For all $c \in \N$, there exists a play $\rho \in \Plays_\CGame(\start)$  such that $\SW(\rho) = c$ if, and only~if, there exists a path $p$ in $\mathcal{W}$ with cost $c$.
\end{lemma}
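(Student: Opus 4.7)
The plan is to prove both directions by setting up a natural abstraction map from timed configurations to abstract timed configurations, and then verifying that it is a cost-preserving bijection between plays (up to the point where everyone reaches~$\tgt$) and paths in~$\calW$.

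\textbf{Forward direction.} Given a play $\rho=(s_k,\vec a_k,s'_k)_{k\geq 0}$ with $\SW(\rho)=c$ finite, I define the abstraction $\alpha(\config,d)=(P_A,P_W,d)$ where, for each vertex~$v$, $P_A(v)$ counts the players~$i$ with $\config(i)=v$ that have not yet visited~$\tgt$ along the prefix of~$\rho$ so far, and $P_W(v)$ counts those with $\config(i)=v$ that already did. By Lemma~\ref{lem:boundedTimeSW} there is a position $k^\star\leq c$ at which every player has already visited~$\tgt$; I set $p=(\alpha(s_0),\ldots,\alpha(s_{k^\star}))$. The sequence is well-defined at the boundary $\tgt$-visits because "active" becomes "winning" exactly on the step a player enters~$\tgt$, and this is the same information encoded in~$P_A,P_W$. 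Each concrete transition $s_k\to s_{k+1}$ arising from a valid action vector~$\vec a_k$ witnesses the existence of a valid action vector for the abstract edge $(\alpha(s_k),\alpha(s_{k+1}))\in B$. Finally, $\tilde w(\alpha(s_k),\alpha(s_{k+1}))$ is by definition the sum of the costs incurred on this step by active players, so summing along~$p$ yields exactly~$\SW(\rho)$ (since any step at or beyond~$k^\star$ contributes~$0$ to both quantities; indeed $P_A\equiv 0$ at~$\alpha(s_{k^\star})$, so the path terminates in a final vertex).

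\textbf{Backward direction.} Given a path $p=(\tilde\config_0,\ldots,\tilde\config_\ell)$ in~$\calW$ with cost~$c$, I build a concrete play inductively. First I fix an assignment of identities: since the game is symmetric, $\tilde\config_0=(\tilde\InitConfig,\{0\}^\Vertex,0)$ corresponds to~$\InitConfig$ (all players at~$\src$, all active). Assume by induction I have a concrete configuration $(\config_k,d_k)$ with $\alpha(\config_k,d_k)=\tilde\config_k$ together with a bookkeeping of which players are active. Because $(\tilde\config_k,\tilde\config_{k+1})\in B$, there exists, for each position in~$\tilde\config_k$, a valid action witnessing the edge. I turn this into an action vector~$\vec a_k$ by, for each vertex~$v$, distributing the prescribed moves among the players currently sitting at~$v$ in~$\config_k$: this is where symmetry is essential, since any two players at~$v$ are interchangeable with respect to their remaining objective~$\tgt$. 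The resulting vector is valid and $\Update((\config_k,d_k),\vec a_k)$ abstracts to~$\tilde\config_{k+1}$. Once a final vertex of~$\calW$ is reached (no active players), every player has visited~$\tgt$, so we extend the finite prefix to an infinite play using the "non-blocking" guarantee of Remark~\ref{rk-winningtraj}; since all players are already winning, no further cost is incurred and $\SW(\rho)$ equals the cost of~$p$ by the same weight-matching argument as above.

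\textbf{Main obstacle.} The only delicate step is the concretization in the backward direction: we must consistently turn each abstract edge, whose existence is an existential statement, into a concrete action vector compatible with the player identities inductively maintained so far. The argument relies crucially on the symmetric assumption $\src_i=\src$, $\tgt_i=\tgt$: any two players located at the same vertex are fully interchangeable, and therefore the witnesses promised by the abstract edge can be freely matched to concrete players. Without symmetry the players at a vertex would carry distinct targets and this bijective realization would fail; this is precisely why the abstract graph $\calW$ is introduced only in the symmetric case.
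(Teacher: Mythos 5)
Your proof is correct and follows exactly the argument the authors intend: the paper itself states this lemma as ``clearly enough'' and provides no written proof beyond the formal definition of the abstract weighted graph in the appendix, so your abstraction map $\alpha$, the truncation at the position given by Lemma~\ref{lem:boundedTimeSW}, and the symmetry-based concretization in the backward direction supply precisely the missing details. The only point worth making explicit is that the non-blocking guarantee of Remark~\ref{rk-winningtraj} is also needed \emph{mid-play} in the backward direction, so that the players who are not supposed to move at a given step can propose a valid action at a strictly later date; this is the same fact you already invoke for extending the play to infinity.
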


The constrained-social-welfare problem for symmetric objectives can
then be solved by non-deterministically guessing the successive
vertices of a path~$p$ in~$\mathcal{W}$,
step-by-step. The~constraint~$c$ gives a bound on the length of the
path, so that the algorithm runs in polynomial space.

\begin{restatable}{proposition}{propconstrainedSWsym}
  \label{prop:constrainedSWsym}
The constrained-social-welfare problem
(Problem~\ref{pb:constrainedSW}) can be decided in \PSPACE if the
players' objectives are symmetric.
\end{restatable}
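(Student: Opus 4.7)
The plan is to exploit Lemma~\ref{lemma:abstractPath} and turn the constrained-social-welfare problem into a bounded-cost reachability problem in the abstract weighted graph~$\mathcal{W}=(A,B,\tilde w)$. The~key observation is that each abstract timed configuration $(P_A,P_W,d)$ admits a polynomial-size encoding: since $n$~is given in binary, the two maps $P_A,P_W\colon \Vertex\to[0,n]$ each require $O(|\Vertex|\cdot\log n)$ bits, and $d$~can be bounded by $\max(c,\MaxTime)$, whose binary representation is polynomial in the size of the input and of~$c$. Hence a single abstract configuration and a running cost accumulator fit in polynomial space.

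Next, I~would design a non-deterministic polynomial-space procedure that guesses a path~$p$ in~$\mathcal{W}$ step by step, maintaining only the current abstract configuration $\tilde{c}_k=(P_A^k,P_W^k,d_k)$ and a running total~$C_k$ of the weights~$\tilde w$ seen so far. To~produce a successor $\tilde{c}_{k+1}$, the~procedure guesses the next date $d_{k+1}>d_k$, and, for every edge $(v,v')\in\Edge$ with $d_{k+1}\in\guard(v,v')$, guesses two numbers $a_{v,v'},w_{v,v'}\in[0,n]$ describing how many active and winning players move from~$v$ to~$v'$ at that date. Using Remark~\ref{rk-winningtraj}, the remaining players at each vertex can always propose a strictly later action, so the guess induces a legitimate abstract transition; the procedure checks the consistency constraints $\sum_{v'}a_{v,v'}\le P_A^k(v)$ and $\sum_{v'}w_{v,v'}\le P_W^k(v)$, updates $P_A^{k+1},P_W^{k+1}$ (routing players arriving at~$\tgt$ into~$P_W^{k+1}$), computes the contribution to the social welfare from the active players using the loads $P_A^k+P_W^k$ and the weight functions, and adds it to~$C_k$. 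All arithmetic involves numbers bounded by $n\cdot\MaxTime$, so it~fits in polynomial space. The~procedure accepts as soon as it reaches an abstract configuration with $P_A\equiv 0$ and $C_k\le c$, and it rejects (or aborts the branch) whenever $C_k>c$.

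Termination in polynomial length is guaranteed because we may stop exploring as soon as $C_k>c$: by Lemma~\ref{lem:boundedTimeSW} applied inside~$\mathcal{W}$, any witnessing path of total weight at most~$c$ has dates bounded by~$c$ and can be truncated to polynomially many steps (between any two consecutive moves at least one time unit elapses, and each time unit costs at least one by the assumption that weights are in~$\bbN^*$, so the number of steps is at~most~$c$). Correctness then follows from Lemma~\ref{lemma:abstractPath}: a~path of cost $\le c$ in~$\mathcal W$ exists iff a play of social welfare $\le c$ exists in~$\CGame$.

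The procedure above runs in non-deterministic polynomial space, so by Savitch's theorem, Problem~\ref{pb:constrainedSW} is in~\PSPACE in the symmetric case. The~main subtle point is the bookkeeping inside one abstract transition: one must check that the numbers of moving and waiting players are jointly realisable, but this reduces to the elementary inequalities above plus the invariant provided by Remark~\ref{rk-winningtraj}, which guarantees that any player opting to wait has at least one admissible future move.
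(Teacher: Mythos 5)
Your proposal is correct and follows essentially the same route as the paper: a non-deterministic step-by-step guess of a path in the abstract weighted graph $\mathcal{W}$, storing only the current abstract configuration and an accumulated cost, with the threshold bounding the dates and the number of steps, and Savitch's theorem closing the argument. The paper's own proof is a two-sentence sketch of exactly this; your version merely spells out the encoding sizes, the realisability checks on the $b_A,b_W$ guesses, and the appeal to Lemma~\ref{lemma:abstractPath}, all of which are consistent with the paper's definition of $\mathcal{W}$.
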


\subsection{Social Optimum and  Prices of Anarchy and Stability}

\paragraph{Optimum Social.}
We now explain how we can compute the exact social optimum: noticing
that the social optimum can be bounded by $\nbp\cdot\MaxTime$, it can
be computed by performing a binary search, iteratively applying the
algorithm above. Computing the social optimum can thus be performed in
polynomial space in the symmetric case, and exponential space in the
asymmetric case.

\paragraph{Prices of Anarchy and Stability.}
The constrained-price-of-anarchy (resp. stability) problem can now be solved using our
algorithms for solving the constrained-social-welfare 
and constrained-Nash-equilibrium problems: thanks to the pseudo-polynomial bound
$\nbp\cdot\MaxTime$ on the social welfare of the social optimum
and on the social welfare of any Nash equilibrium,
and the fact that those values are integers,
we~can perform binary searches for the
exact social welfare of the social optimum
and for the worst (resp. best) social welfare of a Nash
equilibrium.  This only requires a polynomial number of iterations,
so that the whole algorithm runs in exponential space.

%\bibliographystyle{alpha}
%\bibliography{biblio}
\newcommand{\etalchar}[1]{$^{#1}$}

\clearpage
\appendix

\section{Proofs of Section~\ref{sec-prelim}}

\lemmablindpath*

\begin{proof}
For a contradiction, assume that this were not the case, and consider
two strategy profiles $(\sigma_{-i},\sigma_i)$ and
$(\sigma'_{-i},\sigma_i)$, and the first position~$j$ where
$\traj_i(\Outcome{\sigma_{-i},\sigma_i}_{s_0})$ and
$\traj_i(\Outcome{\sigma'_{-i},\sigma_i}_{s_0})$ differ. This position
corresponds to two (possibly different) positions~$k=\mu_i(j)$
and~$k'=\mu'_i(j)$ along $\Outcome{\sigma_{-i},\sigma_i}_{s_0}$
and $\Outcome{\sigma'_{-i},\sigma_i}_{s_0}$. 

The prefixes~$h$ and~$h'$ of
$\Outcome{\sigma_{-i},\sigma_i}_{s_0}$ and
$\Outcome{\sigma'_{-i},\sigma_i}_{s_0}$ up to positions~$k-1$
and $k'-1$ are projected on the same trajectory for Player~$i$, so that
$\sigma_i$ returns the same action for both histories. Moreover, by construction
of~$k$ and~$k'$, the~delay proposed in this action must be shorter than
(or~equal~to) the delays proposed by the strategies in~$\sigma_{-i}$
and~$\sigma'_{-i}$ after histories~$h$ and~$h'$;
then the same edge is applied for Player~$i$
from position~$k-1$ of $\Outcome{\sigma_{-i},\sigma_i}_{s_0}$
and position~$k'-1$ of $\Outcome{\sigma'_{-i},\sigma_i}_{s_0}$,
contradicting the fact that
$\traj_i(\Outcome{\sigma_{-i},\sigma_i}_{s_0})$ and
$\traj_i(\Outcome{\sigma'_{-i},\sigma_i}_{s_0})$ differ at
position~$j$.
\end{proof}

\section{Proofs of Section~\ref{section:existComputNE}}
\subsection{Details of the proof of Proposition~\ref{prop-interest}}

\propinterest*

\begin{proof}
We consider the timed network game $\timedNetGame = (3, \timedNet, (\src,\tgt))$ such that its timed network is given in Figure~\ref{ex:lowerSW}. 
First of all, we focus on strategy profiles composed only by blind strategies.
\begin{figure}[ht]
	\centering
	\scalebox{0.85}{\begin{tikzpicture}
		\node[draw, rounded corners=5pt] (s0) at (0,0){\nodeCong{1cm}{$\src$}{$x \mapsto 1$}};
		\node[draw, rounded corners=5pt] (s1) at (2,1.5){\nodeCong{1.5cm}{$s_1$}{$x \mapsto 2x$}};
		\node[draw, rounded corners=5pt] (s2) at (4,1.5){\nodeCong{1cm}{$s_2$}{$x\mapsto 1$}};
		\node[draw, rounded corners=5pt] (s3) at (6,1.5){\nodeCong{1cm}{$s_3$}{$x \mapsto 2$}};
		
		\node[draw, rounded corners=5pt] (s4) at (8,1.5){\nodeCong{1cm}{$s_4$}{$x \mapsto 1$}};
		
		\node[draw, rounded corners=5pt] (s5) at (10,1.5){\nodeCong{1cm}{$s_5$}{$x\mapsto 2$}};
		\node[draw, rounded corners=5pt] (s6) at (12,1.5){\nodeCong{1.5cm}{$s_6$}{$ x \mapsto 2x$}};
		
		\node[draw, rounded corners=5pt] (s7) at (2,-1.5){\nodeCong{1.5cm}{$s_7$}{$x \mapsto 3x$}};
		\node[draw, rounded corners=5pt] (s8) at (4,-1.5){\nodeCong{1cm}{$s_8$}{$x\mapsto x$}};
		\node[draw, rounded corners=5pt] (s9) at (6,-1.5){\nodeCong{1cm}{$s_9$}{$x \mapsto x$}};
		\node[draw, rounded corners=5pt] (s10) at (8,-1.5){\nodeCong{1cm}{$s_{10}$}{$x \mapsto x$}};
		\node[draw, rounded corners=5pt] (s11) at (10,-1.5){\nodeCong{1cm}{$s_{11}$}{$x \mapsto x$}};
		\node[draw, rounded corners=5pt] (s12) at (12,-1.5){\nodeCong{1cm}{$s_{12}$}{$x \mapsto x$}};
		
		\node[draw, rounded corners=5pt] (s13) at (14,0){\nodeCong{1cm}{$\tgt$}{$x\mapsto 1$}};
		
		\node[draw, rounded corners=5pt] (s14) at (4,0){\nodeCong{1cm}{$s_{14}$}{$x \mapsto x$}};
		
		\node[draw, rounded corners=5pt] (s15) at (8,0){\nodeCong{1cm}{$s_{15}$}{$x \mapsto 4$}};

		\draw[->] (s0) to node[left]{$[1]$} (s1);
		\draw[->] (s1) to node[above]{$[2]$} (s2);
		\draw[->] (s2) to node[above]{$[3]$} (s3);
		\draw[->] (s3) to node[above]{$[4]$} (s4);
		\draw[->] (s4) to node[above]{$[5]$} (s5);
		\draw[->] (s5) to node[above]{$[6]$} (s6);
		\draw[->] (s6) to node[above]{$[7]$}(s13);
		\draw[->] (s0) to node[above]{$[1]$} (s7);
		\draw[->] (s7) to node[above]{$[2]$} (s8);
		\draw[->] (s8) to node[above]{$[3]$} (s9);
		\draw[->] (s9) to node[above]{$[4]$} (s10);
		\draw[->] (s10) to node[above]{$[5]$} (s11);
		\draw[->] (s11) to node[above]{$[6]$} (s12);
		\draw[->] (s12) to node[above]{$[7]$} (s13);
		
		\draw[->] (s13) to [loop right] (s13);
		
		\draw[->] (s1) to node[left]{$[2]$} (s14);
		\draw[->] (s14)to node[left ]{$[3]$} (s9);
		
		\draw[->](s3) to node[left]{$[4]$} (s15);
		 \draw[->] (s15) to node[left]{$[5]$} (s11);		
		
	\end{tikzpicture}
	}
	\caption{Timed network game with a lower social welfare of an NE with non-blind stragies}
	\label{ex:lowerSW}
\end{figure}
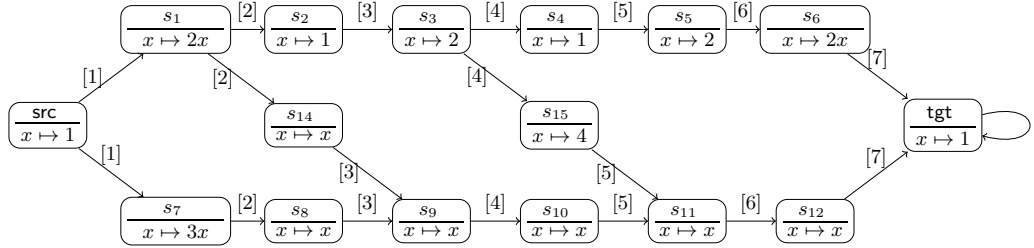

There are four different trajectories of blind strategies:

\begin{enumerate}
	\item $\pi_1 = (\src,0)(s_1,1)(s_2,2)(s_3,3)(s_4,4)(s_5,5)(s_6,6)(\tgt,7+k)_{k\geq 0}$
	\item $\pi_2 = (\src,0)(s_7,1)(s_8,2)(s_9,3)(s_{10},4)(s_{11},5)(s_{12},6)(\tgt,7+k)_{k\geq 0}$
	\item $\pi_3=(\src,0)(s_1,1)(s_{14},2)(s_9,3)(s_{10},4)(s_{11},5)(s_{12},6)(\tgt, 7+k)_{k\geq 0}$
	\item $\pi_4= (\src,0)(s_1,1)(s_2,2)(s_3,3)(s_{15},4)(s_{11},5)(s_{12},6)(\tgt,7+k)_{k\geq 0}$	
\end{enumerate}

By abuse of notation, in the rest of this example we identify a trajectory $\pi$ with a blind strategy $\sigma$ such that $\Outcome{\sigma}_{\start} = \pi$. 

 We consider the social welfare of the following strategy profiles:

\begin{minipage}{0.45\linewidth}
\begin{itemize}
	\item $(\pi_1,\pi_1,\pi_1) \leadsto 19 \cdot 3 = 57 $ 
	\item $(\pi_2,\pi_2,\pi_2) \leadsto 25 \cdot 3 = 75 $ 
	\item $(\pi_3,\pi_3,\pi_3) \leadsto 22 \cdot 3 = 66 $ 
	\item $(\pi_4,\pi_4,\pi_4) \leadsto 20 \cdot 3 = 60$ 
	
	\item[]
	
	\item $(\pi_1,\pi_1,\pi_2) \leadsto 15 \cdot 2 + 9 = \mathbf{39}$ 
	\item $(\pi_1,\pi_1,\pi_3) \leadsto 17 \cdot 2 + 12 = 46$ 
	\item $(\pi_1,\pi_1,\pi_4) \leadsto 17 \cdot 2 + 16 = 50 $
	\item[]
	\item $(\pi_2,\pi_2,\pi_1) \leadsto 17 \cdot 2 + 11 = 45$ 
	\item $(\pi_2,\pi_2,\pi_3) \leadsto 21 \cdot 2 + 16 = 58$ 
	\item $(\pi_2,\pi_2,\pi_4) \leadsto 19 \cdot 2 + 16 = 54$ 	
\end{itemize}
\end{minipage}\hfill
\begin{minipage}{0.45\linewidth}
	\begin{itemize}
		\item $(\pi_3,\pi_3,\pi_1) \leadsto  17 \cdot 2 + 15 = 49$ 
		\item $(\pi_3,\pi_3,\pi_2) \leadsto19 \cdot 2 + 17 = 55$ 
		\item $(\pi_3,\pi_3,\pi_4) \leadsto 19 \cdot 2 + 20 = 58$ 
		\item[]
		\item $(\pi_4,\pi_4,\pi_1) \leadsto 18 \cdot 2 + 15 = 51$ 
		\item $(\pi_4,\pi_4,\pi_2) \leadsto 18 \cdot 2 + 13 = 49$ 
		\item $(\pi_4,\pi_4,\pi_3) \leadsto 20 \cdot 2 + 16 = 56$ 
		\item[]
		\item $(\pi_1,\pi_2,\pi_3) \leadsto 13 + 13 + 14 = 40$
		\item $(\pi_1,\pi_2,\rho_4) \leadsto 13 + 11 + 16 = 40$ 
	 	\item $(\pi_1,\pi_3,\pi_4) \leadsto 15 + 14 + 18 = 47$ 	
		\item $(\pi_2,\pi_3,\pi_4) \leadsto 15 + 16 + 18 = 49$ 
	\end{itemize}
\end{minipage}

We prove that the social welfare of all Nash equilibria with blind strategies is greater than 39.
All social welfare of those blind strategy profiles are greater than 39 except $(\pi_1,\pi_1, \pi_2)$ but this is not a Nash equilibrium since with this strategy profile Player~$2$ has a cost of $15$ and he has an incentive to deviate from $\pi_1$ to $\pi_3$ and obtains a cost of $14$. \\

We now prove that there exists a Nash equilibrium from $\start$  with social welfare equal to $39$ and such that one player uses a non-blind strategy.
We consider the strategy profile $\sigma = (\sigma_1,\sigma_2,\sigma_3)$ where, roughly speaking,
Player $1$ and Player $2$ follow the trajectory $ \pi_1$ and
Player~$3$ follows the trajectory $\pi_2$.
If Player 3-$i$, for $i \in \{1,2\}$, deviates in $\src$ then Player~$i$ follows trajectory $\pi_3$.
 If Player 3-$i$, for $i \in \{1,2\}$, deviates in $s_1$ then Player~$i$ follows trajectory $\pi_4$. 

The outcome of $\sigma$ from $\start$ is: 

\begin{align*}&\start \xrightarrow{ \begin{bmatrix} (s_1,1) \\ (s_1,1) \\ (s_7,1) \end{bmatrix}} ((s_1,s_1,s_7),1) 
\xrightarrow{\begin{bmatrix} (s_2,2) \\ (s_2,2) \\ (s_8,2)\end{bmatrix}} ((s_2,s_2,s_8),2)
\xrightarrow{ \begin{bmatrix} (s_3,3)\\ (s_3,3) \\ (s_9,3)\end{bmatrix}} ((s_3,s_3,s_9),3)
\xrightarrow{ \begin{bmatrix} (s_4,4)\\ (s_4,4) \\(s_{10},4) \end{bmatrix}}\\
&((s_4,s_4,s_{10}),4)
\xrightarrow{\begin{bmatrix} (s_5,5)\\(s_5,5)\\(s_{11},5) \end{bmatrix}} ((s_5,s_5,s_{11}),5)
\xrightarrow{\begin{bmatrix} (s_6,6)\\ (s_6,6)\\ (s_{12},6)\end{bmatrix}} ((s_6,s_6,s_{12}),6)\\
&\left(\xrightarrow{\begin{bmatrix} (\tgt, 7 + k)\\ (\tgt,7+k)\\ (\tgt,7+k)\end{bmatrix}}((\tgt,\tgt,\tgt), 7+k)\right)_{k\geq0}.\end{align*}

with cost $(15,15,9)$ thus the social welfare is equal to $39$.

We prove that $(\sigma_1,\sigma_2,\sigma_3)$ is a Nash equilibrium.

For Player~$i$ with $i \in \{1,2\}$:
\begin{itemize}
\item if he deviates in $v_0$ then the cost of Player~$i$ is $21$;
\item if he deviates in $v_1$ then the cost of Player~$i$ is equal to $16$;
\item if he deviates in $v_3$ then the cost of Player~$i$ is equal to $16$.
\end{itemize}

For Player~$3$:
\begin{itemize}
\item if he deviates and follows trajectory $\pi_1$, then the cost of Player~$3$ is $19$;
\item if he deviates and follows trajectory $\pi_3$, then the cost of Player~$3$ is $12$;
\item If he deviates and follows trajectory $\pi_4$, then the cost of Player~$3$ is $16$.
\end{itemize}

That proves that there is no profitable deviation for any player. 
\end{proof}

\subsection{Proofs of Section~\ref{section:finiteConcurGame}}

\thmequivNE*

\paragraph{Simulations between the graph structures.}
In order to prove Theorem~\ref{thm:equivNE}, we have to define how to
simulate the actions of the players from~$\CGame$ to~$\FCGame$ and
vice-versa.

We first define an equivalence relation $\simBT \subseteq \States \times
\States$ such that, for any $s_1= (\config_1,\dat_1) \in \States$ and for
any $s_2 = (\config_2,\dat_2) \in \States$, $s_1 \simBT s_2$ if, and
only~if, either \emph{(i)}~${\dat_1 =d_2<\MaxTime}$  and 
$\config_1 = \config_2$, or \emph{(ii)}~$d_1 > \MaxTime$ and
$d_2 >\MaxTime$.

Notice that since $\FStates \subseteq \States$, the~relation~$\simBT$
is also well-defined on $\States \times \FStates$.

\begin{proposition}[Action vector simulation from $\CGame$ to $\FCGame$]
\label{prop:simInfToFin}
For all $s_1 \in \States$ and $s'_1 \in \FStates$ such that $s_1 \simBT s'_1$,
for all $\vec a \in \Valid(s_1)$,
there exists $\vec a' \in \FValid(s'_1)$ such that 
\begin{enumerate}
\item $\Update(s_1,\vec a) \simBT \FUpdate(s'_1,\vec a')$;
\item for all $i \in \Players$, for all $b'_i \in \FAllowed_i(s'_1)$, there exists $b_i \in \Allowed_i(s_1)$ such that
  \[
  \Update(s_1,(a_{-i},b_i)) \simBT \FUpdate(s'_1,(a'_{-i},b'_i)).
  \]
\end{enumerate}
\end{proposition}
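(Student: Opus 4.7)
The plan is to perform case analysis on the definition of $\simBT$: case~(i) where $\dat_1 = d_2 < \MaxTime$ and $\config_1 = \config_2$, versus case~(ii) where both dates exceed~$\MaxTime$. In each case I~would produce $\vec a'$ from $\vec a$ (for part~1) and $b_i$ from $b'_i$ (for part~2), then verify that the minimum delay and the set of moving players match closely enough to place the resulting configurations in either case~(i) or case~(ii) of~$\simBT$.

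I would dispose of case~(ii) first: in~$\FCGame$, the~only allowed action at states with $\dat \geq \MaxTime$ is $(\bot,\MaxTime+1)$, so $\vec a'$ is forced and $\FUpdate(s'_1,\vec a') = (\config_2,\MaxTime+1)$. Any valid $\vec a$ in~$\CGame$ proposes dates strictly greater than $\dat_1 > \MaxTime$, hence $\Update(s_1,\vec a)$ lands at a date $> \MaxTime$, and the pair is again in case~(ii). For part~2, the~only choice of $b'_i$ is $(\bot,\MaxTime+1)$, and taking $b_i = a_i$ works for the same reason.

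The substantial work is case~(i). Writing $\vec a = (\vertex_i,\dat_i)_i$ with minimum $m = \min_i \dat_i$, I~would build $\vec a'$ player by player: whenever $\dat_i \leq \MaxTime+1$, reuse $a_i$ itself (valid in~$\FCGame$ since edges and guards are unchanged, only the upper bound on dates differs between the two games); otherwise, replace $a_i$ by some $(\vertex'_i,\MaxTime+1) \in \FAllowed_i(s'_1)$. Existence of such a replacement relies on Remark~\ref{rk-winningtraj} together with the observation $\MaxInt \leq \MaxTime$: any guard satisfied at some date $> \MaxInt$ must contain an unbounded right-end interval and is therefore also satisfied at $\MaxTime+1$. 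By~construction this preserves the minimum delay, the~set of players achieving it, and the vertex each of them moves to whenever $m \leq \MaxTime+1$. Consequently, when $m \leq \MaxTime$ the two updates coincide and lie in case~(i); when $m > \MaxTime$, the~$\FCGame$-update triggers the stay-rule at $\MaxTime+1$ while the~$\CGame$-update has date $> \MaxTime$, placing the pair in case~(ii).

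Part~2 in case~(i) is handled symmetrically. Given $b'_i \in \FAllowed_i(s'_1)$, I~take $b_i = b'_i$ when $b'_i \neq (\bot,\MaxTime+1)$, since dropping the upper bound on dates only enlarges the set of valid actions, so any non-$\bot$ action of~$\FCGame$ is already in $\Allowed_i(s_1)$; the~case $b'_i = (\bot,\MaxTime+1)$ does not arise under case~(i) because $\dat_1 < \MaxTime$. The~main obstacle I~anticipate is the bookkeeping around the new minimum: since the $a_j$ and $a'_j$ for $j \neq i$ were fixed by the construction above and differ only on players whose dates in~$\CGame$ exceed $\MaxTime+1$, I~would verify that replacing $a_i$ by~$b_i$ preserves, at the resulting minimum, which players move and to which vertices. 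This reduces again to the same case split on whether the new minimum is strictly below, equal to, or strictly above $\MaxTime$, and the bound $\MaxInt \leq \MaxTime$ together with $\dat' \leq \MaxTime+1$ for $b'_i = (\vertex,\dat')$ guarantees that we stay within the scope of~$\simBT$.
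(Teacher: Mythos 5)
Your proposal is correct and follows essentially the same route as the paper's proof: the same case split on the two clauses of $\simBT$, the same clamping of dates to $\MaxTime+1$ justified by the fact that a guard satisfied at a date above $\MaxInt$ must contain an unbounded interval, and the same subcase analysis on whether the resulting minimum delay falls below or above $\MaxTime$ (the only cosmetic differences being that you allow a different target vertex for clamped players, which is harmless since the stay-rule ignores their vertices, and that in the $\dat_1\geq\MaxTime$ case you reuse $a_i$ as $b_i$ where the paper picks a fresh action at date $\MaxTime+1$).
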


\begin{proof}
  Let $s_1 = (\config_1,\dat_1) \in \States$,
  let $\vec a= (\vertex_i, \datt_i)_{1\leq i\leq \nbp} \in \Valid(s_1)$, and
  let $s'_1 = (\config'_1,\dat'_1) \in \FStates$ such that $s_1 \simBT s'_1$.
  Let $s_2 = (\config_2,\dat_2) = \Update(s_1,\vec a)$.

  We have to define $\vec a' = (\vertex'_i, \datt'_i)_{1\leq i\leq\nbp}
  \in \FValid(s'_1)$ such that, letting $s'_2=\FUpdate(s'_1,\vec a')$,
  it~holds $s_2 \simBT s'_2$.
  This depends on~$\dat_1$:
\begin{enumerate}

\item \textbf{if $\dat_1 < \MaxTime$:} since $s_1 \simBT s'_1$,
  we have $s_1=s'_1$.
  We let $a'_{i} = (\vertex_i,\min\{\datt_i,\MaxTime + 1\})$
  for all $1\leq i\leq \nbp$.
  Notice that $\vec a' \in \FValid(s'_1)$.
  Indeed, for all $1\leq i\leq \nbp$,
  if $\min\{\datt_i, \MaxTime +1 \} = \datt_i$,
  since $(\vertex_i,\datt_i) \in \Allowed_i(s_1)$, then also
  $(\vertex_i,\datt_i) \in \FAllowed_i(s'_1)$ (as~${s_1 = s'_1}$);
  otherwise, if $\min\{\datt_i, \MaxTime +1 \} = \MaxTime +1$,
  then $t_i \geq \MaxTime +1 \geq \MaxInt$, and since $(\vertex_i,\datt_i)\in
  \Allowed_i(s_1)$, then also $(\vertex_i,\MaxTime + 1) \in \FAllowed_i(s'_1)$
  (as~${s_1 = s'_1}$, and $\MaxInt$ is the largest integer constant appearing
  in the guards).

  We let $s'_2 = (\config'_2,\dat'_2) = \FUpdate(s'_1,\vec a')$,
  $\datt^* = \min \{ \datt_i \mid 1\leq i\leq\nbp\}$, and
  $\datt'^* = \min \{\datt'_i \mid 1\leq i\leq \nbp\}$.
  Remember that $\Select(\vec a) = \{1\leq i\leq \nbp \mid \datt_i=\datt^*\}$.
  
\begin{enumerate}
\item If $\datt^* \geq \MaxTime +1$: we then have 
  $\dat_2 =\datt^* \geq \MaxTime +1$, hence $\datt'^* = \MaxTime+1$ and
  $\dat'_2 = \MaxTime +1$. It~follows that $s_2 \simBT s'_2$.
\item otherwise, $\datt^* = \datt'^*$, and it follows that
  $\Select(\vec a) = \Select(\vec a')$. Thus for all $i \in \Select(\vec a)$,
  we~have $a_i = a'_i$. It~follows
  $\Update(s_1,\vec a) =
  \FUpdate(s'_1,\vec a')$, and $s_2 \simBT s'_2$.
\end{enumerate}

That proves the first assertion. We now prove the second assertion. 
Take $1\leq i\leq \nbp$, and $b'_i \in \FAllowed_i(s'_1)$.  Since
$\dat'_1 < \MaxTime$, we~have $b'_i = (\vertex'_i,\datT'_i)$ for some
$\datT'_i \leq \MaxTime +1$ and some $\vertex'_i \in \Vertex$.
We~let $b_i = b'_i$.
Since $s_1= s'_1$, we~have $b_i \in \Allowed_i(s_1)$.
\begin{enumerate}
\item If $\datT'_i \leq \MaxTime$, then $\min (\{ \datt'_j \mid j \neq i \}
  \cup \{\datT'_i\}) = \min (\{\datt_j \mid j \neq i \} \cup \{\datT'_i\})$
  since $\datt'_j=\datt_j$ for
  all $\datt'_j \leq \MaxTime$. Thus $\Select(a_{-i},b_i)
  = \Select(a'_{-i},b'_i)$ and since $\vertex'_j = \vertex_j$ for all $j\neq
  i$, we~get $\Update(s_1,(a_{-i}, b_i)) = \FUpdate(s'_1,(a'_{-i}, b'_i))$.
\item otherwise, $\datT'_i = \MaxTime + 1$;
  in~that case, if there exists $j \neq i$ such that
  $\datt'_j \leq \MaxTime$, then $\datt_j = \datt'_j$, and $i \notin
  \Select(a'_{-i}, b'_i)$ and $i \notin \Select(a_{-i},b_i)$. It
  follows that $\Update(s_1,\vec a) = \Update(s_1,(a_{-i}, b_i))$, and
  $\FUpdate(s'_1,\vec a') = \FUpdate(s'_1,(a'_{-i}, b'_i))$. Since
  $\Update(s_1,\vec a) \simBT \FUpdate(s'_1,\vec a')$, we~get
  $\Update(s_1,(a_{-i}, b_i)) = \FUpdate(s'_1,(a'_{-i}, b'_i))$.
  
  Now, if $\datt'_j = \MaxTime +1$ for all $j \neq i$, then
  also $\datt_j \geq \MaxTime +1$ for all $j \neq i$. It~follows that,
  writing
  $(\config_3,\dat_3) = \Update(s_1,(a_{-i},b_i))$ and
  $(\config'_3,\dat'_3) = \FUpdate(s'_1,(a'_{-i},b'_i))$, we~have $\dat_3 \geq
  \MaxTime + 1$ and $\dat'_3 = \MaxTime +1$.
  Thus, $(\config_3,\dat_3) \simBT (\config'_3,\dat'_3)$.
\end{enumerate}

\item \textbf{if $\dat_1 \geq \MaxTime$:} we have that $\dat'_1 \geq
  \MaxTime$ and $\dat_2 > \MaxTime$. It~follows that the only possible
  move from $(\config'_1,\dat'_1)$ is $\vec a'$ with $a'_i = (\bot,\MaxTime+1)$
  for all $1\leq i\leq \nbp$. If $s'_2 = (\config'_2,\dat'_2) =
  \FUpdate(s'_1,\vec a')$, then $\dat'_2 = \MaxTime +1$, and thus $s_2 \simBT
  s'_2$.

  That proves the first assertion. We now prove the second
  assertion. Take $0\leq i\leq\nbp$, and $b'_i \in \FAllowed_i(s'_1)$.
  We~let $b_i = (\vertex_i,\MaxTime +1)$ for some $\vertex_i \in \Vertex$
  such that $(\vertex_i,\MaxTime +1) \in \Allowed_i(s_1)$.
    Notice that by Remark~\ref{rk-winningtraj}, 
    there always  exists $\vertex_i \in \Vertex$
  such that $\MaxTime +1
  \models \guard(\config_1(i),\vertex_i)$.
  Let $(\config_3,\dat_3) =
  \Update(s_1,(a_{-i},b_i))$ and $(\config'_3,\dat'_3) = \FUpdate(s'_1,
  (a'_{-i},b'_i))$. Since $\dat_1$ and $\dat'_1$ are larger than or equal
  to~$\MaxTime$, then $\dat_3 \geq \MaxTime
  +1$ and $\dat'_3 = \MaxTime +1$. It follows that $(\config_3,\dat_3)
  \simBT (\config'_3,\dat'_3)$.
\end{enumerate}
\end{proof}

We~now prove the converse simulation property:  
\begin{proposition}[Action vector simulation from $\FCGame$ to $\CGame$]
\label{prop:simFinToInf}
For all $s'_1 \in \FStates$, $\vec a' \in \FMov(s_1')$, and $s_1 \in \States$ such that $s_1 \simBT s'_1$, there exists $\vec a \in \Mov(s_1)$ such that 
\begin{enumerate}
\item $\Update(s_1,\vec a) \simBT \FUpdate(s'_1,\vec a')$.
\item for all $1\leq i\leq \nbp$, for all $b_i \in \Mov_i(s_1)$, there exists $b'_i \in \FMov_i(s'_1)$ such that
  \[
  \Update(s_1,(a_{-i},b_i)) \simBT \FUpdate(s'_1,(a'_{-i},b'_i)).
  \]
\end{enumerate}
\end{proposition}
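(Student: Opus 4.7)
The proof mirrors that of Proposition~\ref{prop:simInfToFin}, by a case analysis on~$\dat'_1$, driven by the dichotomy built into~$\simBT$. The guiding idea is that, in the direction $\FCGame \to \CGame$, every action of~$\FCGame$ already has the right shape to be played in~$\CGame$, so the construction of~$\vec a$ is essentially the identity (up to what happens past~$\MaxTime$), and the challenge is only to exhibit, for each unilateral deviation~$b_i$ available in~$\CGame$, a ``compressed'' counterpart~$b'_i$ in~$\FCGame$.

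First I would treat the case $\dat'_1 < \MaxTime$. By definition of~$\simBT$, this forces $s_1 = s'_1$. Writing $\vec a' = (\vertex'_i, \datt'_i)_{1\leq i\leq\nbp}$, I simply set $\vec a = \vec a'$; this is a valid action vector in $\Valid(s_1)$ because whenever $\datt'_i = \MaxTime+1$, the guard~$\guard(\config_1(i),\vertex'_i)$ is a finite union of intervals with bounds in~$\bbN\cup\{+\infty\}$ whose only integer constants are at most~$\MaxInt \leq \MaxTime$, so the interval containing $\MaxTime+1$ must be of the form $[a,+\infty)$, and in particular contains arbitrarily large dates. Computing $\Update(s_1,\vec a)$ and $\FUpdate(s'_1,\vec a')$ then gives the same state when $\min_i \datt'_i \leq \MaxTime$, and two states both with date strictly greater than $\MaxTime$ otherwise, so clause~(i) or~(ii) of~$\simBT$ applies.

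For the second property in this case, given a unilateral deviation $b_i = (\vertex_i, \datT_i) \in \Allowed_i(s_1)$, I~distinguish two sub-cases. If $\datT_i \leq \MaxTime+1$, then $b_i \in \FAllowed_i(s'_1)$ and I~take $b'_i = b_i$; a direct computation of both updates, splitting on whether some other $\datt'_j$ is at most~$\MaxTime$, shows either equality of the resulting states or that both have date exceeding~$\MaxTime$. If instead $\datT_i > \MaxTime+1$, the guard argument above provides $\MaxTime+1 \in \guard(\config_1(i),\vertex_i)$, so $b'_i = (\vertex_i,\MaxTime+1) \in \FAllowed_i(s'_1)$; since $\datT_i > \MaxTime$, the deviating player cannot be selected earlier than~$\MaxTime+1$ in either game, and the same case split shows $\simBT$-relatedness of the two successor states.

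Finally I~handle the case $\dat'_1 \geq \MaxTime$. By $\simBT$, $\dat_1 > \MaxTime$, and the only action vector in $\FValid(s'_1)$ is $\vec a'$ with $a'_j = (\bot,\MaxTime+1)$ for all~$j$. For each player, Remark~\ref{rk-winningtraj} supplies some vertex~$\vertex_j$ and date $\datt_j > \dat_1$ with $(\config_1(j),\vertex_j) \in E$ and $\datt_j \in \guard(\config_1(j),\vertex_j)$; this defines a valid $\vec a \in \Valid(s_1)$. Both $\Update(s_1,\vec a)$ and $\FUpdate(s'_1,\vec a')$ then have dates strictly above~$\MaxTime$, so clause~(ii) of $\simBT$ applies. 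The same argument, applied to any deviation $b_i \in \Allowed_i(s_1)$ countered by $b'_i = (\bot,\MaxTime+1)$, yields the second property. The main obstacle throughout is the guard-truncation step in the second case of $\dat'_1 < \MaxTime$: one has to exploit precisely the fact that the largest constant in the guards is $\MaxInt \leq \MaxTime$ to replace arbitrary large dates by~$\MaxTime+1$ without leaving~$\guard$; once this is observed, the remainder of the case analysis is routine.
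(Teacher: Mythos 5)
Your proposal is correct and follows essentially the same route as the paper's own proof: the same case split on $\dat'_1 < \MaxTime$ versus $\dat'_1 \geq \MaxTime$, the choice $\vec a = \vec a'$ in the first case, the deviation countered by truncating the date to $\min\{\datT_i,\MaxTime+1\}$ (justified, exactly as in the paper, by the fact that $\MaxInt$ bounds every constant in the guards), and Remark~\ref{rk-winningtraj} together with $b'_i=(\bot,\MaxTime+1)$ in the second case. The only cosmetic difference is that you re-derive guard satisfaction at $\MaxTime+1$ when arguing $\vec a'\in\Valid(s_1)$, which is already guaranteed by the definition of $\FAllowed_i$; this is harmless.
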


\begin{proof}
  Let $s'_1 = (\config'_1,\dat'_1) \in \FStates$,
  let $\vec a'= (\vertex'_i,\datt'_i)_{1\leq i\leq\nbp} \in \FValid(s'_1)$
    (possibly with $\vertex'_i=\bot$),
  and let $s_1 = (\config_1,\dat_1) \in \States$ such that $s_1 \simBT s'_1$.
  Let $s'_2 = (\config'_2,\dat'_2) = \FUpdate(s'_1,\vec a')$.

  We have to define $\vec a = (\vertex_i,\datt_i)_{1\leq i\leq\nbp} \in
  \Valid(s_1)$ such that, letting $s_2=\Update(s_1,\vec a)$,
  then $s_2 \simBT s'_2$. This depends on~$\dat'_1$:
 
 \begin{enumerate}
 
 \item \textbf{if $\dat'_1 < \MaxTime$:} we have $s_1 = s'_1$.
   We let  $\vec a = \vec a'$.
   Notice that since $\dat'_1 < \MaxTime$, for all $1\leq i\leq \nbp$,
   we~have $\vertex'_i\not=\bot$.
   Moreover, because $\vec a' \in \FValid(s'_1)$ and $s_1= s'_1$,
   we~have $\vec a\in \Valid(s_1)$.
   In~this~way, letting $s_2 = \Update(s_1,\vec a)$, we~have
   $s_2 = s'_2$,  and in particular $s_2 \simBT s'_2$,
   which proves the first assertion. It~remains to prove the second assertion. 
 
   Pick $1\leq i\leq \nbp$ and  $b_i=(\vertex_i,\datT_i) \in \Allowed_i(s_1)$.
   We define $b'_i = (\vertex_i,\min \{ \datT_i, \MaxTime +1\})$.
   Once again, as $\dat_1=\dat'_1 < \MaxTime$ and $s_1= s'_1$,
   we~have $b'_i \in \FValid(s'_1)$.
   Let $\ora{ab}=(a_{-i},b_i)$ and
   $\ora{ab}'=(a'_{-i},b'_i)$.
   \begin{enumerate}
   \item If $\datT_i \leq \MaxTime$, then
     $\min \{ \datT_i, \MaxTime +1 \} = \datT_i$. Thus
     $\min \{ \datt'_j \mid j \neq i \} \cup \{ \datT_i \} =
     \min \{ \datt_j \mid j \neq i \}  \cup \{ \datT_i \}$,
     since for all $j \neq i$, $\datt'_j = \datt_j$.
     It~follows that $\Select(\ora{ab}') = \Select(\ora{ab})$,
     and for all $j \in \Select(\ora{ab}')$, the next vertices chosen in
     $\ora{ab}'_j$ and in $\ora{ab}_j$ are the same.
     It~follows that $\Update(s_1,\ora{ab}) = \FUpdate(s'_1,\ora{ab}')$. 

   \item If $\datT_i \geq \MaxTime +1$: if there exists $j \neq i$ such that $\datt'_j \leq \MaxTime$, then $\datt_j  = \datt'_j \leq \MaxTime$ and
     $i \not \in \Select(\ora{ab})$ nor $i \not \in \Select(\ora{ab}')$.
     It~follows that $\Update(s'_1,\vec a') = \Update(s'_1,\ora{ab}')$
     and $\FUpdate(s_1,\vec a) = \FUpdate(s_1,\ora{ab})$.
     On~the other hand, if $t'_j \geq \MaxTime +1$ for all
     $j \neq i$, then letting $(\config_3,\dat_3)= \Update(s_1,\ora{ab})$
     and $(\config'_3,\dat'_3) = \FUpdate(s'_1,\ora{ab}')$,
     we~have $(\config_3,\dat_3) \simBT (\config'_3,\dat'_3)$, 
     since $\dat_3 \geq \MaxTime +1$ and $\dat'_3 \geq \MaxTime +1$.
 \end{enumerate}

 \item \textbf{if $\dat'_1 \geq K$:}
   the only allowed move $\vec a'$ is such that $a'_i = (\bot,\MaxTime + 1)$ for all $1\leq i\leq \nbp$.
   Moreover, $\dat_1 \geq \MaxTime$.
   We then choose $\vec a$ such that $a_i = (\vertex_i,\dat'_1 + 1)$
   for some $\vertex_i \in \Vertex$ such that $(\vertex_i,\dat_1+1) \in
   \Allowed_i(s_1)$.
  Letting $s_2 = (\config_2,\dat_2) = \Update(s_1,\vec a)$,
  we~have $\dat_2 = \dat_1 + 1 > \MaxTime$. Since $\dat'_2 = \MaxTime +1$, we
  get that $s_2 \simBT s'_2$, which proves the first assertion.

  We now prove the second assertion. Take $1\leq i\leq\nbp$, and
  $b_i=(\vertex_i,\datT_i) \in \Allowed_i(s_1)$.
  Since $\dat'_1 \geq \MaxTime$, we~have $\datT_i \geq \MaxTime +1$.
  We~fix $b'_i = (\bot,\MaxTime +1)$ (there are no other possible choices).
  Letting $(\config_3,\dat_3)= \Update(s_1,(a_{-i},b_i))$
  and
  $(\config'_3,\dat'_3)= \FUpdate(s'_1,(a'_{-i},b'_i))$,
  we~have $(\config_3,\dat_3)\simBT (\config'_3,\dat'_3)$
    since $\dat_3 \geq \MaxTime +1$ and $\dat'_3 \geq \MaxTime +1$.\qed
 \end{enumerate}
 \let\qed\relax
\end{proof}

We~now establish a generic proposition showing a
correspondence between Nash equilibria in two game structures:

\propexistenceNESimu*

\begin{proof}
  Let $\sigma$ be a Nash equilibrium from~$s_0$ in~$\calG$,
  and $\rho = \Outcome{\sigma}_{s_0}$.
  By~condition~\eqref{item-cond2}, for all $1\leq i\leq \nbp$,
  we~have $\cost_i(\rho) \leq \lambda$ for some~$\lambda\in\bbN$.
  By~\eqref{item:simInit} and~\eqref{item:sim1}, there exists $\rho' \in \Plays_{\calG'}(s'_0)$ such that $\rho \genSim \rho'$. By~\eqref{item:simCost}, for all $1\leq i\leq\nbp$, $\cost_i(\rho) = \cost'_i(\rho')$.
We will construct a strategy profile $\sigma'$ in $(\calG'$ that is a Nash equilibrium from~$s'_0$ and such that $\Outcome{\sigma'}_{s'_0} = \rho'$.

For any history $h=((s_k,\vec a_k,s'_k)_{0\leq k < \ell}$ in $\Hist_{\calG}(s_0)$,
for any $1\leq i\leq\nbp$, and for any strategy profile~$\sigma$,
we~define $\Devi(h,i,\sigma)$ to be \True
if, and only~if,
for all $0\leq k<\ell$ and all $j \neq i$,
$a_{k,j} = \sigma_j(\rho_{<k})$.
By~convention, we~let $\Devi(s_0,i,\sigma)$ be \True.
We define $\Devi$ in the same way on histories of $\calG'$.
For all $h \in \Hist_{\calG}$ and all strategy profiles $\sigma$, we~let
\[
\setDevi(h,\sigma) = \{ 1\leq i\leq \nbp  \mid
\Devi(h,i,\sigma)= \True \}.
\]
Notice that $\setDevi(h,\sigma)$ can be either $\Players$, in case no
players deviated, or a singleton~$\{i\}$, in case only Player~$i$
deviated, or empty, in case at least two players deviated.

We will construct $\sigma'$ step-by-step as follows:
\emph{(i)}~we~define $\sigma'$ such that its outcome is $\rho'$ and
\emph{(ii)} we extend $\sigma'$ on histories, by induction on their length.
During the construction of $\sigma'$, we define a partial
fonction $\repr\colon \Hist_{\calG'}(s'_0) \longrightarrow
\Hist_{\calG}(s_0)$, which associates, with
each history~$h'\in\Hist_{\calG'}(s'_0)$ such that
$\setDevi(h'c',\tau)$ is non-empty, a~representative history in~$\calG$.
At~the end of the procedure we want that
$\repr$ satisfies the following properties:
for all $h'$ of length~$k$
in $\Hist_{\calG'}(s'_0)$ such that $\setDevi(h',\sigma')$ is non-empty:
\begin{enumerate}\def\theenumi{P\arabic{enumi}}
\item \label{item:repr1}  for all $\ell\leq k$,
  $\repr(h'_{<\ell}) = \repr(h')_{<\ell}$; 
\item \label{item:repr2} $\setDevi(h',\sigma')=\setDevi(\repr(h'),\sigma)$ and
  $\repr(h') \genSim h'$. 
\end{enumerate}

We first let $\sigma'_i(\rho'_{<k})=a'_{k,i}$
and $\repr(\rho'_{<k})= \rho_{<k}$
for all~$1\leq i\leq\nbp$ and all $k\in\bbN$,
so that $\Outcome{\sigma'}_{s'_0} = \rho'$. Thus
by~construction, for all $k \in \N$, $\setDevi(\rho'_{<k},\sigma') =
\Players$ and $\rho'_{<k}$ satisfies Properties~\eqref{item:repr1}
and~\eqref{item:repr2}.

We now extend the definition of $\sigma'$ step-by-step. At~step~$k$:
\begin{enumerate}
\item we define $\sigma'$ on all histories $h' \in
  \Hist_{\calG'}(s'_0)$ of length~$k$ that are not prefixes
  of~$\rho'$, and $\repr$ on the resulting outcomes, of length~$k+1$;
\item we prove that Properties~\eqref{item:repr1}
  and~\eqref{item:repr2} are satisfied for all histories $h'$ of
  length $k+1$ for which $\setDevi(h',\sigma')$ is non-empty.
\end{enumerate}

\paragraph{\textbf{At step 0:}}
The only history in $\Hist_{\calG'}(s'_0)$ of length~$0$ is $s'_0$, which is
a prefix of~$\rho'$.
By~hypothesis, %
we have that $s_0 \genSim s'_0$.
By~construction, $\repr(s'_0) = s_0$.

We define $\repr$ on all histories $h'=(u_0,\vec a',u'_0)$ of
length~$1$ that are not prefixes of~$\rho'$ and for which
$\setDevi(h', \sigma')$ is non-empty (hence $\setDevi(h', \sigma')$ is
a singleton~$\{i\}$ for some~$1\leq i\leq\nbp$, since there must have
been a deviation for~$h'$ not to be a prefix of~$\rho'$).

By~Condition~\eqref{item:sim2}, there exists $b_i \in \Allowed_i(s_0)$
such that $s_1 = \Update(s_0, (\sigma_{-i}(s_0),b_i))$ is such that
$v_1 \genSim v'_1$. We~let $\repr(h') = (s_0, (\sigma_{-i}(s_0),b_i), s_1)$.

Property~\eqref{item:repr1} clearly holds, and
since $\setDevi(h',\sigma') = \{i\} = \setDevi(\repr(h'),\sigma)$
and $\repr(h') \genSim h'$, Property~\eqref{item:repr2} is also satisfied.

\paragraph{\textbf{At step $k$:}}
Let us assume that step $k-1$ has been completed, and that
Properties~\eqref{item:repr1} and~\eqref{item:repr2} hold for
histories of length~$k$.

We~first define $\sigma'$ on histories of length~$k$ that are not
prefixes of~$\rho'$. For such a history~$h'=(u'_j,\vec
a'_j,v'_j)_{0\leq j<k}$, $\setDevi(h',\sigma')$ is either empty (if at
least two players deviated from~$\sigma'$), or it is a
singleton~$\{i\}$ (in~case only Player~$i$ deviated).

In the first case ($\setDevi(h',\sigma')=\emptyset$), we~let
$\sigma'(h')=\vec b'$ for some $\vec b'\in\Valid'(\last(h'))$.
In~the second case ($\setDevi(h',\sigma')=\{i\}$), we~let $h =
(u_j,\vec a_j,v_j)_{0\leq j<k}=\repr(h')$. 
By~induction hypothesis, we~have $h\genSim h'$ and $\setDevi(h,\sigma) =\{i\}$.
Take $\vec b=\sigma(h)$: by~\eqref{item:sim1}, there exists $\vec b'\in
\Valid(u'_k)$ such that, writing $v_{k}=\Update(u_k,\vec b)$ and
$v'_{k}=\Update(u'_k,\vec b')$, $v_k \genSim v'_k$
(remember that we~write $u_k$ for~$v_{k-1}$). We~then let
$\sigma'(h')=\vec b'$, and
$\repr(h'\cdot (u'_k,\vec b',v'_k)) = h\cdot (u_k,\vec b,v_k)$.

We~now prove that Properties~\eqref{item:repr1} and~\eqref{item:repr2}
are satisfied for all histories~$h'$ of length~$k+1$ for which
$\setDevi(h',\sigma')$ is non-empty.
Let $h' = (u'_j,\vec a'_j,v'_j)_{0\leq j<k+1}$
be a history of length~$k+1$ in $\Hist_{\calG'}(s'_0)$.

If $\setDevi(h',\sigma') = \Players$, then $h'$ is a prefix of
$\rho'$, and we already proved that Properties~\eqref{item:repr1}
and~\eqref{item:repr2} are satisfied.

We now focus on the case where $\setDevi(h',\sigma') = \{ i \}$ for
some $1\leq i\leq\nbp$:
\begin{itemize}

\item if $\repr(h')$ was already defined, then the last move was
  played by all players according to~$\sigma'$.
  Thus $\setDevi(h'_{<k},\sigma') = \{i\}$ and for all $j \in
  \Players$, $a'_{k} = \sigma'(h'_{<k})$. 
  By~definition of $\repr(h')$, Properties~\eqref{item:repr1}
  and~\eqref{item:repr2} are satisfied.

\item if $\repr(h')$ is not yet defined,
  it~means that $a'_{k}(i) \neq \sigma'_i(h'_{<k})$
  and $a'_{k}(j) = \sigma'_j(h'_{<k})$ for all $j \neq i$.
  By~induction hypothesis,
  $\setDevi(h'_{<k},\sigma') =
  \setDevi(\repr(h'_{<k}),\sigma)$
  and
  $\repr(h'_{<k}) \genSim h'_{<k}$.

  By~\eqref{item:sim2}, there exists $b_i \in \Allowed_i(u_k)$
  such that $v_{k} = \Update(u_k, (\sigma_{-i}(h),b_i) \genSim  v'_{k}$.

  We fix $\repr(h') = \repr(h'_{<k}) \cdot (u_k, (\sigma_{-i}(h),b_i), v_{k})$.
  By the two properties above, 
  Properties~\eqref{item:repr1} and~\eqref{item:repr2} are satisfied for $h'$.
\end{itemize}

This concludes the construction of $\sigma'$ such that
$\Outcome{\sigma'}_{s'_0} = \rho'$.  It remains to prove that $\sigma'$ is
a Nash equilibrium in $\calG'$ from~$s'_0$.
Towards a contradiction, we assume that there exists $1\leq i\leq\nbp$ and a strategy~$\tau'_i$ of Player~$i$ in $\calG'$ from~$s'_0$ such that 
$ \cost_i(\Outcome{\sigma'_{-i},\tau'_i}_{s'_0}) <
  \cost_i(\Outcome{\sigma'}_{s'_0})$.
In particular, %
$\cost_i(\Outcome{\sigma'_{-i},\tau'_i}_{s'_0}) < \lambda$.

Let $\overline{\rho'}= \Outcome{\sigma'_{-i},\tau'_i}_{s'_0}$.
We~will build $\overline{\rho} \in \Plays_{\calG}(s_0)$
such that $\overline{\rho} \genSim \overline{\rho'}$ and
for all $k \in \N$, $\setDevi(\overline{\rho}_{<k},\sigma) = \setDevi(\overline{\rho'}_{<k}, \sigma')$.  
This entails that there exists a strategy~$\tau_i$ of Player~$i$
in $\calG$ from~$s_0$ such that
$\Outcome{\sigma_{-i},\tau_i}_{v_0} = \overline{\rho}$. It follows that 
\[
\cost_i(\overline{\rho}) = \cost_i(\overline{\rho'})  <
\cost_i(\rho') = \cost_i(\rho).
\]
Thus $\tau_i$ is a profitable deviation for Player~$i$ w.r.t.~$\sigma$.
which contradicts the fact that $\sigma$ was chosen to be a Nash equilibrium.

\smallskip

It remains to show to construct $\overline{\rho}$ from $\overline{\rho'}$.
We~proceed by induction.
\begin{itemize}
\item we~let $\overline{\rho}_0 = s_0$; by hypothesis $s_0 \genSim s'_0$.
  Moreover, $\setDevi(\overline{\rho}_{<0},\sigma) = \Players$ and
  $\setDevi(\overline{\rho'}_{<0},\sigma') = \Players$;
\item We assume that for all $0 \leq \ell \leq k$, we~have
  $\overline{\rho}_{< \ell} \genSim \overline{\rho'}_{< \ell}$,
  $\setDevi(\overline{\rho}_{<\ell},\sigma) =
  \setDevi(\overline{\rho'}_{<\ell}, \tau)$ and
  $\overline{\rho}_{< \ell} = \repr(\overline{\rho'}_{< \ell})$.

  We show how to define the last step of $\overline{\rho}_{<k+1}$.
  Notice that for
  all prefix $h' < \overline{\rho'}$, $\repr(h')$ is well-defined,
  since $\setDevi(h',\sigma')$ is either $\Players$ or~$\{i\}$.
  Let $h= (u_j,\vec a_j,v_j)_{0\leq j<k+1}= \repr(\overline{\rho'}_{< k+1})$. 
  By~construction, $\setDevi(h,\sigma) =
  \setDevi(\overline{\rho'}_{< k+1}, \sigma')$ and
  $h \genSim \overline{\rho'}_{< k+1}$.
  Thus we choose
  $\overline{\rho}_{<k+1} = \overline{\rho}_{<k} . (u_{k},\vec a_k, v_{k})$.
\end{itemize}

In this way, we obtain that for all $k \in \N$,
$\overline{\rho'}_{< k} \genSim \overline{\rho}_{< k}$ and
$\setDevi(\overline{\rho}_{< k}, \sigma) =
\setDevi(\overline{\rho'}_{< k}, \sigma')$.
\end{proof}

\begin{proof}[Proof of Theorem~\ref{thm:equivNE}]
  Theorem~\ref{thm:equivNE} can now be proven by applying Proposition~\ref{prop:existenceNESimu} twice.
  We~first apply Proposition~\ref{prop:existenceNESimu} to $\CGame$
  and $\FCGame$, 
  with $\genSim = \simBT$.

  We prove that $\CGame$, $\FCGame$ and $\simBT$
  satisfy the four hypotheses of Proposition~\ref{prop:existenceNESimu},
  which will allow us to conclude
  that any Nash equilibrium in~$\CGame$ from~$s_0=\start$ has a
  corresponding Nash equilibrium from~$s'_0=\start$ in~$\FCGame$,
  with the same cost profile:
  \begin{enumerate}
  \item by definition of~$\simBT$, we have $s_0 \simBT s'_0$;
    
  \item we choose $\lambda = \MaxTime$, and
    the assertion holds by Lemma~\ref{lemma:boundedLength}.

  \item let $\rho = (u_k, \vec a_k, v_k)_{k \in\bbN} \in \PlaysG(s_0)$
    and $\rho' = (u'_k,\vec a'_k, v'_k)_{k\in\bbN} \in \PlaysFG(s'_0)$
    such that $\rho \simBT \rho'$. Let $1\leq i\leq\nbp$.
    \begin{itemize}
    \item If $\cost_i(\rho) \leq \MaxTime$, by~Lemma~\ref{lemma:boundedLength}
      there exists $k^*_i \leq \MaxTime$ such that
      $u_{k^*_i} = (\tgt_i, \dat_i)$ with $\dat_i\leq\MaxTime$.
      It~follows that for all $\ell \leq k_i^*$, $\rho_{<\ell} = \rho'_{<\ell}$,
      and $\cost_i(\rho) = \cost_i(\rho')$.
    \item If $\cost_i(\rho') \leq \MaxTime$, applying
      Lemma~\ref{lemma:boundedLength} to plays in~$\FCGame$,
      the~same reasoning applies.
    \end{itemize} 
  \item by  Proposition~\ref{prop:simInfToFin}.
\end{enumerate}

  Conversely, 
  we~apply Proposition~\ref{prop:existenceNESimu} to $\FCGame$ and $\CGame$,
  with~$\genSim=\simBT^{-1}$.
  We have to prove that
  they satisfy the four conditions of Proposition~\ref{prop:existenceNESimu}:
\begin{enumerate}
\item we have $s_0 \simBT s'_0$;
\item we choose $\lambda = \MaxTime$ and the assertion holds by Lemma~\ref{lemma:boundedLength}.
\item Same proof as the other implication.
\item By Proposition~\ref{prop:simFinToInf}.
\end{enumerate}
This concludes the proof of Theorem~\ref{thm:equivNE}.
\end{proof}

\subsection{Proofs of Section~\ref{sec-ExistNE}}

\begin{restatable}{lemma}{lemmacostpath}\label{lemma:costTimedPath}
    For all $\sigma_i$ and~$\tau_i$ in~$\Sigma_i(s_0)$,
    for all $\sigma_{-i} \in \Sigma_{-i}(s_0)$ such that for all $j \in \Players\backslash \{i \}$, $\sigma_j$~is a blind strategy, we~have: 
   if $\traj_i(\Outcome{\sigma_{-i},\sigma_i}_{s_0}) =
    \traj_i(\Outcome{\sigma_{-i},\tau_i}_{s_0})$, then
   $\cost_i(\Outcome{\sigma_{-i},\sigma_i}_{s_0}) =
   \cost_i(\Outcome{\sigma_{-i},\tau_i}_{s_0})$.
\end{restatable}

\begin{proof}
  The trajectories are the same for all players in both outcomes,
  hence the sequences of timed configurations visited along both
  outcomes are the same. Now, the cost of a transition only depends on
  the timed configurations it comes from and goes~to, and not on the
  exact action vector that is being played. The result follows.
\end{proof}

\propENrestToEN*
\begin{proof}
  Let us assume that there exists a $\Sigma^{\WB}$-Nash equilibrium~$\sigma$
  in $\FCGame$ from~$\start$. Notice that for all $1\leq i\leq\nbp$,
  strategy~$\sigma_i$ is a winning blind strategy for Player~$i$ from~$\start$.
  We prove that $\sigma$ is also a Nash equilibrium from~$\start$ in~$\FCGame$. 
	
  Ad~absurdum, we assume that there exist $1\leq i\leq\nbp$
  and a strategy~$\tau_i$  of Player~$i$ in $\FCGame$ from~$\start$ such that 
  \begin{equation}
    \Fcost_i(\Outcome{\sigma_{-i},\tau_i}_{\start}) <
    \Fcost_i(\Outcome{\sigma}_{\start}). \label{eq:existenceNE}
  \end{equation}

  Consider the trajectory $p =
  \traj_i(\Outcome{\sigma_{-i},\tau_i}_{\start})$.
  Writing~$p =
      (\vertex_k,\dat_k)_{k \in\bbN}$, since
      $\cost_i(\Outcome{\sigma_{-i},\tau_i}_{\start})$ is finite, there
  exists $k^* \in \bbN$ such that $\vertex_{k^*} = \tgt_i$.
  Thus there exists
  a winning blind strategy~$\sigma'_i \in \Sigma^{\WB}_i$
  corresponding to this trajectory.
		
  By hypothesis $\sigma_j$ is a blind strategy for all $j \in \Players \backslash \{i\}$. By~Lemma~\ref{lemma:costTimedPath}, it~follows:
  \[
  \Fcost_i(\Outcome{\sigma_{-i}, \sigma'_{i}}_{\start}) = \Fcost_i(\Outcome{\sigma_{-i},\tau_i}_{\start}).
  \]
		
  Thus $\sigma'_i \in \Sigma^{\WB}_i$ is a profitable blind deviation
  for Player~$i$ w.r.t.~$\sigma$,
  contradicting the fact that $\sigma$ is a $\Sigma^{\WB}$-Nash equilibrium.
\end{proof}

\proppotentialGame*
\begin{proof}

  Let $R=\{(\vertex,\dat) \mid \vertex\in\Vertex, 0\leq\dat\leq
  \MaxTime\}$. For each strategy profile~$\sigma\in\Sigma^{\WB}$ and
  each pair $(\vertex,\dat)\in R$, writing
  $\InitOutcome{\sigma}{\start}= ((\config_k,\dat_k),\vec
  a_k,(\config'_k,\dat'_k))_{k\in\bbN}$ and $k^*$ for the largest
  index for which $\dat_k\leq\dat$, we~define $\load_\sigma(\vertex,\dat) =
  \#\{1\leq i\leq\nbp \mid \config_{k^*}(i)=\vertex\}$.  In~other
  terms, $\load_\sigma(\vertex,\dat)$ is the number of players standing
  in vertex~$\vertex$ at time~$\dat$ along the outcome of~$\sigma$
  from~$\start$.

  For any strategy profile $\sigma$ in $\Sigma^{\WB}$ and any subset~$J\subseteq R$, we define
  \[
  \Psi_J(\sigma) = \sum_{(\vertex,\dat)\in J} \sum_{k = 1}^{\load_{\sigma}(v,\dat)} \weight(\vertex)(k).
  \]

  We~let $\Psi=\Psi_R$, and prove that it is a potential function: 
  assume that some player~$i$ deviates from their strategy $\sigma_i$ in~$\sigma$ and follows some other strategy~$\sigma'_i\in \Sigma^{\WB}_i$ instead;
  write $\sigma'=(\sigma_{-i}, \sigma'_i)$.
  We~prove that:
  \[
  \Psi(\sigma') - \Psi(\sigma) =
  \Fcost_i(\InitOutcome{\sigma'}{\start}) -
  \Fcost_i(\InitOutcome{\sigma}{\start}).
  \]

  Write~$P$ for the set of plays in~$\FCGame$ along which all players
  visit their target vertices. For such a
  play~$\rho=((\config_k,\dat_k),\vec
  a_k,(\config'_k,\dat'_k))_{k\in\bbN}$ in~$P$, and for each~$1\leq
  i\leq\nbp$, we~define the set of positions occupied by Player~$i$
  along~$\rho$ (before reaching their target vertex) as follows:
  we~first let $k_i$ be the least index~$k$ such that $\config_k(i)=\tgt_i$;
  then
  \[
  \Pos_i(\rho) = \{(\vertex,\dat) \mid \exists 0\leq j<k_i.\
  \dat_{j}\leq \dat <\dat_{j+1} \text{ and } \vertex = \vertex_j\}.
  \]

  Now, let $V=\Pos_i(\InitOutcome{\sigma}{\start})$ and
  $V'=\Pos_i(\InitOutcome{\sigma'}{\start})$. Notice
  that since $\sigma$ and $\sigma'$ are in $\Sigma^{\WB}$,
  both~$V$ and~$V'$ are well-defined.
  Then let
  \begin{xalignat*}4
    B &= V \cap V' &
    O &= V\setminus B &
    N &= V'\setminus B &
    X &= R\setminus (B\cup O \cup N).
  \end{xalignat*}
  Notice that $X$, $B$, $O$ and~$N$ are disjoint.
  Moreover, since all players but Player~$i$ follow the same
  blind strategy in~$\sigma$ and in~$\sigma'$, we~have
  \begin{xalignat*}2
    \load_{\sigma}(\vertex,\dat) &= \load'_{\sigma}(\vertex,\dat)
      && \text{for all $(\vertex,\dat) \in B \cup X$} \\
    \load_{\sigma}(\vertex,\dat) &= \load'_{\sigma}(\vertex,\dat) + 1
      && \text{for all $(\vertex,\dat) \in O$} \\
    \load_{\sigma}(\vertex,\dat) &= \load'_{\sigma}(\vertex,\dat) -1
      && \text{for all $(\vertex,\dat) \in N$.} 
  \end{xalignat*}

  We can then write:
  \begin{xalignat*}1
    \Fcost_i(\InitOutcome{\sigma'}{\start})-
      \Fcost_i(\InitOutcome{\sigma}{\start}) &=
    \sum_{(\vertex,\dat)\in V'} \weight(\vertex)(\load_{\sigma'}(\vertex,\dat)) -
      \sum_{(\vertex,\dat)\in V} \weight(\vertex)(\load_{\sigma}(\vertex,\dat)) \\
   &= \sum_{(\vertex,\dat)\in N} \weight(\vertex)(\load_{\sigma'}(\vertex,\dat)) -
      \sum_{(\vertex,\dat)\in O} \weight(\vertex)(\load_{\sigma}(\vertex,\dat)).
  \end{xalignat*}
  Indeed, the terms $\sum_{(\vertex,\dat)\in B}
  \weight(\vertex)(\load_{\sigma}(\vertex,\dat))$
  and
  $\sum_{(\vertex,\dat)\in B}
  \weight(\vertex)(\load_{\sigma'}(\vertex,\dat))$ cancel out
  by the first equality above.

  Similarly, 
  \begin{xalignat*}1
    \Psi(\sigma')-\Psi(\sigma) &=
      \Psi_X(\sigma')+\Psi_B(\sigma')+\Psi_O(\sigma')+\Psi_N(\sigma') - 
      (\Psi_X(\sigma)+\Psi_B(\sigma)+\Psi_O(\sigma)+\Psi_N(\sigma)) \\
      &=  \sum_{(\vertex,\dat)\in N} \weight(\vertex)(\load_{\sigma'}(\vertex,\dat)) -
      \sum_{(\vertex,\dat)\in O} \weight(\vertex)(\load_{\sigma}(\vertex,\dat)).
  \end{xalignat*}
  The equality proving that $\Psi$ is a potential function follows.
\end{proof}

We can now prove Theorem~\ref{thm:existenceNE}:
\begin{proof}%
  Let $\timedNetGame$ be a timed network game and let $\FCGame$ be its
  finite associated concurrent game.  Take a winning blind strategy
  profile~$\sigma\in\Sigma^{\WB}$ (which must exist thanks to our
  hypotheses). Then $\Psi(\sigma)$ is finite. Then, as long as is
  possible, replace the winning blind strategy of some player~$i$ with
  one that achieves a (strictly) better cost for that player. From
  Proposition~\ref{prop:potentialGame}, the~value of~$\Psi$ (strictly)
  decreases at each step. Since~$\Psi$ takes non-negative integer
  values, this process must terminate. Upon convergence, the strategy
  profile we~obtain is a winning blind strategy profile in which no
  player can improve their cost by a unilateral deviation, hence it
  is a $\Sigma^{\WB}$-Nash equilibrium. By~Proposition~\ref{prop:ENrestToEN}, it~is
  a Nash equilibrium in~$\FCGame$, and by Theorem~\ref{thm:equivNE},
  it is also a Nash equilibrium in~$\CGame$.
\end{proof}

\subsection{Proofs of Section~\ref{sec-computeNE}}

\thmcritNE*
\begin{proof}
  We begin with the first direction, showing that the
  outcome of a Nash equilibrium satisfies Property~\eqref{eq-NE}.
  
  Let $\sigma$  be a Nash equilibrium in $\FCGame$,
  and $\rho = \InitOutcome{\sigma}{\start}$.
  Let $1\leq i\leq\nbp$ and $k \in \N$
  such that $i \not\in \Visit(\rho_{<k})$,
  and $b_i \in \FAllowed_i(s_k)$.
  Let $s' = \Update(s_k, (\vec a_{k,-i}b_i))$ be the vertex reached
  after the deviation of Player~$i$ in $s_k$.

  We~define a set~$D_i\subseteq \Sigma_i(\start)$ of all strategies for
  Player~$i$ that follow~$\sigma_i$ along the first $k-1$ steps of~$\rho$,
  and play~$b_i$ at~$s_k$. When~$k=0$, $D_i$ contains all strategies that
  play~$b_i$ from~$\start$.

  Let $h = \rho_{<k}$ be the prefix of $\rho$ until $s_k$,
  and $h'= h\cdot (s_k,(\vec a_{k,-i},b_i),s')$.
  Since $\sigma$ is a Nash equilibrium from~$\start$ in $\FCGame$,
  we have that for all $\tau_i \in D_i$:
  \[
  \Fcost_i(h\cdot \InitOutcome{\sigma_{\restriction h}}{s_k}) =
  \Fcost_i(\InitOutcome{\sigma}{\start}) \leq
  \Fcost_i(\InitOutcome{\sigma_{-i}}{\start},\tau_i ) =
\Fcost_i(h\cdot\InitOutcome{\sigma_{-i \restriction h},\tau_{i \restriction h}}{\start})
  \]
  where $\sigma_{\restriction h}$ is the residual strategy of~$\sigma$ after
  history~$h$.
  Since $i \not\in \Visit(h)$, we~get $\cost_i(\InitOutcome{\sigma_{\restriction h}}{s_k}) \leq \cost_i(\InitOutcome{\sigma_{-i \restriction h},\tau_{i \restriction h}}{\start})$.
  It~follows that,  for all $\tau_i \in D_i$:
  \begin{xalignat*}1 
    \Fcost_i(\InitOutcome{\sigma_{\restriction h}}{s_k}) 
    &\leq \Fcost_i(\InitOutcome{\sigma_{-i \restriction h},\tau_{i \restriction h}}{s_k})\\
    &= \Fcost_i((s_k,(\vec a_{k,-i},b_i),s')\cdot \InitOutcome{\sigma_{-i \restriction h'},\tau_{i \restriction h'}}{s'}) \\
    &= \Fcost_i(s_k,(\vec a_{k,-i},b_i),s') + \Fcost_i(\InitOutcome{\sigma_{-i \restriction h'},\tau_{i \restriction h'}}{s'}). %
  \end{xalignat*}
Since this must hold for any strategy~$\tau_i$ in~$D_i$, we~get:
\begin{xalignat}1
  \Fcost_i(\InitOutcome{\sigma_{\restriction h}}{s_k}) 
  & \leq \Fcost_i(s_k,(\vec a_{k,-i},b_i),s') +
     \inf_{\tau_i \in D_i} \Fcost_i(\InitOutcome{\sigma_{-i \restriction h'}, \tau_{i \restriction h'}}{s'}) \notag\\%\label{eq:critEN2}  \\
  & = \Fcost_i(s_k,(\vec a_{k,-i},b_i),s') +
     \inf_{\tau_i \in \Sigma_i} \Fcost_i(\InitOutcome{\sigma_{-i \restriction h'}, \tau_{i \restriction h'}}{s'}) \label{eq:critEN3}  \\
     & \leq \Fcost_i(s_k,(\vec a_{k,-i},b_i),s') +
     \sup_{\nu_{-i}\in \Sigma_{-i}} \inf_{\tau_i \in \Sigma_i} \Fcost_i(\InitOutcome{\tau_{-i \restriction h'},\nu_{i \restriction h'}}{s'}) \notag\\
     & = \Fcost_i(s_k,(\vec a_{k,-i},b_i),s') +\LowVal_i(s').\notag
\end{xalignat}
Equality~\eqref{eq:critEN3} holds because only the residual part of~$\tau_i$ after~$h'$ is used, and there are no constraints on that part in~$D_i$.
This concludes the proof of this implication.

\medskip

We now turn to the converse implication.  Take $\rho=(s_k,\vec a_k,s'_k)_{k\in\bbN} \in
\PlaysFG(\start)$ satisfying Property~\eqref{eq-NE}, We~build a
strategy profile~$\sigma$ from~$\start$ in $\FCGame$ that is a Nash
equilibrium and such that $\InitOutcome{\sigma}{\start} = \rho$.

The main idea is that all players follow~$\rho$, until one player
deviates. If we assume that Player~$i$ is the first player who
deviates, then after this deviation the strategy of the coalition~$-i$
of the other players will follow the punishing strategy
in the corresponding two-player zero-sum game.

For all $1\leq i\leq\nbp$, for all $j \neq i$, and for all state~$s$
in~$\FCGame$, we denote by $\sigma^s_{j,i}$ the strategy of Player~$j$
obtained from the optimal strategy of coalition $-i$ from~$s$ in the
two-player zero-sum game in which Player~$i$ aims at minimizing their
cost. In~such two-player zero-sum games with weights in~$\bbN$, such
optimal strategies exist,
and 
for any state~$s$, we~have
\begin{equation}
  \inf_{\mu_i \in \Sigma_i(s)} \Fcost_i(\InitOutcome{(\sigma^s_{j,i})_{j \neq i},\mu_i}{s}) \geq \LowVal_i(s). \label{eq:valStratOpti}
\end{equation}

First, we define $P_\rho\colon \HistFG(\start) \to (\HistFG(\start)\times \Players) \cup \{ \bot \}$
to keep track of the first deviation (w.r.t.~$\rho$) along a history~$h$.
Formally, for $h=(u_k,\vec b_k,u'_k)_{0\leq k< \ell}$,
we~let
\[
P_\rho (h)  =
\begin{cases}
  P_{\rho}(h_{<\ell-1}) & \text{ if $h_{<\ell-1}$ is not a prefix of~$\rho$} \\
  (h,i) & \text { if  $h_{<\ell-1}$ is a prefix of $\rho$ but $h$
    is not, and} \\
  & \qquad \text{ $i$ is the least index for which $b_{\ell,i}\not=a_{\ell,i}$} \\
  \bot & \text{otherwise} %
\end{cases}
\]
Hence $P_{\rho}(h)=(h',i)$ indicates that there has been a deviation w.r.t.~$\rho$ along~$h$, and that the first deviation occurred after prefix~$h'$, and that Player~$i$ is one of the players who deviated at that point. 
If there has been no deviation (hence $h$ is a prefix of~$\rho$), then $P_{\rho}(h)=\bot$.

We can now define~$\sigma$:
for all $h\in \HistFG(\start)$ of length~$\ell$
and for all $i \in \Players$, we~let
\[
\sigma_i(h) =
\begin{cases}
  a_{\ell,i} & \text{ if $h$ is a prefix of~$\rho$} \\
  \text{any allowed move} & \text{ if $P_{\rho}(h) =(h',i)$ for some~$h'$} \\
  \sigma^s_{i,j}(h\setminus h') & \text{ if $P_{\rho}(h) = (h',j)$ for some $h'$ and some $j \neq i$, } \\
  & \text{ with $s=\last(h')$ and $h\setminus h'$ is the suffix of~$h$ after~$h'$}
\end{cases}
\]

Let us prove that $\sigma$ is a Nash equilibrium in~$\FCGame$.
Let $1\leq i\leq \nbp$, let $\tau_i \in \Sigma_i(\start)$.
Let $\rho= (s_k,\vec a_k,s'_k)_{k\in\bbN}=\InitOutcome{\sigma}{\start}$ and
$\rho'= (u_k,\vec b_k,u'_k)_{k\in\bbN} = \InitOutcome{\sigma_{-i},\tau_i}{\start}$.
We~have to prove that 
\(
\Fcost_i(\rho) \leq
\Fcost_i(\rho')
\).

If both outcomes are identical, the result is trivial.
If~not, let $k \in \N$ be the largest index such that
$\rho_{< k} = \rho'_{<k}$: this means that the first deviation
of Player~$i$ occurs in state~$s_k$ along~$\rho$.

If $i \in \Visit(\rho_{< k})$, then $\Fcost_i(\rho) = \Fcost_i(\rho')$,
and we are done. Otherwise:
\bgroup
\makeatletter
\def\tagform@#1{\maketag@@@{\ignorespaces#1\unskip\@@italiccorr}}
\begin{xalignat*}1
  \Fcost_i(\rho') & =
  \Fcost_i(h'\cdot \InitOutcome{\sigma_{-i \restriction h'}, \tau_{i\restriction h'}}{\rho'_{k+1}}) \tag{where  $h' = \rho'_{<k+1}$} \\
  & = \Fcost_i(h' \cdot \InitOutcome{(\sigma^{u_{k+1}}_{j,i \restriction h'})_{j\not=i}, \tau_{i\restriction h'}}{u_{k+1}}) \tag{by definition of $\sigma$} \\
  &= \Fcost_i(\rho_{<k}) + \Fcost_i(u_k, \vec b_k,u'_{k}) + \Fcost_i(\InitOutcome{ (\sigma^{u_{k+1}}_{j,i \restriction h'})_{j \neq i},\tau_{i\restriction h'}}{u_{k+1}}) \\
  \noalign{\hfill because $i \not \in \Visit(\rho_{<k})$}
  & \geq \Fcost_i(\rho_{< k}) + \Fcost_i(u_k, \vec b_k,u'_{k}) +
      \inf_{ \nu_i \in \Sigma_i(u_{k+1})} \Fcost_i(\InitOutcome{(\sigma^{u_{k+1}}_{j,i})_{j \neq i},\nu_i}{u_{k+1}})\\
  & \geq \Fcost_i(\rho_{< k}) + \Fcost_i(u_k, \vec b_k,u'_{k}))  +
      \LowVal_i(u_{k+1}) 
      \tag{by \eqref{eq:valStratOpti}} \\
  & \geq \Fcost_i(\rho_{<k}) + \Fcost_i(\rho_{\geq k}) \tag{by~\eqref{eq-NE}} \\
 &= \Fcost_i(\InitOutcome{\sigma}{\start}) \tag{\qed}
\end{xalignat*}
\egroup
\let\qed\relax
\end{proof}

\begin{remark}
Our definition of plays and histories include the action vectors at each
step, which implies that strategies observe the actions of all players
and can base their decisions on those informations. In~particular,
they can easily detect changes in strategies. This is used in the
definition of $P_\rho$, and thus of~$\sigma_i$, in~the proof of
Theorem~\ref{thm:critNE}.

We could easily adapt our results to handle the case where strategies
are not allowed to depend on action vectors. Indeed, in our setting of
network congestion games, the~relevant single-player deviations can be
detected by observing only the sequence of configurations. With
\emph{relevant}, we~mean those deviations that modify the outcome
(other deviations would never be profitable). 

Indeed, assume that a single player deviates from their strategy from
some configuration~$(\config,\dat)$, so that the play goes
to~$(\config_2,\dat_2)$ instead of~$(\config_1,\dat_1)$.  In~case
$\dat_2<\dat_1$, then $\dat_2$ must be the date proposed by the
deviating player, and they will be the only player applying their
action; they are thus easily identified.  In~case $\dat_2>\dat_1$,
it~must be the case that $\dat_1$ was proposed by the deviating player
alone, and they were the only player applying their action
between~$\config$ and~$\config_1$.  Finally, in~case $\dat_2=\dat_1$,
by~hypothesis $\config_1\not=\config_2$, so if only one player, say
Player~$i$, deviated, then $i$~will be the only index for which
$\config_1(i)\not=\config_2(i)$, which again permits to identify the
deviating player.

In~the end, in case strategies are not allowed to depend on move
vectors, the proof of Theorem~\ref{thm:critNE} can be adapted, and our
result still holds. 
\end{remark}

\begin{restatable}{proposition}{propcomputeVal}
  For all $1\leq i\leq\nbp$ and for all $s \in \FStates$,
  the value $\LowVal_i(s)$ can be computed in exponential time.
\end{restatable}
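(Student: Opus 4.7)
The plan is to reduce the computation of $\LowVal_i(s)$ to a value problem in a \emph{two-player} finite-horizon concurrent game of exponential size, on which standard algorithms apply. Since $\Fcost_i$ only depends on Player~$i$'s successive positions and on the loads of the visited vertices, the coalition $-i$ can be described succinctly by tracking only how many of the other $\nbp-1$ players sit at each vertex, not their individual identities. I~therefore define an abstracted game $\FCGame^i$ whose states are triples $(v, d, L)$ where $v \in \Vertex$ is Player~$i$'s position, $d \in [0, \MaxTime+1]$ is the current time, and $L \colon \Vertex \to \{0, \ldots, \nbp-1\}$ records the number of coalition members at each vertex (so $\sum_v L(v) = \nbp - 1$). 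Player~$i$'s actions are as in~$\FCGame$, while the coalition plays a vector of one action per coalition member; the update rule is inherited from~$\FCGame$ after quotienting by permutations of the coalition players, and the step-cost incurred by Player~$i$ is $(d'-d)\cdot \weight(v)(L(v)+1)$ until $v=\tgt_i$, and~$0$ afterwards.

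The first key step is to show that $\LowVal_i(s)$ as defined on~$\FCGame$ coincides with the value of the corresponding state in $\FCGame^i$. The argument is a routine bisimulation: any coalition strategy in $\FCGame$ projects to a coalition strategy in $\FCGame^i$ producing the same $\Fcost_i$ against every response of Player~$i$, and conversely any coalition strategy in $\FCGame^i$ can be lifted to $\FCGame$ by fixing an arbitrary representative for each abstract multiset. Both directions preserve $\Fcost_i$ because the cost reads only loads. Next, I~would bound the size of $\FCGame^i$: the number of states is at most $|\Vertex|\cdot(\MaxTime+2)\cdot\binom{\nbp+|\Vertex|-2}{|\Vertex|-1}$, which is exponential in the input (recall that $\nbp$ is given in binary and that $\MaxTime=\MaxCost\cdot(\MaxInt+|\Vertex|)$ is pseudo-polynomial); similarly, the number of coalition actions available from each state is exponential, but not worse.

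Finally, since $\FCGame^i$ is finite-horizon (all plays reach an absorbing configuration at time $\MaxTime+1$, from which no further cost is accrued), its value can be computed by backward induction on the time coordinate: at each state, the one-shot value is that of a matrix game whose entries combine the immediate cost with the previously computed values of the successor states, and can be obtained in polynomial time by linear programming. With exponentially many states and matrix games of exponential size, the total running time is exponential. The~step I~expect to require most care is the bisimulation argument: one must verify that restricting the coalition to strategies symmetric in the $\nbp-1$ other players is without loss for maximizing $\Fcost_i$, which does hold precisely because Player~$i$'s cost depends only on vertex loads, and that this symmetrization is compatible with the concurrent semantics of the update function (in~particular with the $\Select$ mechanism that breaks ties by taking the minimal proposed delay).
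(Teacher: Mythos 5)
Your construction is essentially the paper's: the proof there builds exactly the same two-player zero-sum concurrent game, with states $\Vertex \times [0,\nbp]^{\Vertex} \times [0,\MaxTime+1]$ recording Player~$i$'s vertex, the coalition's load profile, and the date, and then computes the value of this exponential-size game in time polynomial in its size (citing~\cite{LMO06} where you spell out a backward induction on the time coordinate). One caveat on your final step: $\LowVal_i$ is a $\sup$-$\inf$ over \emph{pure} strategies, so each one-shot game in the backward induction should be solved by direct enumeration of $\max_{\vec b_{-i}} \min_{a_i}$ over the (exponentially many) action profiles, not by linear programming, which computes the mixed-strategy value of the matrix game --- in a concurrent game this can differ from the pure lower value that the Nash-equilibrium characterization actually needs. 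This substitution is harmless for the complexity bound, since enumeration over exponentially many entries still runs in exponential time.
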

\begin{proof}
  From the finite concurrent game $\FCGame$, we build a two-player
  zero-sum concurrent game~$\calZ$ in which the two players are the
  player~$i$, which has the same possible actions as in~$\FCGame$, and
  the coalition of other players~$-i$, which chooses the actions of
  all the other players. The states of~$\calZ$ are $\Vertex \times [0,\nbp]^{\Vertex}\times [0, \MaxTime +1]$: such a state
  keeps track of  the current position of Player~$i$, an abstract representation of the current position of all
  players of the coalition~$-i$ (\emph{i.e.},~the~number of players of
  the coalition in each vertex) and the current date. The~objective of Player~$i$ is to
  reach~$\tgt_i$ (starting from~$\src_i$) while minimizing their cost,
  and the goal of coalition player~$-i$ is opposite, \emph{i.e.}, to
  maximize the cost of Player~$i$.

  Computing~$\LowVal_i$ in such a game can be performed in 
  time polynomial in the size of the game~\cite{LMO06},
  hence in time exponential
  in~$|\Vertex|$ and pseudo-polynomial in~$\MaxTime$.
\end{proof}

\section{Proofs of Section~\ref{section:socOptiPriceAnarch}}
\label{ann-section4}

\propconstSWEXPSPACEasym*

\begin{proof}
Let $\timedNetGame$ be a timed network game and let $x \in \N$ be a threshold. We want to decide if there exists a play~$\rho$ in~$\timedNetGame$ beginning in $\start$ such that $\SW(\rho) \leq x$.

In view of Lemma~\ref{lem:boundedTimeSW}, if there exists such a play,
there exists a history $h$ such that for all timed configurations
$(\config,\dat)$ along~$h$, $\dat \leq x$ and
$\Players= \Visit(h)$.

During the procedure we use a counter~$S$ that stores the current sum
of the players' costs. We also need to keep in memory the set of
players who have already visited their target vertex; we~write~$I$ for
this set, and $s = (\config,\dat)$ for the current timed
configuration.

The algorithm begins with $S = 0$, $I = \{1\leq i\leq \nbp \mid \src_i=\tgt_i\}$, and $s = (\init,0)$.
Then, step-by-step:
\begin{enumerate}
\item the algorithm guesses the next timed configuration $s' = (\config,\dat)$ with $\dat \leq x$ (if such a sucessor exists);
\item the counter~$S$ is updated for all players $i \not \in I$;
\item $I$ is augmented with $\{1\leq  i\leq\nbp \mid \config(i) = \tgt_i \}$;
\item the algorithm forgets~$s$, and continues from~$s'$,
  unless $I=\Players$.
\end{enumerate}

This procedure may stop for two reasons:
\emph{(i)}~if~at some point $I = \Players$: then we check if $S \leq
x$. If it is the case, then we have found a history~$h$ such that
$\SW(h) \leq x$. This history can be extended to a play~$\rho$ with
$\SW(\rho)\leq x$;
\emph{(ii)}~if~from some timed configuration~$s=(\config,\dat)$, no
successor $(\config',\dat')$ with $\dat' \leq x$ exists. That means
that we have failed to find a history~$h$ such that all players visit
their target vertex within time~$x$, and in particular due to
Lemma~\ref{lem:boundedTimeSW}, there does not exist a play~$\rho$ such
that $\SW(\rho) \leq x$.

The counter~$S$ is incremented by 
at most $\nbp \cdot (x \cdot \max_{\vertex \in \Vertex} \weight(\vertex)(\nbp))$
and there are at most $x$ steps.
Thus the value of~$S$ can always be encoded in polynomial space.
On the other hand, 
the algorithm also has to store two consecutive timed configurations and
the action vector between them; if we assume that the objectives of the players
are given as a function $\Vertex^2\to \bbN$ (with binary encoding),
this takes exponential space. Similarly,
keeping track of which players have visited their objectives
takes exponential space.
\end{proof}

\begin{remark}
Notice that if the objectives are given explicitly as a list of
$(\src,\tgt)$-pairs, one for each player, then the input would be
exponentially larger, and our algorithm would then be in polynomial
space in the size of the input.
\end{remark}

\medskip

\propconstrainedSWsym*

The proof of this result relies on the following notion of
\emph{abstract weighted graph}, which stores the number of active and
winnign players in each vertex:
\begin{definition}[Abstract weighted graph]
The \emph{abstract weighted graph} $\mathcal{W} = (A,B, \tilde{w})$ of a symmetric TNG~$\timedNetGame$ is defined as follows:
\begin{itemize}
\item the set of vertices is
  $A = [0,\nbp]^{\Vertex} \times [0,\nbp]^{\Vertex} \times\bbN$.
  The~first part $[0,\nbp]^{\Vertex}$ gives the number of \emph{active players}
  in each vertex (\IE, those players that have not visited the target vertex~yet); the second component gives the number of \emph{winning players} in each vertex (those who have already visited the target vertex); the last integer is the current time;
\item the set~$B\subseteq A \times A$ is the set of edges of the graph:
  given two vertices
  $a_1=(P^1_A\colon\Vertex\to[0,\nbp], P^1_W\colon \Vertex \to [0,\nbp],\dat_1)$
  and
  $a_2=(P^2_A\colon\Vertex\to[0,\nbp], P^2_W\colon \Vertex \to [0,\nbp],\dat_2)$,
  there is an edge~$(a_1,a_2)$ in~$B$ whenever
  \begin{itemize}
  \item $\dat_1<\dat_2$;
  \item there exist functions $b_A\colon \Edge\to [0,\nbp]$ and $b_W\colon
    \Edge\to [0,\nbp]$, representing the number of active and winning
    players taking each edge of~$\timedNetGame$, such that
    \begin{itemize}
    \item $1\leq \sum_{\edge\in \Edge} (b_A(\edge)+b_W(\edge)) \leq \nbp$
      (only the player(s) proposing the shortest delay will move);
    \item for all~$\edge\in\Edge$, if $b_A(\edge)+b_W(\edge)>0$, then
      $\dat_2\models \guard(\edge)$;
    \item for all~$\vertex\in\Vertex$,
      $\sum_{\edge=(\vertex,\vertex')} b_A(\edge) \leq P^1_A(\vertex)$ and 
      $\sum_{\edge=(\vertex,\vertex')} b_W(\edge) \leq P^1_W(\vertex)$;
    \item $P_A^2(\tgt)=0$ and for all~$\vertex\in\Vertex\setminus\{\tgt\}$,
      $P_A^2(\vertex) = P_A^1(\vertex) - \sum_{\edge=(\vertex,\vertex')} b_A(\edge)
      + \sum_{\edge=(\vertex',\vertex)} b_A(\edge)$;
    \item similarly, for~$P_W^2$:
      $P_W^2(\tgt) = P_W^1(\tgt) - \sum_{\edge=(\tgt,\vertex')} b_W(\edge)
      + \sum_{\edge=(\vertex',\tgt)} (b_W(\edge) + b_A(\edge))$, and for all
      $\vertex\in\Vertex\setminus\{\tgt\}$, 
      $P_W^2(\vertex) = P_W^1(\vertex) - \sum_{\edge=(\vertex,\vertex')} b_W(\edge)
      + \sum_{\edge=(\vertex',\vertex)} b_W(\edge)$.
    \end{itemize}
  \end{itemize}
\item the weight function~$\tilde w$ is defined for each edge~$(a_1,a_2)\in B$
  as
  \[
  \tilde w(a_1,a_2) = \sum_{\vertex\in\Vertex \mid P_A^1(\vertex)\not=0}
    \weight(\vertex)(P_A^1(\vertex)+P_W^1(\vertex))\cdot (\dat_2-\dat_1).
  \]
  \end{itemize}
\end{definition}

\begin{proof}[of Proposition~\ref{prop:constrainedSWsym}]
  The non-deterministic algorithm guesses the successive vertices of a
  path~$p$ in the abstract graph~$\calW$ step-by-step. The constraint
  on the social welfare also gives a bound on the length of the path
  to be guessed, so that this algorithm runs in polynomial space.
\end{proof}

\end{document}